\newif\ifisarxiv
\newtheorem{theorem}{Theorem}
\newtheorem{proposition}[theorem]{Proposition}
\newtheorem{definition}[theorem]{Definition}
\newtheorem{remark}[theorem]{Remark}
\newtheorem{corollary}[theorem]{Corollary}
\newtheorem{lemma}[theorem]{Lemma}
\def\nnz{{\mathrm{nnz}}}
\def\binom{{\mathrm{Binomial}}}
\def\less{{LESS}}
\def\epsinv{{\epsilon}}
\def\epssub{{\eta}}
\def\Ec{\mathcal E}
\newif\ifDRAFT
\newcommand{\marrow}{\marginpar[\hfill$\longrightarrow$]{$\longleftarrow$}}
\newcommand{\niceremark}[3]
   {\textcolor{red}{\textsc{#1 #2:} \marrow\textsf{#3}}}
\newcommand{\ken}[2][says]{\niceremark{Ken}{#1}{#2}}
\newcommand{\michael}[2][says]{\niceremark{Michael}{#1}{#2}}
\newcommand{\michal}[2][says]{\niceremark{Michal}{#1}{#2}}
\newcommand{\feynman}[2][says]{\niceremark{Feynman}{#1}{#2}}
\newcommand{\ken}[1]{}
\newcommand{\michael}[1]{}
\newcommand{\michal}[1]{}
\newcommand{\feynman}[1]{}
\def\ee{\mathrm{e}}
\def\xib{{\boldsymbol\xi}}
\def\Hbh{\widehat{\H}}
\def\S{\mathbf{S}}
\def\T{\mathbf{T}}
\def\s {\mathbf{s}}
\def\g{{\mathbf{g}}}
\def\L{\mathbf{L}}
\def\H{\mathbf H}
\def\Q{\mathbf Q}
\newcommand{\BlackBox}{\rule{1.5ex}{1.5ex}}  
\DeclareMathOperator*{\argmin}{\mathop{\mathrm{argmin}}}
\DeclareMathOperator*{\diag}{\mathop{\mathrm{diag}}}
\def\x{\mathbf x}
\def\y{\mathbf y}
\def\z{\mathbf z}
\def\a{\mathbf a}
\def\b{\mathbf b}
\def\v{\mathbf v}
\def\e{\mathbf e}
\def\one{\mathbf 1}
\def\u{\mathbf u}
\def\X{\mathbf X}
\def\B{\mathbf B}
\def\A{\mathbf A}
\def\C{\mathbf C}
\def\U{\mathbf U}
\def\F{\mathbf F}
\def\D{\mathbf D}
\def\mG{\mathbf \Gamma}
\def\O{\mathbf O}
\def\V{\mathbf V}
\def\bbr{\mathbf r}
\def\bbR{\mathbf R}
\def\M{\mathbf M}
\def\Z{\mathbf Z}
\def\I{\mathbf I}
\def\A{\mathbf A}
\def\P{\mathbf P}
\def\J{\mathbf J}
\def\bR{\mathbf R}
\def\bS{\mathbf{\Sigma}}
\def\bhS{\mathbf{\hat\Sigma}}
\def\bLambda{\mathbf{\Lambda}}
\def\Ot{\widetilde{O}}
\def\E{\mathbb E}
\def\R{\mathbb R}
\def\Pr{\mathrm{Pr}}
\def\tr{\mathrm{tr}}
\def\Var{\mathrm{Var}}
\let\origtop\top
\renewcommand\top{{\scriptscriptstyle{\origtop}}} 
\definecolor{silver}{cmyk}{0,0,0,0.3}
\definecolor{yellow}{cmyk}{0,0,0.9,0.0}
\definecolor{reddishyellow}{cmyk}{0,0.22,1.0,0.0}
\definecolor{black}{cmyk}{0,0,0.0,1.0}
\definecolor{darkYellow}{cmyk}{0.2,0.4,1.0,0}
\definecolor{orange}{cmyk}{0.0,0.7,0.9,0}
\definecolor{darkSilver}{cmyk}{0,0,0,0.1}
\definecolor{grey}{cmyk}{0,0,0,0.5}
\definecolor{darkgreen}{cmyk}{0.6,0,0.8,0}
\newenvironment{proof}{\par\noindent{\bf Proof\ }}{\hfill\BlackBox\\[2mm]}
\newtheorem{theorem}{Theorem}
\newtheorem{example}{Example}
\newtheorem{condition}{Condition}
\newtheorem{lemma}{Lemma}
\newtheorem{proposition}{Proposition}
\newtheorem{remark}{Remark}
\newtheorem{corollary}{Corollary}
\newtheorem{definition}{Definition}
\newtheorem{condition}{Condition}
\title{Sparse sketches with small inversion bias}
\author{
  Micha{\l} Derezi\'nski\thanks{Department of Statistics, University
    of California, Berkeley (\texttt{mderezin@berkeley.edu})}
  \and
  Zhenyu Liao\thanks{ICSI and Department of Statistics,  University
    of California, Berkeley (\texttt{zhenyu.liao@berkeley.edu})}
  \and
  Edgar Dobriban\thanks{Department of Statistics and Data Science,
    University of Pennsylvania (\texttt{dobriban@wharton.upenn.edu})}
  \and
  Michael W. Mahoney\thanks{ICSI and Department of Statistics,
  University of California, Berkeley (\texttt{mmahoney@stat.berkeley.edu})}
}
\begin{document}

\maketitle

\begin{abstract}%
For a tall $n\times d$ matrix $\A$ and a random $m\times n$
sketching matrix $\S$, the sketched estimate of the inverse covariance
matrix $(\A^\top\A)^{-1}$ is typically biased:
$\E[(\tilde\A^\top\tilde\A)^{-1}]\neq (\A^\top\A)^{-1}$, where
$\tilde\A=\S\A$. 
This phenomenon, which we call \emph{inversion bias}, arises, e.g., in statistics and distributed optimization, when averaging multiple independently constructed estimates of quantities that depend on the inverse covariance matrix. 
We develop a framework for analyzing inversion bias, based on our proposed
concept of an $(\epsilon,\delta)$-unbiased estimator for random
matrices. We show that when the sketching matrix $\S$ is dense and has i.i.d.\
sub-gaussian entries, then after simple rescaling, the estimator $(\frac
m{m-d}\tilde\A^\top\tilde\A)^{-1}$ is $(\epsilon,\delta)$-unbiased
for $(\A^\top\A)^{-1}$ with a sketch of size $m=O(d+\sqrt
d/\epsilon)$. In particular, this implies 
that for $m=O(d)$, the inversion bias of this estimator is $O(1/\sqrt
d)$, which is much smaller than the $\Theta(1)$ approximation error obtained
as a consequence of the subspace embedding guarantee for
sub-gaussian sketches. We then propose a new 
sketching technique, called LEverage Score Sparsified (\less)
embeddings, which uses ideas from both data-oblivious sparse
embeddings as well as data-aware leverage-based row sampling methods, to get $\epsilon$
inversion bias for sketch size $m=O(d\log d+\sqrt d/\epsilon)$ in time
$O(\nnz(\A)\log n + md^2)$, where $\nnz$ is the number of non-zeros. 
The key techniques enabling our analysis include an extension of a
classical inequality of Bai and Silverstein for random quadratic
forms, which we call the \emph{Restricted Bai-Silverstein~inequality};
and anti-concentration of the Binomial distribution via the
Paley-Zygmund inequality, which we use to prove a lower bound showing
that leverage score sampling sketches generally do not achieve small inversion
bias. 
\end{abstract}
\ifisarxiv\else
\begin{keywords}%
randomized linear algebra, sketching, random matrix theory, distributed optimization
\end{keywords}
\fi
\section{Introduction}

Sketching has been widely used in the design of scalable algorithms,
perhaps most prominently in Randomized Numerical Linear Algebra
(RandNLA) due to applications in machine learning and data analysis.
In this approach, one randomly samples or computes a
random projection of the data matrix to construct a smaller matrix,
the sketch. One then uses the sketch as a surrogate to approximate
quantities of interest.  The analysis of these methods typically
proceeds via a Johnson-Lindenstrauss-type argument to establish that
the geometry of the matrix is not perturbed too much under the sketching
operation. These methods have yielded state-of-the-art in
worst-case analysis, high-quality numerical implementations, and
numerous applications in machine learning%
~\cite{Mah-mat-rev_JRNL,tropp2011structure,woodruff2014sketching,DM16_CACM,RandNLA_PCMIchapter_chapter}. 

In many cases, either to preserve the structure of the data or for algorithmic
reasons, one is interested in sparse sketches, i.e., random
transformations that are represented by matrices with most entries
exactly equal to zero. One class of sparse sketches includes row sampling
techniques, such as leverage score
sampling~\cite{Drineas08CUR,fast-leverage-scores,MMY15}, which are
typically \emph{data-aware}, in the sense that the sampling distribution depends on 
the given data matrix. Another important class of
methods uses \emph{data-oblivious} sparse embeddings, such as
the CountSketch~\cite{charikar2002finding,cw-sparse,mm-sparse,nn-sparse}, to construct 
sketches in time depending on the number of non-zeros (nnz) in the
input. 

In all these cases, one can show that the sketch will be an
approximation of the solution with high probability. However,
comparatively little is known about the average performance of these
sketches. 
In particular, there may be a systematic bias away from the solution,
which is problematic in many situations in statistics, machine
learning, and data analysis. Perhaps the most ubiquitous example of
this phenomenon is the systematic bias caused by matrix inversion, a
key component of algorithms in the aforementioned domains.
In this paper, we introduce the fundamental notion of inversion bias, which
provides a finer control over the sketched estimates involving
matrix inversion. We show that one can conveniently make the inversion bias
small with dense Gaussian and sub-gaussian sketches. 
We also show that some sparse sketches do not have this desired property.
Then, we provide a non-trivial new construction and algorithm,
using ideas from both data-oblivious projections and data-aware
sampling, to get small inversion bias even for very sparse sketches.

\subsection{Overview}
\label{s:overview}

Consider an $n\times d$ data matrix $\A$ of rank $d$, where $n\geq d$. 
In many applications, we wish to approximate quantities of
the form $F((\A^\top\A)^{-1})$, where $(\A^\top\A)^{-1}$ is the
$d\times d$ inverse data covariance and $F(\cdot)$ is a linear
functional. Our goal is to provide a finer control over the effect of
matrix inversion on the quality of such estimates.
Here are some of the motivating examples: 
\begin{itemize}
  \item The vector $(\A^\top\A)^{-1}\b$ is
the solution of ordinary least squares (OLS) when $\b=\A^\top\y$ for a
vector $\y$, arguably the most widely used multivariate statistical method
\cite{anderson1958introduction,rao1973linear,friedman2009elements},
and it is also crucial for the Newton's method in numerical optimization
\cite{boyd2004convex,nocedal2006numerical}. In particular, accurate
approximations of this vector lead directly to improved convergence guarantees for
many optimization algorithms \cite{pilanci2016iterative,distributed-newton}.
\item The scalar $\x^\top(\A^\top\A)^{-1}\x$ for a vector $\x$, has
  numerous use-cases: When $\x=\a_i$ is one of the rows of $\A$, then it
  represents the statistical leverage scores
  \cite{drineas2006sampling}; If $\x=\e_i$ is a standard basis vector,
 then this is the squared length of the confidence interval for the
 $i$-th coefficient in OLS
 \cite{anderson1958introduction,friedman2009elements}.
\item The scalar $\tr\,\C(\A^\top\A)^{-1}$ for a matrix $\C$, is used to
quantify uncertainty in statistical results, e.g., via the mean
squared error (MSE) of estimating the regression coefficients in OLS
\cite{anderson1958introduction,friedman2009elements}, and to
formulate widely used criteria from experimental design, e.g.,
A-designs and V-designs \cite{pukelsheim2006optimal,cox2000theory}.
\end{itemize}
More generally, 
  our work is also motivated by the important problem of inverse
  covariance estimation in statistics, machine learning, finance,
  signal processing, and related areas
  \cite{dempster1972covariance,meinshausen2006high,friedman2008sparse,lam2009sparsistency,cai2011constrained,ledoit2012nonlinear,marjanovic2015l_}. 
  In this area, we wish to estimate statistically the
  inverse covariance matrix of a population, or some of its
  functionals, based on a 
  finite number of samples.
  Furthermore, inverting covariance matrices
  occurs in Bayesian statistics
  \cite{hartigan1969linear,gelman2013bayesian}, Gaussian processes
  \cite{rasmussen2003gaussian}, as well as time series analysis and control,
  e.g., via the Kalman filter
  \cite{welch1995introduction,brockwell2009time}. 

When $n$ and $d$ are large, and particularly
when $n\gg d$, then the costs of storing the matrix~$\A$ and of
computing $(\A^\top\A)^{-1}$ are prohibitively large.
Matrix sketching has proven successful at drastically reducing these
costs by approximating the inverse covariance with a sketched estimate
$(\tilde\A^\top\tilde\A)^{-1}$ based on a smaller matrix
$\tilde\A=\S\A$, where $\S$ is a random $m\times n$ matrix and $m\ll
n$ \cite{Mah-mat-rev_JRNL,tropp2011structure,woodruff2014sketching,DM16_CACM,RandNLA_PCMIchapter_chapter}. 
As a concrete algorithmic motivation for our work, consider the following popular strategy for
boosting the quality of such estimates:
Construct multiple copies in parallel, based on independent sketches, 
and then average the estimates. This strategy is especially useful in distributed architectures,
where storage and computing resources are 
spread out across many
machines, and has commonly appeared in the literature~\cite{konecny-federated16a,konecny-federated16b,distributed-newton,determinantal-averaging}. While promising in practice, this 
averaging technique is fundamentally limited by the \emph{inversion bias}: even
though the sketched covariance estimate is unbiased,
$\E[\tilde\A^\top\tilde\A]=\A^\top\A$, its inverse in general is not unbiased, i.e.,
$\E[(\tilde\A^\top\tilde\A)^{-1}] \neq (\A^\top\A)^{-1}$.
When the sketch size $m$ is not much larger than the dimension $d$, the 
size of this bias can be very significant, even as large as the
approximation error, in which case averaging becomes ineffective. Motivated by this, we ask:
\begin{quote}
  When is the inversion bias small, relative to the
  approximation error?
\end{quote}
In this paper, we develop a framework for analyzing the inversion bias
of sketching, via the notion of an $(\epsilon,\delta)$-unbiased
estimator (Definition \ref{d:unbiased-estimator}), and
we show how it can be used to provide improved 
approximation guarantees for averaging. Through this framework, we
provide several contributions towards addressing the above question.

\textbf{Sub-gaussian sketches have small inversion bias.}
Arguably the most classical family of sketches consists of dense random
matrices $\S$ with i.i.d.\ sub-gaussian entries. These sketches offer strong 
relative error approximation guarantees via the so-called \emph{subspace embedding}
property, at the expense of high computational cost 
of the matrix product $\tilde\A=\S\A$. We show that, upon a simple
correction, sub-gaussian sketches 
are nearly-unbiased, i.e., their inversion bias is much smaller than the approximation error, which
means that averaging can be used to significantly improve the
approximation quality.
In particular, we show that, after a simple scalar
rescaling, the inverse covariance 
estimator of the form $(\frac{m}{m-d}\tilde\A^\top\tilde\A)^{-1}$
achieves $\epsilon$ inversion bias relative to $(\A^\top\A)^{-1}$
with a sketch of size only $m=O(d+ \sqrt d/\epsilon)$
(Proposition~\ref{t:subgaussian}).
In contrast, to ensure that
$(\frac{m}{m-d}\tilde\A^\top\tilde\A)^{-1}$ is an
$\epssub$ relative error approximation of $(\A^\top\A)^{-1}$ via 
the subspace embedding property, we need a sub-gaussian sketch of
size $m=\Theta(d/\epssub^2)$, which is comparatively larger if we let $\eta
= \Theta(\epsilon)$. This implies that  
an aggregate estimator obtained via averaging can with high probability produce
a relative error approximation that is by a factor of $O(1/\sqrt m)$
better than the approximation error offered by any one of the
estimators being averaged.  

\textbf{LEverage Score Sparsified (\less) embeddings.}
We show that existing algorithmically efficient sketching techniques
may not provide guarantees for the inversion bias that match those satisfied
by dense sub-gaussian sketches (see Theorem \ref{t:lower} for a lower bound
on leverage score sampling, and a discussion of other methods in
Appendix \ref{s:lower-bai-silverstein}). To address this, we propose a
new family of sketching methods, called LEverage 
Score Sparsified (\less) embeddings, which combines a data-oblivious sparsification
strategy reminiscent of the CountSketch with the data-aware
approach of approximate leverage score sampling. \less\ embeddings
have time complexity $O(\nnz(\A)\log n + md^2)$ and achieve
$\epsilon$ inversion bias with the sketch of size  $m=O(d\log d + \sqrt
d/\epsilon)$, nearly matching our guarantee for sub-gaussian sketches
(Theorem \ref{t:lsse}). Thus, our new algorithm provides a promising way to address
the fundamental problem of inversion bias, and it may have many other
applications in the future. Finally, our analysis reveals two structural
conditions for small inversion bias (Theorem~\ref{t:structural}), one
of which (Condition~\ref{cond2}, called the Restricted Bai-Silverstein condition) leads
to a generalization of a classical inequality used in random matrix
theory, and should be of independent~interest.

\subsection{Related work}
\label{s:related-work}

\paragraph{Distributed averaging.}
Averaging strategies have been studied extensively in the literature,
particularly in the context of machine learning and numerical
optimization. This line of work has proven particularly effective for
\emph{federated learning}
\cite{konecny-federated16a,konecny-federated16b}, where local storage
and communication 
bandwidth are particularly constrained. The performance of averaged
estimates was analyzed in numerous statistical learning settings
\cite{mcdonald09,mcdonald10,zhang-duchi-wainwright13,dobriban2018understanding,JMLR:v21:19-277} and in
stochastic first-order optimization \cite{parallel-sgd,agarwal-duchi11}. Of 
particular relevance to our results is a recent line of works on distributed
second-order optimization \cite{dane,disco,aide,distributed-newton}, as well as large-scale second-order optimization \cite{YGKM19_pyhessian_TR,YGSKM20_adahessian_TR},
since sketching is used there to 
estimate (implicitly) the inverse Hessian matrix which arises in
Newton-type methods. In particular,
\cite{determinantal-averaging,debiasing-second-order} pointed to
Hessian inversion bias as a key challenge in these approaches. 
To address it, their algorithms use non-i.i.d. sampling sketches based on Determinantal Point Processes%
~(DPPs) \cite{DM21_NoticesAMS}. DPP-based sketches are known to
correct inversion bias exactly 
\cite{unbiased-estimates-journal,correcting-bias-journal,minimax-experimental-design}.
However, state-of-the-art DPP sampling algorithms
\cite{leveraged-volume-sampling,dpp-sublinear,alpha-dpp} 
have time complexity $O(\nnz(\A)\log n + d^4\log d)$,
which is considerably more expensive than fast sketching techniques
when dimension $d$ is large.

\paragraph{Random matrix theory.}
When considering $\S \in \R^{m \times n}$ having i.i.d.\@ zero-mean rows,
$\A^\top \S^\top \S \A$ can be viewed as the popular \emph{sample covariance
  estimator} of the ``population covariance matrix'' $\A^\top \A \in \R^{d \times
  d}$. In this area, one often considers the matrix $(\A^\top \S^\top \S \A - z \I)^{-1}$ for $z \in
\mathbb C \setminus \R_+$, the so-called \emph{resolvent} matrix,
which plays a fundamental role in the literature of random matrix
theory (RMT)
\cite{marchenko1967distribution,bai2010spectral,edelman2005random,anderson2010introduction,couillet2011random,tao2012topics,bun2017cleaning}
and which is directly connected to the popular Marchenko-Pastur law
\cite{marchenko1967distribution}. The RMT literature focuses on the Stieltjes transform (that
is, the normalized trace of the resolvent) to investigate the limiting
eigenvalue distribution of large random matrices of the form $\A^\top
\S^\top \S \A$ as $m,n,d \to \infty$ at the same rate.
Here, we provide \emph{precise} and \emph{finite-dimensional} results on the inverse sketched matrix.
This addresses the important case of $z=0$, which is typically
avoided in RMT analyses, due to the difficulty of dealing with the
possible singularity. More generally, the resolvent also appears as
the key object of study in the spectrum analysis of linear operators in
general Hilbert space \cite{akhiezer2013theory}, as well as
in modern convex optimization theory
\cite{bauschke2011convex}, thereby showing a much broader
interest of the proposed analysis.

\paragraph{Sketching.}
For overviews of sketching and random projection methods, we refer to
\cite{vempala2005random,tropp2011structure,Mah-mat-rev_JRNL,woodruff2014sketching,DM16_CACM,drineas2017lectures,RandNLA_PCMIchapter_chapter,DM21_NoticesAMS}.   
A key result in this area is the Johnson-Lindenstrauss lemma, which
states that norms, and thus also relative distances between
points, are \emph{approximately} preserved after sketching, i.e.,
$(1-\eta)\|\x_i\|^2\leq\|\S\x_i\|^2\leq(1+\eta)\|\x_i\|^2$ for
$\x_1,\ldots,\x_n\in\mathbb{R}^p$.  
This is further extended to the \emph{subspace embedding property}: for
all $\x$, the norm of $\x$ is preserved up to an $\eta$ factor. 
Subspace embeddings were first used in RandNLA by \cite{drineas2006sampling}, where they were used in a data-aware context to obtain relative-error approximations for $\ell_2$ regression and low-rank matrix approximation \cite{Drineas08CUR}.
Subsequently, data-oblivious subspace embeddings were used by \cite{sarlos2006improved} and popularized by \cite{woodruff2014sketching}.
Both data-aware and data-oblivious subspace embeddings
can be used to derive bounds for the accuracy of various algorithms \cite{DM16_CACM,RandNLA_PCMIchapter_chapter}. 

The most popular sketching methods include random projections with i.i.d.
entries, random sampling of the datapoints, uniform orthogonal
projections, Subsampled Randomized Hadamard Transform (SRHT)
\cite{sarlos2006improved,ailon2006approximate}, leverage score 
sampling~\cite{Drineas08CUR,fast-leverage-scores,MMY15}, and
CountSketch \cite{charikar2002finding,cw-sparse,nn-sparse,mm-sparse}.   
Random projection based approaches have been developed for a wide
variety of problems in data science, statistics, machine learning etc.,
including linear regression
\cite{sarlos2006improved,drineas2011faster,raskutti2014statistical,dobriban2018new},
ridge regression
\cite{lu2013faster,chen2015fast,wang2017sketched,liu2019ridge}, two
sample testing \cite{lopes2011more,srivastava2016raptt},
classification \cite{cannings2017random}, PCA
\cite{frieze2004fast,dkm_matrix2,sarlos2006improved,liberty2007randomized,halko2011algorithm,tropp2011structure,woolfe2008fast,Musco2015,tropp2017practical,7008533,yang2020reduce,gataric2020sparse},
convex optimization
\cite{pilanci2015randomized,pilanci2016iterative,pilanci2017newton},
etc.; see \cite{woodruff2014sketching,DM16_CACM,RandNLA_PCMIchapter_chapter} for a more
comprehensive list. Our new \less\ embeddings have the potential to
be relevant for all those important applications.

\section{Dense Gaussian and sub-gaussian sketches have small inversion bias}
\label{noib}
Consider first the classical Gaussian sketch, i.e., where the entries of $\S$ are
i.i.d.~standard normal scaled by $1/\sqrt m$. In this special case, the
sketched covariance matrix $\tilde\A^\top\tilde\A$ is
a Wishart-distributed random matrix, and we have:
\begin{align}
  \E\big[(\tilde\A^\top\tilde\A)^{-1}\big] = \tfrac
  m{m-d-1}(\A^\top\A)^{-1}\quad\text{for}\quad m\geq d+2.\label{eq:exact-correction}
\end{align}
In other words, even though the sketched inverse covariance is not an
unbiased estimate, the bias can be corrected by simply scaling
the matrix, after which averaging can be used effectively without
encountering any inversion bias.

The key property which enables exact bias-correction for the Gaussian
sketch is \emph{orthogonal invariance}. This property
requires that for any orthonormal matrix $\O$, the distributions of
the random matrices $\S$ and $\S\O$ are identical. An example beyond
Gaussians are Haar sketches, which are uniform over all partial
orthogonal matrices. If a sketch $\S$ is orthogonally invariant and
$\tilde\A^\top\tilde\A$ is invertible with probability one, then we
can show that the inversion bias can be corrected exactly, in that,
\eqref{eq:exact-correction} holds with some constant factor $c$ (replacing 
the factor $\frac m{m-d-1}$) that depends on the distribution of the sketch (see
Proposition \ref{ro} in Appendix \ref{s:orthogonal}).

Exact bias-correction, achieved by the Gaussian
sketch and other orthogonally invariant sketches, is no longer
possible for general sub-gaussian sketches. Here, we consider
sketching matrices with i.i.d.\ entries that (after scaling by $\sqrt m$) have $O(1)$ sub-gaussian
Orlicz norm. Consider for example the so-called Rademacher sketch,
with $\S$ consisting of scaled i.i.d.\ random sign entries (which is useful for reducing the
cost of randomness relative to the Gaussian sketch). In this case, an
exact bias-correction analogous to \eqref{eq:exact-correction} is
clearly infeasible for any $d>1$, simply because, with some positive
(but exponentially small) probability, the matrix $\tilde\A^\top\tilde\A$
will be non-invertible, making the expectation undefined.
Yet, any task where we observe at most polynomially many independent
estimates (such as averaging)
should not be affected by such low-probability events,
so we need a notion of near-unbiasedness that is robust to
this. To that end, we first recall a standard definition of a relative
error approximation for a positive semi-definite matrix.

\begin{definition}[Relative error approximation]
  \label{d:approximation}
  A positive semi-definite (p.s.d.) matrix $\tilde\C$ (or a non-negative scalar) is an
$\epssub$-approximation of $\C$, denoted as
$\tilde\C\approx_\epssub\C$, if
\[\C/(1+\epssub)\preceq\tilde\C\preceq(1+\epssub)\cdot \C.\]
If $\tilde\C$ is random and the above holds with
probability $1-\delta$, then we call it an
$(\epssub,\delta)$-approximation.
\end{definition}
\begin{remark}[Subspace embedding]\label{r:subspace-embedding}
If $\tilde\C=\tilde\A^\top\tilde\A$ where
$\tilde\A\in\R^{m\times d}$ is a sketch of $\A\in\R^{n\times d}$, then
the condition
$\tilde\A^\top\tilde\A\approx_\epssub\A^\top\A$ is called the
subspace embedding property with error $\eta$.
\end{remark}

\noindent 
For instance, any sketching matrix $\S$ with i.i.d.\@ $O(1)$ sub-gaussian random entries, of size
$m=O((d+\ln(1/\delta))/\epssub^2)$, where $\epssub\in(0,1)$, ensures 
that $\tilde\A=\S\A$ with probability $1-\delta$ satisfies the subspace embedding property with
error $\epssub$. In other words, $\tilde\A^\top\tilde\A$ is an
$(\epssub,\delta)$-approximation of $\A^\top\A$ (This is known to
be tight; see, e.g., \cite{nelson2014lower}). As a
consequence, the same guarantee applies to the inverse
$(\tilde\A^\top\tilde\A)^{-1}$, relative to $(\A^\top\A)^{-1}$.
The $\delta$ failure probability makes this definition robust to the rare
events where $\tilde\A^\top\tilde\A$ is not invertible. It is natural
to desire a similar robustness in the definition of
near-unbiasedness. We achieve this as follows.

\begin{definition}[$(\epsilon,\delta)$-unbiased estimator]\label{d:unbiased-estimator}
A random p.s.d.~matrix $\tilde\C$ is an
  $(\epsilon,\delta)$-unbiased estimator of $\C$ if there is an event
  $\mathcal E$ that holds with probability $1-\delta$ such that
  \begin{align*}
    \E\big[\tilde\C\mid \mathcal E\big] \approx_\epsilon
    \C,\quad\text{and}\quad\tilde\C\preceq O(1)\cdot\C\quad\text{when
    conditioned on $\Ec$.}
  \end{align*}
\end{definition}

\noindent
Note that this definition only becomes meaningful if we use it
with an $\epsilon$ that is much smaller than the approximation error $\epssub$ in
Definition \ref{d:approximation} (for instance, we will often have
$\epssub=\Omega(1)$ and $\epsilon\ll 1$).
Further, note the following two important aspects of
Definition~\ref{d:unbiased-estimator}. First, instead of a simple
expectation, we condition on some high probability 
event $\Ec$, which, similarly as in Definition \ref{d:approximation},
allows robustness against such corner cases as when 
the sketch $\tilde\A^\top\tilde\A$ is not invertible. Second,
conditioned on the event $\Ec$, in addition to an
$\epsilon$-approximation holding in expectation, we require a weaker upper 
bound to hold almost surely, in terms of the target matrix $\C$
scaled by some constant factor. 
This condition is important to
guard against certain corner cases where the probability mass is
extremely skewed. For instance, suppose that $\tilde
C$ is a scalar random variable which is uniform over $[0,1]$ and has an
additional probability mass of $10^{-10}$ at the value
$10^{100}$. Here, averaging will not prove effective at converging to
the true expectation of $\tilde C$, but we could still use the notion of
$(\epsilon,\delta)$-unbiasedness to show that the average of
an appropriately chosen number (much smaller than $10^{10}$) of i.i.d. copies will
converge very close to $0.5$, by choosing an event $\Ec$ that avoids
the $10^{100}$ (see Appendix~\ref{s:averaging}).

We are now ready to state our main result for sub-gaussian sketches
(this is in fact a corollary of our more general result,
Theorem~\ref{t:structural}, discussed in Section~\ref{s:structural}), which
asserts that after proper rescaling, not 
only the Gaussian sketch, but in fact all sub-gaussian sketches
(including the Rademacher sketch) enjoy small inversion bias.

\begin{proposition}[Near-unbiasedness of sub-gaussian sketches]\label{t:subgaussian}
Let $\S$ be an $m\times n$ random matrix such that $\sqrt m\,\S$
  has i.i.d.\ $O (1)$-sub-gaussian entries with mean zero and unit
  variance. There is $C=O(1)$ such that for any $\epsilon,\delta\in(0,1)$ if
  $m\geq C(d+\sqrt d/\epsilon + \log(1/\delta))$, then for all
  $\A\in\R^{n\times d}$ of rank $d$, $(\frac
  m{m-d}\A^\top\S^\top\S\A)^{-1}$ is an
  $(\epsilon,\delta)$-unbiased estimator of $(\A^\top\A)^{-1}$.
\end{proposition}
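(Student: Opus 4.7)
My plan is to first reduce to the isotropic case $\A^\top\A=\I$ by the substitution $\A\leftarrow\A(\A^\top\A)^{-1/2}$; this reduces the claim to showing that $\M^{-1}$ is $(\epsilon,\delta)$-unbiased for $\I$, where $\M := \tfrac{m}{m-d}\A^\top\S^\top\S\A$ and $\A$ now has orthonormal columns. With $\z_i := \sqrt m\,\S_i^\top$ (so that each $\z_i\in\R^n$ has i.i.d.\ mean-zero, unit-variance, $O(1)$-sub-gaussian entries) and $\x_i := \A^\top\z_i\in\R^d$, we can write
\begin{align*}
\M \;=\; \tfrac{1}{m-d}\sum_{i=1}^m \x_i\x_i^\top,\qquad \x_1,\ldots,\x_m\ \text{i.i.d.\ with}\ \E[\x_i\x_i^\top]=\I.
\end{align*}
For the almost-sure requirement, I would let $\Ec$ include the standard sub-gaussian subspace-embedding event $\M\approx_{1/2}\I$, which has probability $\geq 1-\delta/2$ for $m=\Omega(d+\log(1/\delta))$ and implies $\M^{-1}\preceq 2\I$ and also $\M_{-j}\approx_{1/2}\I$ for all $j$, where $\M_{-j} := \M - \tfrac{1}{m-d}\x_j\x_j^\top$.

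Next I apply the Sherman-Morrison-Woodbury identity to each rank-one contribution: setting $\tau_j := \tfrac{1}{m-d}\x_j^\top\M_{-j}^{-1}\x_j$, one has $\M^{-1}\x_j = \M_{-j}^{-1}\x_j/(1+\tau_j)$, and summing the identities $\tfrac{1}{m-d}\sum_j\M^{-1}\x_j\x_j^\top=\I$ yields
\begin{align*}
\I \;=\; \tfrac{1}{m-d}\sum_{j=1}^m \frac{\M_{-j}^{-1}\x_j\x_j^\top}{1+\tau_j}.
\end{align*}
The heart of the argument is to show $\tau_j$ concentrates at $\bar\tau:=d/(m-d)$. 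Since $\x_j$ is independent of $\M_{-j}$, conditioning on $\M_{-j}$ gives $\E[\x_j^\top\M_{-j}^{-1}\x_j\mid\M_{-j}]=\tr(\M_{-j}^{-1})$, and the sub-gaussian Hanson-Wright / Bai-Silverstein quadratic-form inequality contributes a deviation of order $\|\M_{-j}^{-1}\|_F=O(\sqrt d)$ on the embedding event. Adding this concentration (union-bounded over $j\leq m$) and a similar concentration $\tr(\M_{-j}^{-1})=d+O(\sqrt d)$ into $\Ec$, I obtain $|\tau_j-\bar\tau|=O(\sqrt d/m)$ uniformly on $\Ec$. Taking the conditional expectation of the displayed identity, using $\E[\M_{-j}^{-1}\x_j\x_j^\top]=\E[\M_{-j}^{-1}]$ by independence and $\|\M^{-1}-\M_{-j}^{-1}\|=O(1/m)$, yields the self-consistent equation
\begin{align*}
\I \;\approx\; \tfrac{m}{m-d}\cdot\tfrac{1}{1+\bar\tau}\,\E[\M^{-1}\mid\Ec].
\end{align*}
The cancellation $\tfrac{m}{m-d}/(1+d/(m-d))=1$ collapses this to $\E[\M^{-1}\mid\Ec]\approx\I$ in operator norm with additive error $O(\sqrt d/m)$, i.e.\ $(\epsilon,\delta)$-unbiasedness for $\epsilon=O(\sqrt d/m)$, matching the sketch size $m=\Omega(\sqrt d/\epsilon)$ in the bound.

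The main obstacle I anticipate is the conditioning on $\Ec$: since $\Ec$ depends on all of $\x_1,\ldots,\x_m$, the clean independence of $\x_j$ and $\M_{-j}$ used in the identities above is no longer literally true once we condition. I would handle this by ensuring $\Pr(\Ec^c)=O(\delta)$ and that every relevant quantity — $\tau_j$, $\tr(\M_{-j}^{-1})$, and $\|\M^{-1}-\M_{-j}^{-1}\|$ — is bounded pointwise on $\Ec$ within $O(\sqrt d/m)$, so that passing to conditional expectations perturbs the unconditional identities by only $O(\sqrt d/m)$. A secondary technical step is the Bai-Silverstein bound itself for the random but independent matrix $\B=\M_{-j}^{-1}$: one conditions on $\M_{-j}$, uses the embedding event to bound $\|\B\|_F=O(\sqrt d)$, and invokes the sub-gaussian Hanson-Wright inequality — this is precisely the step whose generalization the paper formalizes as the ``Restricted Bai-Silverstein'' inequality for the sparser sketches considered in Section~\ref{s:structural}.
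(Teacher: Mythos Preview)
Your high-level approach—leave-one-out via Sherman--Morrison and concentration of $\tau_j$ around $d/(m-d)$—is exactly the strategy the paper uses to prove Theorem~\ref{t:structural}, of which Proposition~\ref{t:subgaussian} is the special case $\alpha=O(1)$. However, two of the steps you outline contain genuine gaps that the paper has to resolve with specific additional ideas.

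\textbf{The centering $\tr(\M_{-j}^{-1})=d+O(\sqrt d)$ is circular.} You propose to add to $\Ec$ the event that $\tr(\M_{-j}^{-1})=d+O(\sqrt d)$ for all $j$. But neither the mean nor the concentration of $\tr(\M_{-j}^{-1})$ around $d$ to this precision is available a priori: the subspace-embedding event only gives $\tr(\M_{-j}^{-1})\in[d/(1+\eta),\,d(1+\eta)]$ with $\eta=\Theta(\sqrt{d/m})$, i.e.\ an $O(d^{3/2}/\sqrt m)$ error, and establishing $\E[\tr(\M_{-j}^{-1})]=d+O(\sqrt d)$ is essentially the trace version of the very result you are proving. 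The paper does \emph{not} obtain a high-probability bound on $\tr(\M_{-j}^{-1})-d$. Instead it bounds only the conditional variance $\Var_{\Ec'}[\tr(\Q_{-i}\H)]\leq Cd$ via a martingale (Burkholder) argument (Lemma~\ref{l:applying-burkholder}), and then closes the loop by a self-consistency step: the quantity $|\gamma-\bar\gamma|$ (which in your notation is $\tfrac1{m-d}\,|\,d-\E[\tr(\M_{-j}^{-1})]\,|$) is expressed as $|\tr(\Z_0)+\tr(\Z_2)|/(m-d)+\text{small}$, and $|\tr(\Z_2)|$ is in turn bounded by $O(d)\cdot|\gamma-\bar\gamma|+O(\sqrt{\alpha d})$. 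Solving this fixed-point inequality for $|\gamma-\bar\gamma|$ gives $O(\sqrt d/m)$; your argument is missing this bootstrap.

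\textbf{The conditioning fix is insufficient.} Your remedy—pointwise bounds on $\Ec$ plus small $\Pr(\Ec^c)$—does not recover the identity $\E[\M_{-j}^{-1}\x_j\x_j^\top\mid\Ec]\approx\E[\M_{-j}^{-1}\mid\Ec]$, because for sketches such as the Rademacher the \emph{unconditional} expectations you appeal to are undefined, and any event that simultaneously guarantees $\M_{-j}\succeq c\I$ and controls $\|\x_j\|$ already depends on $\x_j$. (Also, $\M\approx_{1/2}\I$ alone does not imply $\M_{-j}\approx_{1/2}\I$: removing one rank-one term can destroy the lower bound unless $\|\x_j\|^2$ is controlled.) The paper's solution is to split the rows of $\S$ into three equal blocks and set $\Ec=\Ec_1\wedge\Ec_2\wedge\Ec_3$, with each $\Ec_k$ asserting the subspace-embedding property for its own block; then for any index $i$ (or any pair $i,j$) there remain two sub-events independent of $\x_i$ (resp.\ $\x_i,\x_j$) which already force $\Q_{-i}\preceq 6\H^{-1}$ and $\Q_{-ij}\preceq 6\H^{-1}$. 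This three-part split is precisely what makes the $\Z_0$ term vanish up to an $O(\delta)$ correction and what allows the Burkholder argument to go through.
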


\noindent
Observe that the scaling $\frac m{m-d}$ essentially matches the exact
bias-correction for Gaussian sketches, which is $\frac m{m-d-1}$. In
fact, the same statement of the theorem holds with either scaling, and we
merely chose the simplest form of the scaling. 

As a corollary of the near-unbiasedness of sub-gaussian sketches, we
can show the following approximation guarantee for averaging the
inverse covariance matrix estimates. Recall that our primary
motivation is parallel and
distributed averaging, where the
computational cost does not grow with the number of independent
estimates.

\begin{corollary}\label{c:subgaussian}
Let $\S$ be a sub-gaussian sketching matrix of size $m$, and let
  $\S_1,...,\S_q$ be i.i.d.\ copies of $\S$. There is
  $C=O(1)$ such that if $m\geq C(d + \sqrt d/\epsilon+\log(q/\delta))$
  and $q\geq Cm\log(d/\delta)$, then for any $\A\in\R^{n\times d}$ of
  rank $d$,
  $\frac1q\sum_{i=1}^q(\tfrac{m}{m-d}\A^\top\S_i^\top\S_i\A)^{-1}$ is an
$(\epsilon,\delta)$-approximation of $(\A^\top\A)^{-1}$.
\end{corollary}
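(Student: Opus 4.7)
The plan is to combine the near-unbiasedness of each summand, supplied by Proposition~\ref{t:subgaussian}, with a matrix concentration inequality for averages of bounded p.s.d.\ matrices. Write $\tilde\C_i=(\tfrac{m}{m-d}\A^\top\S_i^\top\S_i\A)^{-1}$ and $\C=(\A^\top\A)^{-1}$. First, I would apply Proposition~\ref{t:subgaussian} separately to each sketch $\S_i$ with failure parameter $\delta/(2q)$: the $\log(q/\delta)$ term in the standing assumption on $m$ is exactly what is needed to certify each sketch as $(\epsilon,\delta/(2q))$-unbiased. This furnishes events $\Ec_1,\ldots,\Ec_q$, each depending only on the corresponding $\S_i$ and of probability at least $1-\delta/(2q)$, on which $\E[\tilde\C_i\mid\Ec_i]\approx_\epsilon\C$ and $\tilde\C_i\preceq K\cdot\C$ almost surely for some absolute constant $K$. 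Setting $\Ec=\bigcap_i\Ec_i$, a union bound gives $\Pr(\Ec)\geq 1-\delta/2$, and since each $\Ec_i$ is $\sigma(\S_i)$-measurable, the $\tilde\C_i$ remain mutually independent under conditioning on $\Ec$.

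After normalizing to $\Z_i:=\C^{-1/2}\tilde\C_i\C^{-1/2}$, the task reduces to showing that, conditional on $\Ec$, the average $\bar\Z_q:=q^{-1}\sum_i\Z_i$ satisfies $\bar\Z_q\approx_{O(\epsilon)}\I$ with conditional probability $1-\delta/2$. The conditional mean $\mu:=\E[\Z_i\mid\Ec]$ already lies within $O(\epsilon)$ of $\I$ in the p.s.d.\ order, so by a triangle-type manipulation it suffices to control $\|\bar\Z_q-\mu\|$. I would appeal to the matrix Bernstein inequality for bounded independent centered matrices $\Z_i-\mu$. Boundedness $\|\Z_i-\mu\|\leq 2K$ is immediate from the p.s.d.\ bound on $\Z_i$. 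For the matrix-variance proxy, I would invoke the quantitative subspace embedding guarantee for sub-gaussian sketches, which for $m\geq C(d+\log(q/\delta))$ yields $\|\Z_i-\I\|\lesssim\sqrt{d/m}$ on a high-probability subevent; absorbing this into $\Ec$ at the cost of an extra $\delta/(2q)$ in the failure parameter gives $\bigl\|\E[(\Z_i-\mu)^2\mid\Ec]\bigr\|\lesssim d/m$.

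Matrix Bernstein then produces $\|\bar\Z_q-\mu\|\leq\epsilon$ with conditional probability $1-\delta/2$ provided $q\gtrsim (d/m)\log(d/\delta)/\epsilon^2+\log(d/\delta)/\epsilon$. Under the relation $\epsilon\gtrsim\sqrt{d}/m$ enforced by the assumption $m\geq C\sqrt{d}/\epsilon$, the first summand dominates and simplifies to $q\gtrsim m\log(d/\delta)$, matching the hypothesis. Combining the two failure events and conjugating back by $\C^{1/2}$ yields the claimed $(\epsilon,\delta)$-approximation of $\C$, after absorbing numerical factors into $C$.

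The chief technical point is the variance estimate: using only the bounded-range $\|\Z_i\|\leq K$ provided by Definition~\ref{d:unbiased-estimator} would give the cruder $q\gtrsim\log(d/\delta)/\epsilon^2$, not the sharper $q\gtrsim m\log(d/\delta)$. Extracting the correct $d/m$ scaling of the operator-valued second moment requires the quantitative subspace embedding with error $\sqrt{d/m}$, together with a careful consolidation of that event with the event $\Ec$ from Proposition~\ref{t:subgaussian} so that the conditional independence of the $\tilde\C_i$ is not lost. Everything else is a standard combination of matrix Bernstein and p.s.d.-order triangle manipulations.
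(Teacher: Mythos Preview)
Your proposal is correct and follows essentially the same route as the paper. The paper packages the averaging argument into a standalone lemma (Lemma~\ref{t:averaging}) and then invokes it, but the ingredients are identical to yours: Proposition~\ref{t:subgaussian} for $(\epsilon,\delta/2q)$-unbiasedness, the quantitative subspace embedding to supply the $\eta=O(\sqrt{d/m})$ approximation that controls the matrix-variance proxy, matrix Bernstein for the concentration of the average, and the relation $\epsilon\gtrsim\sqrt d/m$ to collapse the resulting bound to $q\gtrsim m\log(d/\delta)$. The only cosmetic difference is that the paper handles the conditioning via a coupling (defining auxiliary $\tilde\C_i'$ equal to $\tilde\C_i$ conditioned on both events and coupled to agree with high probability) rather than your direct conditional-independence argument; both are valid and lead to the same conclusion.
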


\noindent
Proposition \ref{t:subgaussian} shows that for a sub-gaussian sketch
$\tilde\A=\S\A$ of size $m\geq Cd$, 
the sketched inverse covariance $(\frac
m{m-d}\tilde\A^\top\tilde\A)^{-1}$ has inversion bias $O(\sqrt d/m)$. 
This means that the inversion bias of this estimator is smaller than
the approximation error, which is $\Theta(\sqrt{d/m})$, by a factor of  
$O(1/\sqrt m)$. Thus, using Corollary \ref{c:subgaussian}, we can
reduce the approximation error by averaging
$q=O(m\log(d/\delta))$ copies of this estimator,
obtaining that
$\frac1q\sum_{i=1}^q(\tfrac{m}{m-d}\tilde\A_i^\top\tilde\A_i)^{-1}$
is with high probability an $O(\sqrt d/m)$-approximation of
$(\A^\top\A)^{-1}$.  In particular, when
$m=\Theta(d)$, then the approximation error of a single estimate
(without averaging) is
$\Theta(1)$, whereas the approximation error of the averaged estimate
is only $O(1/\sqrt d)$.

\section{Main results: Less inversion bias with \less\ embeddings}

To address the high computational cost of sub-gaussian sketches, while
preserving their good near-unbiasedness properties, 
we propose a new family of sketches, which we call LEverage Score Sparsified (\less)
embeddings. A \less\ embedding is defined simply as a sparsified
sub-gaussian sketch, where the sparsification is designed so as to ensure
small inversion bias for a particular matrix $\A$. Our
approach combines ideas from approximate leverage
score sampling (which is data-aware) with ideas from 
sparse embedding matrices (which are normally data-oblivious).
Importantly, neither strategy by itself is sufficient to ensure
small inversion bias (see our lower bound in Theorem \ref{t:lower} and
discussion in Section \ref{s:structural}). Each row of a \less\ embedding is
sparsified independently using a sparsification pattern defined as
follows. Recall that for a tall full rank matrix $\A$, we use $\a_i^\top$ to denote the $i$th row of
$\A$, and the $i$th leverage score of $\A$ is defined as
$l_i=\a_i^\top(\A^\top\A)^{-1}\a_i$. 

\begin{definition}[\less: LEverage Score Sparsified embedding]\label{d:leverage-score-sparsifier}
Fix a matrix $\A\in\R^{n\times d}$ of rank~$d$ with leverage scores $l_1,...,l_n$, and let 
$s_1,...,s_d$ be sampled i.i.d.~from a
probability distribution $(p_1,...,p_n)$ such that
$p_i\approx_{O(1)} l_i/d$ for all $i$. Then, the
random vector $\xib^\top
=\big(\sqrt{\!\frac{b_1}{dp_1}},...,\sqrt{\!\frac{b_n}{dp_n}}\big)$, where
$b_i=\sum_{t=1}^d1_{[s_t=i]}$, is called a \underline{leverage score sparsifier} for $\A$.

Sketching matrix $\S$ is a \underline{\less\ embedding} of size $m$
for a matrix $\A$, if it consists of $m$ i.i.d.\ row vectors distributed as
$\frac1{\sqrt m}(\x\circ \xib)^\top$, where $\circ$ denotes an entry-wise
product and $\x$ is a random vector with i.i.d.\ mean zero, unit
variance, $O(1)$-sub-gaussian entries. 
\end{definition}

\begin{remark}[Time complexity of LESS]
  Given a matrix $\A\in\R^{n\times d}$ of rank $d$, there is an
  algorithm with an $O(\nnz(\A)\log n + d^3\log d)$ time preprocessing
  step, that can then construct a LESS embedding sketch $\S\A$ of size
  $m$ in time $O(md^2)$. In the following results we always use $m\geq d\log d$, in
  which case the total cost of constructing a LESS embedding is $O(\nnz(\A)\log n+md^2)$.
\end{remark}
\noindent
The matrix product
$\S\A$ costs only $O(md^2)$ because, by definition, the number
of non-zeros per row of $\S$ is bounded almost surely by $d$. 
It is not essential for our analysis that we sample exactly $d$ indices in each row of a 
\less\ embedding, but we fix it here
for the sake of simplicity. 
We could also have approximately $d$ non-zeros per row,
and similar results would still hold. To construct the
distribution $(p_1,...,p_n)$, the sparsifier requires a constant
relative error approximation 
of all the leverage scores of $\A$, which can be computed in
$O(\nnz(\A)\log n + d^3\log d)$ time \cite{fast-leverage-scores,cw-sparse}. Alternatively we can use our
approach in a data-oblivious way, by combining LEverage Score
Sparsification with the Randomized Hadamard Transform \cite{ailon2009fast,drineas2011faster}, which we may abbreviate as \less RHT. Here, the matrix
$\A$ is first transformed so that it has approximately uniform
leverage scores \cite{fast-leverage-scores}, and then we can sparsify it using a uniform
distribution, i.e., $p_i=1/n$ for all $i$, with total cost $O(nd\log n + md^2)$. Finally, computing
the sketched inverse covariance matrix estimator $(\frac 
m{m-d}\A^\top\S^\top\S\A)^{-1}$ only adds an $O(md^2)$ cost. These
costs can be further optimized using fast matrix multiplication
\cite{williams2012multiplying}.%
\footnote{The cost of computing the matrix product $\S\A$ can be
  optimized beyond $O(md^2)$ by adapting the fast matrix
  multiplication routines to take advantage of the sparsity pattern;
  see, e.g., \cite{yuster2005fast}.}

In our main result, we show that \less\ embeddings enjoy small
inversion bias, nearly matching our guarantee for sub-gaussian
sketches (Proposition \ref{t:subgaussian}).

\begin{theorem}[Near-unbiasedness for
  \less]\label{t:lsse}
Suppose that $\S$ is a \less\ embedding of size $m$
for a rank $d$ matrix $\A\in\R^{n\times d}$. 
There is $C=O(1)$ such that
if $m\geq C(d\log(d/\delta) + \sqrt d/\epsilon)$ then the sketch $(\frac
  m{m-d}\A^\top\S^\top\S\A)^{-1}$ is an
  $(\epsilon,\delta)$-unbiased estimator of
  $(\A^\top\A)^{-1}$.
\end{theorem}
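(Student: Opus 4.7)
The plan is to apply Theorem~\ref{t:structural} to the rows of a LESS embedding $\S$, reducing the proof to verifying its two structural conditions. The first (moment-matching) condition is a short calculation, while the second (Condition~\ref{cond2}, the Restricted Bai--Silverstein condition) is where the real work lies and where the leverage-score design of $\xib$ is essential.

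First I would verify the moment-matching condition: each row $\s$ of $\S$ satisfies $\E[\s\s^\top]=\frac{1}{m}\I_n$ and hence $\E[\A^\top\S^\top\S\A]=\A^\top\A$. This follows by conditioning on $\xib$ and using independence of $\x$ from $\xib$: the cross terms vanish because $\x$ has mean-zero independent coordinates, the diagonal gives $\E[\x_i^2]\,\E[\xib_i^2]=1\cdot\E[b_i/(dp_i)]=1$ since $b_i\sim\binom(d,p_i)$, and the factor $1/m$ comes from the row scaling. No use of the leverage-score structure is needed yet.

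The core of the proof is to show that the rows of $\S$ satisfy the Restricted Bai--Silverstein condition for quadratic forms $\s^\top\M\s$ with $\M$ ranging over the restricted matrix class defined by the column geometry of $\A$ (typically operators of the form $\A\B\A^\top$ with $\B$ controlled by $(\A^\top\A)^{-1}$). I would proceed in two concentration layers. Conditioning on $\xib$, the quadratic form rewrites as $\frac{1}{m}\x^\top\tilde\M\x$ with $\tilde M_{ij}=\xib_i\xib_j M_{ij}$, and a Hanson--Wright inequality applied to the sub-gaussian vector $\x$ gives concentration around $\frac{1}{m}\tr(\tilde\M)=\frac{1}{m}\sum_i\xib_i^2 M_{ii}$, with deviations controlled by $\|\tilde\M\|_F$ and $\|\tilde\M\|$. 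It then remains to concentrate $\sum_i\xib_i^2 M_{ii}=\sum_i(b_i/(dp_i))M_{ii}$ around $\tr(\M)$, together with matching bounds on $\|\tilde\M\|_F$ and $\|\tilde\M\|$ in terms of $\|\M\|_F$ and $\|\M\|$.

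This is the point at which the leverage-score choice $p_i\asymp l_i/d$ pays off: for the restricted class of matrices, one has $M_{ii}=O(l_i)$, so $(b_i/(dp_i))M_{ii}=O(b_i)$, and since $\sum_i b_i=d$ almost surely, the multinomial fluctuations of $\xib$ are controlled by a Bernstein-type bound calibrated to the leverage profile rather than by generic subspace embedding. Combining the Hanson--Wright deviation conditional on $\xib$ with the multinomial deviation of $\xib$, and taking a union bound over a net in the restricted class, delivers the Restricted Bai--Silverstein estimate with the $\sqrt d/m$ scaling needed to invoke Theorem~\ref{t:structural} and conclude $(\epsilon,\delta)$-unbiasedness whenever $m\ge C(d\log(d/\delta)+\sqrt d/\epsilon)$.

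The main obstacle is the second layer of concentration: one must extract a $\sqrt d/m$ bias bound, not a looser $d/m$ bound, from the two mixed sources of randomness (sub-gaussian $\x$ versus multinomial $\xib$). The $\sqrt d$ improvement is exactly what matches Proposition~\ref{t:subgaussian} and drives the $\sqrt d/\epsilon$ term in the sketch size. The extra $d\log(d/\delta)$ (rather than a pure $d$) term is the price of the leverage-score sparsification: it arises from the union bound needed to simultaneously control the maxima of the heavy-tailed coordinates of $\xib$ across the net, and from ensuring invertibility of $\A^\top\S^\top\S\A$ on the conditioning event $\Ec$ required by Definition~\ref{d:unbiased-estimator}.
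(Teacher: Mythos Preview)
Your high-level plan to invoke Theorem~\ref{t:structural} is correct, but you have misidentified the two structural conditions it requires. The hypothesis $\E[\x\x^\top]=\I_n$ that you call the ``moment-matching condition'' is a basic assumption in Theorem~\ref{t:structural}, not one of its two structural conditions. The actual first condition is Condition~\ref{cond1} (subspace embedding): you must show that $\S_{m/3}$ satisfies $\A^\top\S_{m/3}^\top\S_{m/3}\A\approx_{1/2}\A^\top\A$ with probability $1-\delta/3$, where $\delta\le 1/m^3$. This is non-trivial for LESS embeddings because the rows of $\S$ are neither almost-surely bounded in the right norm nor do their coordinates have uniformly bounded fourth moments, so neither the matrix Chernoff argument used for leverage score sampling nor the standard sparse-embedding analysis applies directly. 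The paper handles this in Lemma~\ref{l:subspace-embedding} via a subexponential matrix Bernstein inequality, which in turn requires a restricted Hanson--Wright-type tail bound (Lemma~\ref{l:restricted-hanson-wright}) on $(\x\circ\xib)^\top\U\U^\top(\x\circ\xib)$. It is precisely this subspace-embedding condition, not the Restricted Bai--Silverstein condition, that contributes the $d\log(d/\delta)$ term in the sketch size.

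Your treatment of Condition~\ref{cond2} also misreads what is required. The Restricted Bai--Silverstein condition is a pointwise \emph{variance} bound, $\Var[(\x\circ\xib)^\top\B(\x\circ\xib)]\le\alpha\,\tr(\B^2)$ for each fixed $\B$ restricted to the span of $\A$, with $\alpha=O(1)$; it carries no $\sqrt d/m$ scaling and requires no union bound over a net. Theorem~\ref{t:structural} takes $\alpha$ as input and outputs $\epsilon=O(\alpha\sqrt d/m)$. The paper's proof (Lemma~\ref{l:restricted-bai-silverstein}) is a direct second-moment calculation: decompose via the law of total variance into $\Var[\tr(\U_\xib\C\U_\xib^\top)]$ (randomness of $\xib$) plus $\E[\Var_\xib[\x^\top\U_\xib\C\U_\xib^\top\x]]$ (randomness of $\x$ given $\xib$), and bound each by $O(\tr(\B^2))$ using the classical Bai--Silverstein inequality for the inner term and the leverage-score relation $p_i\asymp l_i/d$ for the outer term. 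The Hanson--Wright and net machinery you outline is unnecessary here; where a Hanson--Wright-type bound \emph{is} needed is precisely in the subspace-embedding step that you omitted.
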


\noindent
Thus, we show that the inversion bias guarantee for \less\ embeddings
matches our result for sub-gaussian sketches up to a logarithmic
factor. This additional factor is standard in the analysis of fast
sketching methods. It comes from the fact that, as an artifact of the
matrix concentration bounds \cite{tropp2012user} we use in our analysis of \less\
embeddings, a sketch of size $m=O(d\log d)$ is needed to satisfy
the subspace embedding property, which is one of our two
structural conditions for small inversion bias (see
Section~\ref{s:structural}). As a corollary, we obtain 
an improved guarantee for parallel and distributed averaging of i.i.d.\ sketched inverse
covariance estimates which also matches the corresponding statement
for sub-gaussian sketches (Corollary~\ref{c:subgaussian}) up to
logarithmic factors. 

\begin{corollary}\label{c:less}
Let $\S$ be a \less\ embedding matrix of size $m$ for
a rank $d$ matrix $\A\in\R^{n\times d}$, and let
  $\S_1,...,\S_q$ be i.i.d.\ copies of $\S$. There is
  $C=O(1)$ such that if $m\geq C(d\log(q/\delta) + \sqrt d/\epsilon)$
  and $q\geq Cm\log^2(d/\delta)$, then 
  $\frac1q\sum_{i=1}^q(\tfrac{m}{m-d}\A^\top\S_i^\top\S_i\A)^{-1}$ is an
$(\epsilon,\delta)$-approximation of $(\A^\top\A)^{-1}$.
\end{corollary}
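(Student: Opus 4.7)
The plan is to combine Theorem~\ref{t:lsse} applied independently to each of the $q$ sketches with a matrix Chernoff argument for averages of positive semi-definite matrices; structurally, the proof parallels that of Corollary~\ref{c:subgaussian}, with only the extra logarithmic factors already tracked in Theorem~\ref{t:lsse} propagating through. First, I would apply Theorem~\ref{t:lsse} to each $\tilde\C_i := (\tfrac{m}{m-d}\A^\top\S_i^\top\S_i\A)^{-1}$ with parameters $\epsilon/3$ and $\delta/(2q)$. The hypothesis $m\geq C(d\log(q/\delta)+\sqrt d/\epsilon)$ (with $C$ suitably enlarged) then certifies each $\tilde\C_i$ to be $(\epsilon/3,\delta/(2q))$-unbiased, witnessed by an event $\Ec_i$. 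Writing $\C:=(\A^\top\A)^{-1}$ and $\Ec:=\bigcap_i\Ec_i$, a union bound gives $\Pr[\Ec]\geq 1-\delta/2$; because each $\Ec_i$ depends only on $\S_i$, conditioning on $\Ec$ preserves the mutual independence of the $\tilde\C_i$, and on $\Ec$ we simultaneously have $\E[\tilde\C_i\mid\Ec_i]\approx_{\epsilon/3}\C$ and $\tilde\C_i\preceq c\cdot\C$ almost surely for some absolute constant $c$.

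The main step is a multiplicative matrix Chernoff bound applied to the whitened matrices $\Z_i:=\C^{-1/2}\tilde\C_i\C^{-1/2}$, which, conditional on $\Ec$, are independent PSD matrices with $\Z_i\preceq c\cdot\I$ almost surely and $\E[\Z_i\mid\Ec_i]\approx_{\epsilon/3}\I$. Standard matrix Chernoff (e.g., Tropp's user-friendly tail bounds) then yields, conditional on $\Ec$,
\begin{equation*}
\Pr\!\Big[\,\tfrac{1}{q}\textstyle\sum_{i=1}^q \Z_i\ \not\approx_{\epsilon/3}\ \E\big[\tfrac{1}{q}\sum_{i=1}^q \Z_i\mid\Ec\big]\;\Big|\;\Ec\Big]\leq 2d\exp(-c'q\epsilon^2),
\end{equation*}
and requiring $q\geq Cm\log^2(d/\delta)$ (combined with $m\geq C\sqrt d/\epsilon$) drives this probability below $\delta/2$. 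Composing the conditional-mean near-unbiasedness with the Chernoff concentration via transitivity of $\approx$, and absorbing constant factors into $C$, yields the claimed $(\epsilon,\delta)$-approximation with total failure probability at most~$\delta$.

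The main obstacle is the matrix concentration step, and specifically the extra factor of $m$ in the sample size $q\geq Cm\log^2(d/\delta)$. The almost-sure PSD bound $\tilde\C_i\preceq c\cdot\C$ from Definition~\ref{d:unbiased-estimator} is precisely what enables matrix Chernoff to apply to the whitened matrices without being derailed by rare events; this is the entire point of working with the $(\epsilon,\delta)$-unbiased definition rather than a plain expectation. The factor of $m$ arises because the admissible target accuracy $\epsilon$ may be as small as $\sqrt d/m$ before the hypothesis $m\geq C\sqrt d/\epsilon$ binds, and matrix Chernoff at that accuracy requires $q=\Omega(\log(d/\delta)/\epsilon^2)$ samples; the stated requirement $q\geq Cm\log^2(d/\delta)$ uniformly covers the admissible range, absorbing the extra logarithmic gap that distinguishes Theorem~\ref{t:lsse} from Proposition~\ref{t:subgaussian} (and likewise Corollary~\ref{c:less} from Corollary~\ref{c:subgaussian}).
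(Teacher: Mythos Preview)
Your overall architecture mirrors the paper's approach (which is packaged as Lemma~\ref{t:averaging}), but there is a genuine gap in the concentration step. You feed into matrix Chernoff only the crude almost-sure bound $\Z_i\preceq c\,\I$ coming from the definition of $(\epsilon,\delta)$-unbiasedness. With that input, the tail bound $2d\exp(-c'q\epsilon^2)$ is correct, but it forces $q\gtrsim \log(d/\delta)/\epsilon^2$. The hypotheses $q\geq Cm\log^2(d/\delta)$ and $m\geq C\sqrt d/\epsilon$ only give $q\epsilon^2\gtrsim \sqrt d\,\epsilon\log^2(d/\delta)$, which is \emph{not} bounded below by $\log(d/\delta)$ unless $\epsilon\gtrsim 1/(\sqrt d\log(d/\delta))$. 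For smaller $\epsilon$ (equivalently, for $m\gg d\log(d/\delta)$, which the corollary certainly allows), your Chernoff step does not close; the claim that ``$q\geq Cm\log^2(d/\delta)$ uniformly covers the admissible range'' is false as stated.

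The missing ingredient is the second structural input in the paper's Lemma~\ref{t:averaging}: besides $(\epsilon,\delta/2q)$-unbiasedness, you must also invoke the subspace embedding guarantee for \less\ (Lemma~\ref{l:subspace-embedding}) to get that each $\tilde\C_i$ is an $(\eta,\delta/2q)$-\emph{approximation} of $\C$ with $\eta=O(\sqrt{d\log(dq/\delta)/m})$. Conditioning on that event, $\|\Z_i-\I\|\leq\eta$, so the variance parameter in matrix Bernstein is $O(\eta^2)$ rather than $O(1)$, yielding a tail of the form $2d\exp(-c'q\epsilon^2/\eta^2)$. Since $m\geq C\sqrt d/\epsilon$ implies $\eta=O(\epsilon\sqrt{m\log(d/\delta)})$, this translates to $q\gtrsim m\log^2(d/\delta)$, which is exactly the stated hypothesis. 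In short, the $O(1)$ bound from Definition~\ref{d:unbiased-estimator} handles rare events but is too coarse for the variance; the subspace embedding is what makes the factor-of-$m$ (rather than $m^2/d$) sample complexity work.
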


To motivate and place our new algorithm into context, we demonstrate that existing fast
sketching techniques may not achieve an inversion bias bound
comparable to that of sub-gaussian sketches, even if they achieve
a nearly matching subspace embedding guarantee.
This lower bound
demonstrates the hardness of
constructing an $(\epsilon,\delta)$-unbiased estimator of
the inverse covariance matrix from its sketch. 
We show this here for leverage score sampling~\cite{drineas2006sampling,Drineas08CUR,fast-leverage-scores,MMY15}.
However, based on evidence from our analysis, we conjecture that similar lower bounds hold for
other methods such as Subsampled Randomized Hadamard Transform \cite{ailon2009fast,drineas2011faster}
and data-oblivious sparse embedding matrices \cite{cw-sparse,nn-sparse,mm-sparse}.%
\footnote{An alternative approach to achieving small
  inversion bias is to chain together a fast sketch
having a larger size, say, $t=\Ot(d/\epsilon^2)$, with a sub-gaussian
sketch having a smaller size $m=O(d+\sqrt d/\epsilon)$.
However, this leads to a sub-optimal time complexity in terms of
the polynomial dependence on $d$ due to the cost $O(tmd)$ of the sub-gaussian
sketch. For example, with $\epsilon=1/\sqrt d$, the overall cost is $\Ot(\nnz(\A) + d^4)$ compared to $\Ot(\nnz(\A)+d^3)$
with LESS.}

\begin{theorem}[Lower bound for leverage score sampling]\label{t:lower}
For any $n\geq 2d\geq 4$, there is an $n\times d$ matrix $\A$ and
 a row sampling $(p_1,...,p_n)$, with a corresponding $m\times n$
 sketching matrix $\S$, s.t.:
  \begin{enumerate}
    \item The row sampling $(p_1,...,p_n)$ is a $1/2$-approximation of leverage score
      sampling;  and
    \item For any sketch size $m$ and scaling $\gamma$,
      $(\gamma\A^\top\S^\top\S\A)^{-1}$ is \underline{not} an
  $(\epsilon,\delta)$-unbiased estimator of $(\A^\top\A)^{-1}$ with any
  $\epsilon\leq c\frac dm$ and $\delta\leq c(\frac dm)^2$, where $c>0$
  is an absolute constant.
\end{enumerate}
\end{theorem}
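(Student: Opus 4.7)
The plan is to construct an adversarial pair $(\A,\p)$ and analyze two regimes of the sketch size $m$ separately. I would take $\A\in\R^{n\times d}$ as a scaled sub-block of a Sylvester-Hadamard-type matrix so that $\A^\top\A=\I$ and every leverage score is uniformly equal to $d/n$, and then break symmetry by setting $p_i=3/(2n)$ on a structured set $S\subset[n]$ of size $n/2$ and $p_i=1/(2n)$ on $S^c$, where $S$ is chosen so that $T_S=\sum_{i\in S}\a_i\a_i^\top$ is \emph{not} a scalar multiple of $\I$ (for Sylvester Hadamard, $S=\{i:\H_{i1}=+1\}$ works and yields $T_S=\tfrac{1}{2}(\I+\mathbf{F})$ for an explicit permutation-type matrix $\mathbf{F}$ with eigenvalues $\pm 1$). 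By construction, $\p$ is a valid $1/2$-approximation of the uniform leverage-score distribution.

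For small $m$ (roughly $m\lesssim d\log d$), the sketch $\A^\top\S^\top\S\A$ is non-invertible whenever some count $N_i$ in the underlying multinomial sampling is zero. Applying Paley--Zygmund to the number of empty cells $Z=\sum_i\mathbf 1[N_i=0]$ gives $\Pr(\text{sketch singular})\geq(\E Z)^2/\E[Z^2]\gtrsim(d/m)^2$ in this range, so no admissible event $\Ec$ with $\Pr(\Ec)\geq 1-c(d/m)^2$ can contain the sketch-invertible set; the $(\epsilon,\delta)$-unbiased property then fails vacuously since $(\gamma\A^\top\S^\top\S\A)^{-1}$ is not even defined (let alone PSD) on a set of mass exceeding~$\delta$.

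For the large-$m$ regime, I would compute the leading-order inversion bias via the matrix Taylor expansion
\[
\E\bigl[(\A^\top\S^\top\S\A)^{-1}\bigr]\;=\;\I\;+\;\tfrac{1}{m}\bigl(\E[\|\z\|^2\z\z^\top]-\I\bigr)\;+\;O(m^{-2}),
\]
where $\z=\a_s/\sqrt{p_s}$ is a single re-weighted sampled row. Plugging in the explicit $(p_i)$, the second moment reduces to a linear combination of $T_S$ and $\I$, so the leading-order bias matrix has traceless component proportional to $\mathbf{F}$ with operator norm $\Theta(d/m)$. Since the scalar $\gamma$ in $(\gamma\A^\top\S^\top\S\A)^{-1}$ acts only as a uniform rescaling of the inverse, the best $\gamma$ leaves a residual of operator norm $\Omega(d/m)$. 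To lift this unconditional bound to the conditional definition, I would argue that for any $\Ec$ with $\Pr(\Ec)\geq 1-c(d/m)^2$ on which $\tilde\C\preceq C_1\cdot(\A^\top\A)^{-1}$, one has $\|\E[\tilde\C\mid\Ec]-\E[\tilde\C]\|=o(d/m)$, using (i) matrix Bernstein to obtain a subspace-embedding event of probability $1-\exp(-\Omega(m/\log d))$ on which $\tilde\C$ is automatically bounded by a constant times the target, and (ii) Paley--Zygmund anti-concentration of the Binomial counts $N_i$ to rule out an adversarial $\Ec$ exploiting heavy tails of $\tilde\C$ outside the embedding event. The main obstacle is step~(ii): bounding $\|\E[\tilde\C\,\mathbf 1_{\Ec^c}]\|$ when the sketch admits near-singular realizations requires a delicate application of Paley--Zygmund to the Binomial fluctuations, and this anti-concentration is precisely where the restriction $\delta\leq c(d/m)^2$ enters quantitatively.
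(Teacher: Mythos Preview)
Your high-level idea---build in an asymmetry so that no scalar $\gamma$ can simultaneously debias two directions, then argue that conditioning on a high-probability event cannot wash out an $\Omega(d/m)$ discrepancy---is the same as the paper's. But your concrete construction and the technical path you propose both run into real difficulties that the paper's approach avoids by design.

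\textbf{Where the approaches diverge.} The paper places the asymmetry in $\A$, not in $p$: it takes $\A$ to have scaled standard basis vectors as rows (so $\A^\top\S^\top\S\A$ is \emph{diagonal}), with the first coordinate direction slightly unbalanced, and keeps $p$ uniform. This means the whole problem collapses to scalar inverse moments of Binomials, and---crucially---conditioned on the minimal invertibility event $\Ec^*=\{\forall_i\, b_i>0\}$, every diagonal entry of the inverse is bounded by an absolute constant. The bias is then extracted via an exact lower bound $\E[1/(x+b/2)]\geq(1+1/(Cb))/b$ for $x\sim\mathrm{Binomial}(b,1/2)$ (proved with Paley--Zygmund), showing the first diagonal entry of $\E[(\gamma\A^\top\S^\top\S\A)^{-1}\mid\Ec^*]$ exceeds the second by a factor $1+\Omega(d/m)$. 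Passing to an arbitrary $\Ec\subset\Ec^*$ with $\Pr(\neg\Ec)\leq c(d/m)^2$ is then a one-line perturbation, because on $\Ec^*$ the inverse is uniformly $O(1)$.

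\textbf{The gaps in your route.} Your Hadamard construction makes $\A^\top\S^\top\S\A$ non-diagonal, and this breaks both of your regimes. In the small-$m$ regime, the claim ``$\A^\top\S^\top\S\A$ is non-invertible whenever some count $N_i$ is zero'' is false for a Hadamard block: with $n\geq 2d$ rows in $\R^d$, many rows will have $N_i=0$ almost surely, yet the sampled rows can easily span $\R^d$. So the singularity-probability argument as written does not go through. In the large-$m$ regime, the Taylor expansion $\E[(\A^\top\S^\top\S\A)^{-1}]=\I+\tfrac1m(\cdots)+O(m^{-2})$ is only formal: the unconditional expectation is undefined (singularity has positive probability), and even after conditioning on invertibility the remainder is not $O(m^{-2})$ without an a priori bound on $\|\tilde\C\|$---which you do not have, since near-singular realizations make the inverse arbitrarily large. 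This is exactly the obstacle you flag in step~(ii), and the Paley--Zygmund idea you sketch does not obviously control $\|\E[\tilde\C\,\mathbf 1_{\Ec^c}]\|$ when $\tilde\C$ has heavy tails on $\Ec^c$. The paper's diagonal construction removes this obstacle entirely: on $\Ec^*$ the inverse is deterministically bounded, so $\|\E[\tilde\C\,\mathbf 1_{\neg\Ec}]\|\leq O(1)\cdot\Pr(\neg\Ec)=O((d/m)^2)=o(d/m)$, and no tail analysis is needed.
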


\noindent
In the proof of Theorem \ref{t:lower}, we develop a new lower bound for
the inverse moment of the Binomial distribution (Lemma
\ref{l:inverse-moment}), by using anti-concentration of measure via
the Paley-Zygmund inequality, which should be of independent interest.
To illustrate Theorem \ref{t:lower}, consider a sketch of size $m=O(d\log
d)$. This is sufficient to ensure that approximate leverage score
sampling achieves the subspace embedding property with relative error
$O(1)$. In particular, it implies that for any $\gamma=\Theta(1)$,
the inverse covariance matrix estimator $(\gamma\A^\top\S^\top\S\A)^{-1}$ is
with high probability an $O(1)$-approximation of
$(\A^\top\A)^{-1}$. Our lower bound implies that the inversion bias of
any such estimator is $\Omega(1/\log d)$, which is up to logarithmic
factors the same as the approximation achieved by a single
estimator.

Thus, Theorem \ref{t:lower} shows that when $m=O(d\log d)$, averaging i.i.d.\ copies of the
sketched inverse covariance estimator obtained from approximate
leverage score sampling may lead to only $\Omega(1/\log d)$ factor
improvement in the approximation, which is merely inverse-logarithmic in $d$. In
contrast, Theorem \ref{t:lsse} shows that, when using our new \less\
embeddings with the same sketch size and time complexity, averaging
i.i.d.\ copies of the sketched inverse covariance
reduces the approximation error by a factor of  $O(1/\sqrt d)$,
which is inverse-polynomial in $d$ and thus far superior to what is achievable
by approximate leverage score sampling.

\section{Our techniques: Structural conditions for near-unbiasedness}
\label{s:structural}
In order for our analysis of inversion bias to apply to a wide range
of sketching techniques, we give two key structural
conditions for a random sketching matrix $\S$ that are sufficient to achieve provably
small inversion bias. The first is the subspace
embedding property discussed in Remark~\ref{r:subspace-embedding}, 
which we now use as one of the key conditions needed in our analysis.

\begin{condition}[Subspace embedding]\label{cond1}
The (sketching) matrix $\S\in\R^{m\times n}$
satisfies the subspace embedding condition with $\eta\geq 0$
for a matrix $\A\in\R^{n\times d}$, if
$\A^\top\S^\top\S\A\approx_{\eta}\A^\top\A$.
\end{condition}

\noindent
The second structural condition for small inversion bias is a property
of each individual row of~$\S$. We use an $n$-dimensional random
row vector $\x^\top$ to denote the marginal distribution of a row
of $\S$ (after scaling by $\sqrt m$).
This condition represents a key novelty in our analysis.
\begin{condition}[Restricted Bai-Silverstein]\label{cond2}
The random vector $\x\in\R^n$ satisfies the Restricted Bai-Silverstein
condition with $\alpha>0$ for a matrix $\A\in\R^{n\times d}$, if
  $\Var\!\big[\x^\top\B\x\big]\leq \alpha\cdot\tr(\B^2)$
  for all p.s.d.\ matrices $\B$ such that $\B=\P\B\P$, where $\P$ is the
  projection onto the column span of $\A$.  
\end{condition}

\noindent
Based on these two structural conditions, we show the following
result, which we use to prove both Proposition \ref{t:subgaussian} and
Theorem \ref{t:lsse}. In this result, we will refer to an $m\times n$ sketching matrix $\S_m$, indexed by the number of rows $m$.

\begin{theorem}[Structural conditions for near-unbiasedness]\label{t:structural}
Fix $\A\in\R^{n\times d}$ with rank~$d$ and let $\S_m$
consist of $m\geq8d$ i.i.d.~rows distributed as $\frac1{\sqrt
  m}\x^\top$, where $\E[\x\x^\top]=\I_n$. 
Suppose that $\S_{m/3}$ satisfies
Condition \ref{cond1} (subspace embedding) for $\eta=1/2$,  with probability
$1-\delta/3$, where $\delta\leq 1/m^3$. 
Suppose also that
$\x$ satisfies Condition~\ref{cond2} (Restricted Bai-Silverstein) with
some $\alpha\geq 1$.
Then $(\frac m{m-d}\A^\top\S_m^\top\S_m\A)^{-1}$ is an
$(\epsilon,\delta)$-unbiased estimator of $(\A^\top\A)^{-1}$ for
$\epsilon=O(\alpha\sqrt d/m)$.
\end{theorem}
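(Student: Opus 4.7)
My plan is to first reduce to the isotropic case by replacing $\A$ with $\A(\A^\top\A)^{-1/2}$; this change of variables preserves the projection $\P$ onto $\Span(\A)$ so Condition~\ref{cond2} is unaffected, and the conclusion transforms congruently. Assume $\A^\top\A=\I_d$, write $\u_i=\A^\top\x_i\in\R^d$, and set $\Sigmabh:=\A^\top\S_m^\top\S_m\A=\frac{1}{m}\sum_{i=1}^m\u_i\u_i^\top$, which has mean $\I_d$. To define the high-probability event $\Ec$, I would partition the $m$ rows into three disjoint batches of size $m/3$ and let $\Ec$ be the intersection of the $\eta=1/2$ subspace-embedding events for each batch; by Condition~\ref{cond1} and a union bound, $\Pr[\Ec]\geq 1-\delta$. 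On $\Ec$, every leave-one-out $\Sigmabh_{-i}=\Sigmabh-\frac{1}{m}\u_i\u_i^\top$ still contains two complete batches, so $\Sigmabh_{-i},\Sigmabh\succeq c\,\I_d$ for an absolute $c>0$, which delivers the almost-sure bound $\Sigmabh^{-1}\preceq O(1)\I_d$ required by Definition~\ref{d:unbiased-estimator}.

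\textbf{Sherman-Morrison identity.} With $\gamma_i=\frac{1}{m}\u_i^\top\Sigmabh_{-i}^{-1}\u_i$, Sherman-Morrison gives $\u_i^\top\Sigmabh^{-1}=\u_i^\top\Sigmabh_{-i}^{-1}/(1+\gamma_i)$, and summing $\I_d=\frac{1}{m}\sum_i\u_i\u_i^\top\Sigmabh^{-1}$ produces
\begin{equation*}
\I_d=\frac{1}{m}\sum_{i=1}^m\frac{\u_i\u_i^\top\,\Sigmabh_{-i}^{-1}}{1+\gamma_i}.
\end{equation*}
Taking expectation conditional on $\Ec$, using row exchangeability, and absorbing a small unconditioning error into a remainder $R_0$ of operator norm $O(\alpha\sqrt d/m)$ (using $\delta\leq 1/m^3$ together with boundedness of the integrand on $\Ec$), this reduces to
\begin{equation*}
\I_d=\E\Big[\frac{\u_1\u_1^\top\,\Sigmabh_{-1}^{-1}}{1+\gamma_1}\,\Big|\,\Ec\Big]+R_0.
\end{equation*}

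\textbf{Restricted Bai-Silverstein and solving the fixed point.} Conditional on $\Sigmabh_{-1}$, $\gamma_1=\frac{1}{m}\x_1^\top\B\x_1$ with $\B=\A\Sigmabh_{-1}^{-1}\A^\top=\P\B\P$, so by Condition~\ref{cond2}, $\E[\gamma_1\mid\Sigmabh_{-1}]=\frac{1}{m}\tr(\Sigmabh_{-1}^{-1})=:t$ and $\Var[\gamma_1\mid\Sigmabh_{-1}]\leq\frac{\alpha}{m^2}\tr(\Sigmabh_{-1}^{-2})=O(\alpha d/m^2)$ on $\Ec$. Expanding $\frac{1}{1+\gamma_1}=\frac{1}{1+t}-\frac{\gamma_1-t}{(1+t)(1+\gamma_1)}$ and using independence of $\u_1$ and $\Sigmabh_{-1}$ (so $\E[\u_1\u_1^\top\mid\Sigmabh_{-1}]=\I_d$) turns the above identity into
\begin{equation*}
\I_d=\E\Big[\frac{\Sigmabh_{-1}^{-1}}{1+t}\,\Big|\,\Ec\Big]-R_1+R_0,
\end{equation*}
where for any unit $v\in\R^d$, $v^\top R_1 v$ reduces (up to bounded factors) to $\mathrm{Cov}(\gamma_1,(v^\top\u_1)^2)$. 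The crux of the proof is the sharp bound on $R_1$: since $\gamma_1$ and $\u_1\u_1^\top$ share the same random vector $\u_1$, this covariance is not directly controlled by Condition~\ref{cond2}. The resolution is that $(v^\top\u_1)^2=\x_1^\top(\A vv^\top\A^\top)\x_1$ is itself a restricted quadratic form (since $\A vv^\top\A^\top=\P(\A vv^\top\A^\top)\P$), so Condition~\ref{cond2} applies and gives $\Var[(v^\top\u_1)^2]\leq\alpha\|v\|^4=\alpha$. Cauchy-Schwarz with the variance bound on $\gamma_1$ then delivers $|v^\top R_1 v|=O(\alpha\sqrt d/m)$. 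Finally, converting $\E[\Sigmabh_{-1}^{-1}\mid\Ec]$ back to $\E[\Sigmabh^{-1}\mid\Ec]$ via an $O(1/m)$ Sherman-Morrison correction and solving the resulting scalar fixed point (whose leading value is $t\approx d/(m-d)$) yields $\E[\Sigmabh^{-1}\mid\Ec]=\frac{m}{m-d}\I_d+O(\alpha\sqrt d/m)\,\I_d$, which together with the a.s.\ bound from the first paragraph establishes $(\epsilon,\delta)$-unbiasedness with $\epsilon=O(\alpha\sqrt d/m)$.
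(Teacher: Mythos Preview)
Your overall architecture matches the paper's: the three-batch event, the leave-one-out Sherman--Morrison identity, and the use of Condition~\ref{cond2} to control the fluctuation of $\gamma_1$ around its conditional mean $t=\frac{1}{m}\tr(\Sigmabh_{-1}^{-1})$ are all exactly the paper's ingredients, and your bound on $R_1$ via Cauchy--Schwarz together with the observation that $(v^\top\u_1)^2=\x_1^\top(\A vv^\top\A^\top)\x_1$ is itself a restricted quadratic form is correct (it is the paper's $T_2\cdot T_3$ estimate).

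There is, however, a genuine gap in your ``solving the resulting scalar fixed point'' step. After bounding $R_1$ you arrive at $\I_d\approx\E\big[\Sigmabh_{-1}^{-1}/(1+t)\,\big|\,\Ec\big]$ with $t$ \emph{random}, and you then assert that this yields $\E[\Sigmabh_{-1}^{-1}\mid\Ec]\approx\frac{m}{m-d}\I_d$. But pulling the random $t$ outside the expectation requires controlling the fluctuation of $t$ itself, i.e.\ showing $\Var_{\Ec}\!\big[\tr(\Sigmabh_{-1}^{-1})\big]=O(d)$. Condition~\ref{cond2} gives you only the \emph{conditional} variance $\Var[\gamma_1\mid\Sigmabh_{-1}]$; the unconditional fluctuation of $t$ is a separate statement about the stability of the trace of the inverse under perturbation of \emph{each} of the $m-1$ remaining rows, and nothing in your outline addresses it. The paper supplies this via a dedicated lemma: writing $\tr(\Sigmabh_{-1}^{-1})-\E[\tr(\Sigmabh_{-1}^{-1})]$ as a telescoping sum of martingale differences $(\E_j-\E_{j-1})[\tr(\Sigmabh_{-1}^{-1})]$, bounding each difference through a leave-\emph{two}-out resolvent $\Sigmabh_{-ij}^{-1}$, and applying Burkholder's inequality to obtain variance $O(d)$. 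This is also the reason the event $\Ec$ uses \emph{three} batches rather than two: for every pair $(i,j)$ one batch must remain independent of both $\x_i$ and $\x_j$ so that $\Sigmabh_{-ij}^{-1}\preceq O(1)\I_d$ holds on the conditioning event.

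A smaller remark: your description of $v^\top R_1 v$ as ``$\mathrm{Cov}(\gamma_1,(v^\top\u_1)^2)$ up to bounded factors'' hides the random $\Sigmabh_{-1}^{-1}$ sitting between the two copies of $\u_1$. The correct Cauchy--Schwarz (as in the paper) splits into a fourth-moment factor $\E[(v^\top\u_1)^4]^{1/4}$ and a factor $\E[(\u_1^\top\Sigmabh_{-1}^{-1}v)^4]^{1/4}$, each handled by Condition~\ref{cond2} with the appropriate rank-one $\B$; this is easily repaired, but the trace-fluctuation step above is the substantive missing piece.
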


\noindent
The proof of Theorem \ref{t:structural} adapts and extends techniques
for analyzing the limiting Stieltjes transform for high-dimensional
random matrices in the so-called Marchenko-Pastur regime 
(also called the proportional or mean-field limit). 
This
regime arises if we let $n$, $m$ and $d$ all go 
to infinity and let the ratio $m/d$ converge to a fixed constant
larger than unity. Crucially, our analysis is non-asymptotic, and it is
not restricted to the constant aspect ratio between the sketch size
and the dimension. Further, while classical random matrix theory
analysis considers matrix resolvents, which take the form
$(\gamma\A^\top\S_m^\top\S_m\A+z\I)^{-1}$ for $z,\gamma\neq 0$, and are
well-defined with full probability, we consider the case of $z=0$ where
the matrix in question may be undefined with positive probability.
We address this by defining a high probability event which ensures that the sketch $(\frac
m{m-d}\A^\top\S_m^\top\S_m\A)^{-1}$ is well-defined and bounded, while
preserving enough of the independence structure in the conditional
distribution for the expectation analysis to go~through. Specifically,
we split the sketch into three parts, and we condition on the event that each part
satisfies the subspace embedding property. This way, for any pair of
rows, there is a part of the sketch that ensures invertibility while
being independent from the two rows, which is important for the analysis.

\paragraph{Subspace embedding condition.}
Our first structural condition for small inversion bias
(Condition~\ref{cond1})
is a variant of the subspace embedding property, which is standard in the sketching
literature. 
In particular, this condition immediately implies that $(\frac
m{m-d}\A^\top\S_m^\top\S_m\A)^{-1}$ is with probability~$1-\delta$ an
$O(1)$-approximation of $(\A^\top\A)^{-1}$. For sub-gaussian sketches
this is known to hold with sketch size $O(d+\log(1/\delta))$ \cite{nelson2014lower}. We prove this
for \less\ of size $O(d\log(d/\delta))$.
\begin{lemma}[Subspace embedding for \less]\label{l:subspace-embedding}
Suppose that $\S$ is a \less\ embedding of size $m$ for a rank $d$ matrix $\A\in\R^{n\times d}$.
There is $C=O(1)$ such that if  $m\geq
    Cd\log(d/\delta)/\epssub^2$ for $\epssub\in(0,1)$, then the sketch $\A^\top\S^\top\S\A$ is
    an $(\epssub,\delta)$-approximation of $\A^\top\A$.
  \end{lemma}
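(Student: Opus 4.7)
The plan is to proceed via a standard matrix concentration argument, specialized to the LESS sketch. First, I would reduce to the case of a matrix with orthonormal columns. Let $\U \in \R^{n\times d}$ be an orthonormal basis of the column span of $\A$ (say from a thin SVD), with rows $\u_j^\top$, so the leverage scores satisfy $l_j = \|\u_j\|^2$. Since $\A^\top\S^\top\S\A \approx_\eta \A^\top\A$ is equivalent to $\U^\top\S^\top\S\U \approx_\eta \I_d$, it suffices to prove the latter. Writing the $i$-th row of $\sqrt m\,\S$ as $\z_i^\top = (\x_i \circ \xib_i)^\top$, we have $\U^\top\S^\top\S\U = \frac{1}{m}\sum_{i=1}^m \Y_i$ with $\Y_i = (\U^\top \z_i)(\U^\top \z_i)^\top$, an independent sum of p.s.d.\ rank-one matrices.

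Next I would verify the expectation. Conditioning on $\xib_i$ and using $\E[\x_i\x_i^\top]=\I_n$ gives $\E[\z_i\z_i^\top \mid \xib_i] = \diag(\xib_i \circ \xib_i)$. Since $b_{i,j} \sim \binom(d,p_j)$ has mean $dp_j$, we have $\E[\xib_{i,j}^2] = \E[b_{i,j}/(dp_j)] = 1$, so $\E[\Y_i] = \U^\top\I_n\U = \I_d$, and the problem reduces to controlling $\|\frac{1}{m}\sum_i \Y_i - \I_d\|$.

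The core step is matrix Chernoff after truncation, and this is where the leverage-score sparsifier plays its essential role. Conditionally on $\xib_i$, $\|\Y_i\| = \z_i^\top\U\U^\top \z_i = \x_i^\top\M_i\x_i$, where $\M_i = \Xi_i\U\U^\top\Xi_i$ and $\Xi_i = \diag(\xib_i)$. Crucially, the sparsifier design ensures
\begin{equation*}
\tr(\M_i) = \sum_{j=1}^n \xib_{i,j}^2\,l_j = \sum_j \frac{b_{i,j}\,l_j}{d\,p_j} \le C \sum_j b_{i,j} = Cd,
\end{equation*}
using $p_j \ge l_j/(Cd)$; since $\M_i$ is p.s.d., also $\|\M_i\|_{\mathrm{op}}, \|\M_i\|_F \le Cd$. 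Applying the Hanson-Wright inequality to the sub-gaussian quadratic form $\x_i^\top\M_i\x_i$, with a union bound over $i \in [m]$, yields $\|\Y_i\| \le R := C'd\log(m/\delta)$ simultaneously for all $i$, with probability at least $1-\delta/2$.

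Finally, I would truncate $\Y_i' = \Y_i\,\one[\|\Y_i\| \le R]$, check that the truncation bias $\|\E[\Y_i'] - \I_d\|$ is at most $\eta/2$ via the sub-exponential tail of $\Y_i$, and apply matrix Chernoff to the bounded p.s.d.\ matrices $\Y_i'$: choosing $m \ge C''d\log(d/\delta)/\eta^2$ then yields $\|\frac{1}{m}\sum_i \Y_i' - \I_d\| \le \eta/2$ with probability at least $1-\delta/2$, and combining these events proves the claim. The main obstacle is the additional polylogarithmic factor that the unbounded sub-gaussian entries of $\x_i$ introduce through the truncation step: a direct matrix Chernoff produces a spurious $\log(m/\delta)$, and obtaining the clean $\log(d/\delta)$ dependence stated in the lemma requires either a matrix Bernstein bound adapted to sub-exponential summands, or a more careful second-moment analysis showing that $\|\sum_i \E[\Y_i^2]\|$ scales linearly in $m$ rather than in $mR$.
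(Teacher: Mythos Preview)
Your reduction to orthonormal $\U$, the decomposition into i.i.d.\ rank-one summands, and the use of Hanson--Wright on the quadratic form $\x^\top\M_i\x$ all match the paper. You also correctly diagnose that truncation plus matrix Chernoff loses a logarithm, and that the right fix is the subexponential matrix Bernstein inequality (Tropp's Theorem~6.2); this is precisely the route the paper takes.

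The ingredient you have not yet identified is how to obtain the moment bounds that feed that Bernstein inequality. Your deterministic observation $\tr(\M_i)\le Cd$ (hence $\|\M_i\|,\|\M_i\|_F\le Cd$) only yields the tail $\Pr\{\z^\top\U\U^\top\z\ge t\}\le\exp(-ct/d)$ for $t\ge Cd$; integrating this gives $\|\E[(\Y_i-\I)^p]\|\le p!\,(O(d))^p$, which in the subexponential Bernstein translates to $\sigma^2=O(d^2/m)$ rather than $O(d/m)$ and costs an extra factor of $d$ in the sketch size. The paper sharpens the tail to $\exp\bigl(-c(\sqrt t + t/d)\bigr)$ by \emph{additionally} applying matrix Chernoff to the random matrix $\U^\top\Xi_i^2\U=\sum_{t=1}^d\frac{\u_{s_t}\u_{s_t}^\top}{dp_{s_t}}$, showing $\|\M_i\|\le\sqrt t$ with probability $1-\exp(-c\sqrt t)$ rather than merely $\|\M_i\|\le Cd$. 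The $\sqrt t$ part of the resulting tail is what makes the moment bound $\|\E[(\Y_i-\I)^p]\|\le\frac{p!}{2}(Cd)^{p-1}$ go through in the regime $p<d$, while the $t/d$ part handles $p\ge d$. So the leverage-score sparsifier is used twice: once deterministically through your trace bound, and once stochastically through concentration of $\U^\top\Xi_i^2\U$. Your alternative suggestion of a careful second-moment analysis would indeed give $\|\E[\Y_i^2]\|=O(d)$ (via the restricted Bai--Silverstein argument), but paired with the truncation level $R=O(d\log(m/\delta))$ in bounded Bernstein it still leaves an extra $\log$ in the constant-$\eta$ regime.
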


  \noindent
  The subspace embedding guarantee for \less\ embeddings
is as good as that for existing fast sketching methods.  However, the analysis differs
from the ones used for either data-aware leverage score sampling or for data-oblivious
sparse sketches. We show the result by
deriving a subexponential bound on the matrix moments of a \less\
embedding (Lemma \ref{l:subexponential-moments}), relying on a novel variant of the 
Hanson-Wright concentration inequality for quadratic forms based on
orthogonal projection matrices
(Lemma~\ref{l:restricted-hanson-wright}). We then use this to invoke a
matrix Bernstein inequality for random matrices with subexponential
moments \cite[Theorem~6.2]{tropp2012user}.

\paragraph{Restricted Bai-Silverstein condition.}
Our second structural condition for small inversion bias (Condition~\ref{cond2}) is not 
commonly seen in sketching, but we expect that it will be of broader
interest in adapting high-dimensional random matrix theory to
RandNLA~\cite{precise-expressions,surrogate-design,DM21_NoticesAMS}.
It is based on the classical inequality of Bai and Silverstein \cite{bai2010spectral}
which bounds the deviation of a random quadratic form $\x^\top\B\x$
from its mean. We call it the \emph{Restricted Bai-Silverstein condition}
because, unlike in the classical version, we only require the inequality
to hold for matrices $\B$ that are restricted to the subspace spanned by the
columns of $\A$. By contrast, in classical random matrix theory it is
often assumed that the the following (unrestricted) condition holds.
\begin{condition}[Bai-Silverstein]\label{cond:classical}
Random vector $\x\in\R^n$ satisfies the (unrestricted) Bai-Silverstein
condition with $\alpha>0$, if
  $\Var\!\big[\x^\top\B\x\big]\leq \alpha\cdot\tr(\B^2)$
  for all $n\times n$ p.s.d.\ matrices $\B$.  
\end{condition}

\noindent
 When the random vector $\x$ is $O(1)$-sub-gaussian, then Condition
 \ref{cond:classical} is satisfied with $\alpha=O(1)$, as a
 consequence of the original inequality of \cite{bai2010spectral}.%
\footnote{The
original lemma applies more broadly to higher moments; we cite only
the case relevant to our analysis.}

\begin{lemma}[Bai-Silverstein inequality]\label{l:classical-bai-silverstein}
Let $\x$ have $n$ independent entries with 
mean zero and unit variance such that $\E [x_i^4] =O(1)$. Then,
Condition \ref{cond:classical} is satisfied with $\alpha=O(1)$.
\end{lemma}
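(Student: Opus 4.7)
The plan is a direct second-moment computation, since Condition~\ref{cond:classical} is a bound on the variance of a quadratic form in independent coordinates. Write $\x^\top\B\x=\sum_{i,j}B_{ij}x_ix_j$ and, using independence, $\E[x_i]=0$, and $\E[x_i^2]=1$, compute
\[
\E[\x^\top\B\x]=\sum_iB_{ii}=\tr(\B).
\]

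Next, I would expand $\E[(\x^\top\B\x)^2]=\sum_{i,j,k,l}B_{ij}B_{kl}\E[x_ix_jx_kx_l]$ and enumerate the index patterns for which $\E[x_ix_jx_kx_l]$ is nonzero. By independence and the moment assumptions, the only contributions come from (a) $i=j=k=l$, giving $\sum_iB_{ii}^2\,\E[x_i^4]$; (b) the three ``two-pair'' configurations $\{i=j,k=l\}$, $\{i=k,j=l\}$, $\{i=l,j=k\}$ with distinct pairs, which contribute $\sum_{i\neq k}B_{ii}B_{kk}+2\sum_{i\neq j}B_{ij}^2$ after using symmetry of $\B$.

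Subtracting $\E[\x^\top\B\x]^2=\sum_iB_{ii}^2+\sum_{i\neq k}B_{ii}B_{kk}$ yields
\[
\Var[\x^\top\B\x]=\sum_iB_{ii}^2\bigl(\E[x_i^4]-1\bigr)+2\sum_{i\neq j}B_{ij}^2.
\]
Since $\tr(\B^2)=\sum_iB_{ii}^2+\sum_{i\neq j}B_{ij}^2$, bounding $\E[x_i^4]-1$ by an absolute constant (which holds by the hypothesis $\E[x_i^4]=O(1)$) gives the desired $\Var[\x^\top\B\x]\leq\alpha\cdot\tr(\B^2)$ with $\alpha=O(1)$.

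There is no real obstacle here; the only step requiring care is the careful enumeration of the nonzero fourth-moment terms, in particular the factor $2$ arising from the two distinct ``cross'' pairings (and noting that symmetry of the p.s.d.\ matrix $\B$ collapses them into $2\sum_{i\neq j}B_{ij}^2$). P.s.d.\ is stronger than needed; symmetry alone suffices, but the lemma is stated in the p.s.d.\ form used by Condition~\ref{cond:classical}.
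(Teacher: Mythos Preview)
Your proposal is correct. The paper does not give a standalone proof of this lemma (it is stated as the classical result of Bai and Silverstein), but the variance identity you derive---equivalently written as $\Var[\x^\top\B\x]=\sum_i(\E[x_i^4]-3)B_{ii}^2+2\tr(\B^2)$---is exactly the formula the paper cites and uses in its proof of the generalization, Theorem~\ref{t:generalized-bai-silverstein}, so your approach matches what the paper relies on.
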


\section{Restricted Bai-Silverstein inequality}

The Bai-Silverstein inequality from  Lemma
\ref{l:classical-bai-silverstein} does not directly apply to any of the
fast sketching methods discussed above (see 
Appendix~\ref{s:lower-bai-silverstein} for lower bounds). 
However, we state and prove a generalization of this lemma, which
allows us to show the Restricted Bai-Silverstein condition
(Condition~\ref{cond2}) for our new \less\ embeddings.   

To provide some intuition behind this result, consider the variance
term $\Var[\x^\top\B\x]$ which appears in the Restricted Bai-Silverstein condition,
 where $\frac1{\sqrt m}\x^\top$ represents a random row vector of the sketching matrix
 $\S$. The condition requires that just this one row vector carries
 enough randomness to produce an accurate sketch of the trace of a quadratic
 form $\B$. This is in contrast to the subspace embedding condition,
 which uses the joint randomness of all the rows of $\S$.
Lemma \ref{l:classical-bai-silverstein} achieves this by
enforcing a fourth-moment bound on all 
of the entries of $\x$. Suppose that we sparsify this
vector, following the strategy of sparse embedding matrices, by
multiplying each entry of $\x$ with an independent scaled Bernoulli
variable, obtaining
$\sqrt{\frac ms}\, b_ix_i$ for $b_i\sim\mathrm{Bernoulli}(\frac sm)$, where $s\ll
m$ is the sparsity level and $i$ is the entry index.%
\footnote{Most commonly studied sparse embedding matrices have
  non-independent entries.  However, the i.i.d.\ variant we consider
  offers an equivalent guarantee for the subspace embedding property. See \cite{cohen2016nearly} and
  Appendix \ref{s:lower-bai-silverstein}.}
This
preserves the mean and variance assumptions from
Lemma~\ref{l:classical-bai-silverstein}, but as long as $s=o(m)$, it
violates the fourth-moment assumption. Thus, it is natural to ask
whether we can relax this fourth-moment assumption. It turns out that,
if we do the sparsification in a data-oblivious manner, then the
answer is no, since the random vector 
may not capture most of the relevant directions in the matrix~$\B$
(see Appendix~\ref{s:lower-bai-silverstein}). Importantly, this can
occur even when the rows of the sketch \emph{together} 
capture all of the directions, ensuring the subspace embedding
property, which is already the case when we set the sparsity level to
be as small as $s=O(\log d)$. In other words, there is a wide gap between the
sparsity needed to preserve the Bai-Silverstein inequality,
$s=\Omega(m)$, and sparsity needed to ensure the subspace 
embedding. 

Crucially, Theorem \ref{t:structural} does not require the
Bai-Silverstein inequality to hold for all $n\times n$ p.s.d.\ quadratic forms
$\B$. Rather, it restricts the family of quadratic forms 
to those that lie within the column-span of the $n\times d$ data matrix $\A$. In
particular, this restriction implies that the matrix $\B$ is low-rank (it
has at most rank $d$) and its important directions are captured by the
leverage scores of $\A$. We take advantage of
this additional information to relax the fourth-moment assumptions,
obtaining the following generalization of Lemma
\ref{l:classical-bai-silverstein}, which should be of independent interest. 

\begin{theorem}[Restricted Bai-Silverstein inequality]\label{t:generalized-bai-silverstein}
  Fix a matrix $\A\in\R^{n\times d}$ with rank $d$ and leverage scores $l_i$, and let $\x$
have $n$ independent entries with 
mean zero and unit variance such that $\E x_i^4 \leq C/l_i$.
Then, $\x$ satisfies Condition \ref{cond2} with $\alpha=C+2$ for matrix $\A$.
\end{theorem}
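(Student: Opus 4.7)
The plan is to expand $\Var[\x^\top\B\x]$ using independence and the fourth-moment decomposition of a quadratic form in independent variables, and then show that the leverage-score weighted diagonal sum $\sum_i B_{ii}^2/l_i$ is controlled by $\tr(\B^2)$ whenever $\B = \P\B\P$. This second step is where the restriction to the column span of $\A$ enters decisively and is where I expect the main difficulty to lie.

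First I would write $\E[\x^\top\B\x] = \sum_i B_{ii}\E[x_i^2] = \tr(\B)$ and, using independence and symmetry of $\B$, expand
\begin{align*}
\E[(\x^\top\B\x)^2] = \sum_i B_{ii}^2\,\E[x_i^4] + \sum_{i\neq k} B_{ii}B_{kk} + 2\sum_{i\neq j} B_{ij}^2,
\end{align*}
which after rearrangement gives
\begin{align*}
\Var[\x^\top\B\x] = \sum_i B_{ii}^2\bigl(\E[x_i^4]-3\bigr) + 2\,\tr(\B^2).
\end{align*}
The $2\tr(\B^2)$ part already contributes the ``$+2$'' in $\alpha = C+2$, so everything reduces to controlling the first sum by $C\tr(\B^2)$.

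Next I would establish the key inequality $\sum_i B_{ii}^2/l_i \leq \tr(\B^2)$ for every p.s.d.\ $\B$ with $\B = \P\B\P$, where $l_i = P_{ii}$. The idea is to set $\v_i = \P\e_i$, so that $\|\v_i\|^2 = l_i$ and, using $\B = \P\B\P$, $B_{ii} = \v_i^\top\B\v_i$. By Cauchy--Schwarz,
\begin{align*}
B_{ii}^2 = (\v_i^\top\B\v_i)^2 \leq \|\v_i\|^2\cdot\|\B\v_i\|^2 = l_i\cdot\v_i^\top\B^2\v_i,
\end{align*}
so $B_{ii}^2/l_i \leq \v_i^\top\B^2\v_i$ on indices with $l_i>0$ (if $l_i=0$ then $\v_i=0$ and $B_{ii}=0$, so the term is zero and can be dropped). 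Summing and using $\sum_i \v_i\v_i^\top = \P\I\P = \P$ together with $\B\P = \P\B = \B$, I get
\begin{align*}
\sum_i B_{ii}^2/l_i \leq \tr\bigl(\B^2\textstyle\sum_i \v_i\v_i^\top\bigr) = \tr(\B^2\P) = \tr(\B^2).
\end{align*}

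Finally, combining the two pieces, I split the diagonal sum into indices with $\E[x_i^4]\leq 3$ (which contribute non-positively and can be discarded) and indices with $\E[x_i^4]>3$, for which $\E[x_i^4]-3 < \E[x_i^4] \leq C/l_i$. Hence
\begin{align*}
\sum_i B_{ii}^2\bigl(\E[x_i^4]-3\bigr) \leq C\sum_i B_{ii}^2/l_i \leq C\,\tr(\B^2),
\end{align*}
and adding the $2\tr(\B^2)$ term yields $\Var[\x^\top\B\x] \leq (C+2)\tr(\B^2)$, which is exactly Condition~\ref{cond2} with $\alpha=C+2$. The main obstacle, as noted, is the leverage-weighted Cauchy--Schwarz step: without the restriction $\B=\P\B\P$ the identity $\sum_i\v_i\v_i^\top=\P$ would be replaced by $\I$, giving only $\sum_i B_{ii}^2/l_i \leq \tr(\B^2) + \tr(\B^2(\I-\P))$, and no meaningful cancellation would occur. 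It is precisely the interaction between the restriction on $\B$ and the leverage scores appearing in the fourth-moment hypothesis that makes the restricted inequality go through with a much weaker moment assumption than the classical Bai--Silverstein lemma requires.
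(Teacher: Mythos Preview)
Your proof is correct, and it is genuinely different from the paper's. Both arguments start from the same variance identity
\[
\Var[\x^\top\B\x]=\sum_i(\E x_i^4-3)B_{ii}^2+2\tr(\B^2),
\]
so the whole question reduces to showing $\sum_i B_{ii}^2/l_i\leq\tr(\B^2)$ whenever $\B=\P\B\P$. You do this by the one-line Cauchy--Schwarz step $B_{ii}^2=(\v_i^\top\B\v_i)^2\leq l_i\,\v_i^\top\B^2\v_i$ with $\v_i=\P\e_i$, then sum using $\sum_i\v_i\v_i^\top=\P$. The paper instead reformulates the inequality (after a reduction to diagonal $\M=\U^\top\B\U$) as the eigenvalue bound $\lambda_{\max}\big((\U\circ\U)^\top\diag(1/l_i)\,(\U\circ\U)\big)\leq 1$, observes that this matrix is symmetric, entrywise nonnegative, and has the all-ones vector as an eigenvector with eigenvalue $1$ (i.e., it is doubly stochastic), and then invokes the Perron--Frobenius theorem to conclude that $1$ is the top eigenvalue. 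Your route is shorter and more elementary, needs no reduction to diagonal $\M$ and no spectral machinery, and yields the same sharp constant $\alpha=C+2$; the paper's route has the advantage of exposing the doubly stochastic structure of the leverage-weighted Gram matrix $(\U\circ\U)^\top\diag(1/l_i)(\U\circ\U)$, which is a nice structural fact in its own right.
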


\noindent
By setting $\A=\I_n$, where all leverage scores are 1 and the
restriction on $\B$ is vacuous, we not only recover the statement of Lemma
\ref{l:classical-bai-silverstein}, but also our new analysis uses the
Perron-Frobenius theorem to obtain a tight constant factor in the bound
(see Appendix~\ref{s:bai-silverstein}). However, when $\A$ is a tall
matrix, then the fourth-moment assumption becomes potentially 
much more broadly applicable (for example, when the leverage scores are uniform, we only need
$\E x_i^4\leq  C\cdot n/d$). In particular, consider an 
i.i.d.\ sub-gaussian random vector $\x$ sparsified as follows:
$\x\circ \xib$, where we let $\xi_i=b_i/\sqrt{l_i}$ and 
$b_i\sim\mathrm{Bernoulli}(l_i)$. Then, the entries satisfy the
assumptions of Theorem~\ref{t:generalized-bai-silverstein}, with
expected number of non-zeros equal to $d$. Note that this is different
than the data-oblivious sparsification discussed above, since the
entries of the vector corresponding to large leverage scores are less
likely to be zeroed-out than others. This form of sparsification
is nearly equivalent to the one we use for our \less\ embeddings (see Definition
\ref{d:leverage-score-sparsifier}; our
analysis can be applied to either variant), except that it leads to a
non-deterministic level of sparsification. In
Appendix~\ref{s:less-bai-silverstein} we prove the Restricted Bai-Silverstein 
condition with $\alpha=O(1)$ for a leverage score sparsified vector
constructed as in Definition~\ref{d:leverage-score-sparsifier}, which has
non-independent entries.

\section{Conclusions}
We analyzed the phenomenon of inversion bias in sketching-based
estimation tasks involving the inverse covariance matrix. Inversion
bias is a significant bottleneck in methods that use parallel and
distributed averaging. We showed that certain classical sketching
methods (such as sub-gaussian sketches) have small inversion bias,
while many algorithmically efficient sketches (such as leverage score
sampling) may not provide such a guarantee. Finally, we developed a
new efficient sketching method, called LEverage Score Sparsified
(\less) embeddings, which has small inversion bias and
its computational cost is nearly-linear in the input size.

Estimation of the inverse covariance matrix and its various linear
functionals is motivated by a rich body of literature in statistics,
data science, numerical optimization, machine learning, signal
processing, etc., which we summarized in detail in
Section~\ref{s:overview}. Here, we additionally remark that
the $(\epsilon,\delta)$-approximation guarantee we provide for the averaged estimates of  
the inverse covariance (see Corollaries \ref{c:subgaussian} and \ref{c:less})
immediately implies corresponding approximation guarantees for linear
functionals of the inverse covariance in numerous tasks. In a
distributed environment, one can use this to build a system for
querying such functionals, by aggregating coarse estimates computed locally from
$q$ sketches to produce an improved global estimate with
minimal communication cost. We
illustrate this here for a family of linear functionals of the form
$\tr\,\C(\A^\top\A)^{-1}$, parameterized 
by any p.s.d.\ matrix $\C$, as motivated by applications in statistical
inference (see Section \ref{s:overview}). The claim follows from
Corollary \ref{c:less} by letting $\Q_i=(\frac m{m-d}\A^\top\S_i^\top\S_i\A)^{-1}$.
\begin{corollary}[Querying linear functionals]
For any matrix $\A\in\R^{n\times d}$ and $\epsilon,\delta\in(0,1)$, we can use LESS
embeddings of size $m=O(d\log(d/\epsilon\delta)+\sqrt d/\epsilon)$ to
construct $\Q_1,...,\Q_q\in\R^{d\times d}$ in parallel 
time $O(\nnz(\A)\log n + md^2)$, where $q=
   O(m\log^2(d/\delta))$, so that with probability $1-\delta$:
\begin{align*}
  \text{For all p.s.d. matrices $\C\in\R^{d\times d}$,}\qquad
  \frac1q\sum_{i=1}^q\tr\,\C\Q_i\, 
  \approx_\epsilon\, \tr\,\C(\A^\top\A)^{-1}.
  \end{align*}
\end{corollary}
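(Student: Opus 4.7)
The plan is to reduce the corollary directly to Corollary~\ref{c:less}, exploiting the fact that an $(\epsilon,\delta)$-approximation in the positive semi-definite (Loewner) order automatically transfers to any trace inner product against a p.s.d.\ matrix, yielding the uniformity in $\C$ for free.

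First I would invoke Corollary~\ref{c:less} with the prescribed sketch size $m$ and number of copies $q$. Setting $\Q_i = (\tfrac{m}{m-d}\A^\top\S_i^\top\S_i\A)^{-1}$ and $\bar\Q = \frac1q\sum_{i=1}^q \Q_i$, that corollary states that with probability at least $1-\delta$,
\[
(\A^\top\A)^{-1}/(1+\epsilon)\ \preceq\ \bar\Q\ \preceq\ (1+\epsilon)\,(\A^\top\A)^{-1}.
\]
(The logarithmic factor $\log(d/\epsilon\delta)$ in our $m$ subsumes the $\log(q/\delta)$ required by Corollary~\ref{c:less} since $q=O(m\log^2(d/\delta))$ is polynomial in $d,1/\epsilon,1/\delta$.) The event in question is a single event depending only on $\S_1,\ldots,\S_q$, and in particular is independent of any choice of test matrix $\C$.

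Second I would condition on this event and invoke the standard monotonicity of the trace under the Loewner order: if $\M_1 \preceq \M_2$ and $\C \succeq 0$, then $\tr(\C\M_1)\le \tr(\C\M_2)$, which follows by writing $\tr(\C\M_j)=\tr(\C^{1/2}\M_j\C^{1/2})$ and noting $\C^{1/2}\M_1\C^{1/2}\preceq\C^{1/2}\M_2\C^{1/2}$. Applying this to both inequalities from the Loewner sandwich gives, simultaneously for every p.s.d.\ $\C$,
\[
\tr\,\C(\A^\top\A)^{-1}/(1+\epsilon)\ \leq\ \tr\,\C\bar\Q\ \leq\ (1+\epsilon)\,\tr\,\C(\A^\top\A)^{-1},
\]
which is exactly $\frac{1}{q}\sum_i \tr\,\C\Q_i \approx_\epsilon \tr\,\C(\A^\top\A)^{-1}$.

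Finally I would verify the runtime claim. Each $\Q_i$ is built by constructing a LESS embedding of size $m$ (time $O(\nnz(\A)\log n + md^2)$ including the one-time leverage score preprocessing, as noted after Definition~\ref{d:leverage-score-sparsifier}), forming $\A^\top\S_i^\top\S_i\A$, and inverting this $d\times d$ matrix, all within $O(\nnz(\A)\log n + md^2)$. Since the $q$ sketches are independent, these constructions are embarrassingly parallel, giving the stated parallel time. There is no real obstacle here beyond bookkeeping: the main content of the corollary is already carried by Corollary~\ref{c:less}, and the universal quantifier over $\C$ is a consequence of having formulated the underlying guarantee in the Loewner order rather than as a scalar approximation.
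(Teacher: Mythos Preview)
Your proposal is correct and follows exactly the approach indicated in the paper, which simply states that the claim follows from Corollary~\ref{c:less} with $\Q_i=(\frac m{m-d}\A^\top\S_i^\top\S_i\A)^{-1}$. You have filled in the two routine details the paper leaves implicit: that the Loewner sandwich transfers uniformly to $\tr(\C\,\cdot\,)$ for any p.s.d.\ $\C$ via $\C^{1/2}$-conjugation, and that the parameter bookkeeping (the circular dependence of $m$ on $\log(q/\delta)$) closes because $q$ is polynomial in $d,1/\epsilon,1/\delta$.
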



In the context of distributed optimization, our results can be directly applied to
show improved convergence guarantees, for instance, in the case of the
Distributed Iterative Hessian Sketch algorithm 
\cite{pilanci2016iterative,debiasing-second-order} and Distributed
Newton Sketch method
\cite{distributed-newton,determinantal-averaging}. Here, the quantity
of interest is of the form $(\A^\top\A)^{-1}\b$ for some vector $\b$
(where $\A^\top\A$ corresponds to the Hessian and $\b$ corresponds to
the gradient). For those methods, an $\epsilon$-approximation guarantee 
for the average of the sketched inverse covariance matrices, as in
Corollaries \ref{c:subgaussian} and \ref{c:less}, directly implies
that the iterates $\x_t$ produced by the algorithms achieve a
convergence rate of the form $\Delta_{t}\leq
O(\epsilon^t)\cdot\Delta_0$, where $\Delta_t$ 
represents distance from the optimum in the $t$-th iteration. We
illustrate this by applying Corollary \ref{c:less} 
to the existing analysis of Distributed Newton Sketch, as
outlined in Section 4 of \cite{debiasing-second-order}, obtaining an
improved linear-quadratic convergence rate for distributed empirical risk
minimization.
 \begin{corollary}[Distributed Newton Sketch]
   Consider a twice differentiable convex function of the form $f(\x)=\frac1n \sum_{i=1}^n\ell_i(\x^\top\phi_i)
   +\frac\lambda2\|\x\|^2$, where $\x\in\R^d$ and $\phi_i^\top$ is the
   $i$th row of an $n\times d$ data matrix $\Phi$. Given $\x_t$, we
   can use LESS embeddings to construct $q$ independent randomized estimates
   $\Hbh_1(\x_t),...,\Hbh_q(\x_t)$ of the Hessian $\nabla^2f(\x_t)$ in
   parallel time $O(\nnz(\Phi)\log n +md^2)$, where
   $m=O(d\log(d/\epsilon\delta)+\sqrt d/\epsilon)$ and $q=
   O(m\log^2(d/\delta))$, so that 
   \vspace{-2mm}
   \begin{align*}
\x_{t+1} &= \x_t - \frac1q\sum_{i=1}^q \Hbh_i(\x_t)^{-1}\nabla f(\x_t)\quad\text{with probability $1-\delta$ satisfies:}\\
     \|\x_{t+1}-\x^*\|&\leq \max\big\{\epsilon\cdot\sqrt\kappa\|\x_{t+1}-\x^*\|,\ \tfrac{2L}{\lambda_{\min}}\|\x_{t+1}-\x^*\|^2\big\}\quad\text{for}\quad\x^*=\argmin_\x f(\x),
   \end{align*}
\vspace{-4mm}

\noindent
 where $\kappa$, $L$, $\lambda_{\min}$ are the condition number, Lipschitz constant and smallest eigenvalue of $\nabla^2f(\x)$.
 \end{corollary}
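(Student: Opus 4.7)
The plan is to reduce the convergence statement to a direct application of Corollary~\ref{c:less}, combined with the standard Newton convergence analysis laid out in Section~4 of \cite{debiasing-second-order}.

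First I would write the regularized Hessian at $\x_t$ as an explicit Gram matrix $\nabla^2 f(\x_t) = \A^\top \A$, where $\A \in \R^{(n+d)\times d}$ stacks the scaled data rows $\sqrt{\ell_i''(\x_t^\top\phi_i)/n}\,\phi_i^\top$ on top of the regularization block $\sqrt{\lambda}\,\I_d$. Because $\lambda>0$, this $\A$ has full column rank, and $\nnz(\A) = O(\nnz(\Phi)+d)$, so invoking the algorithmic part of Corollary~\ref{c:less} on $\A$ produces each $\Hbh_i(\x_t) = \frac{m}{m-d}\A^\top\S_i^\top\S_i\A$ in the claimed parallel time $O(\nnz(\Phi)\log n + md^2)$.

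Next, with $m = O(d\log(d/\epsilon\delta)+\sqrt d/\epsilon)$ and $q = O(m\log^2(d/\delta))$, Corollary~\ref{c:less} yields that the averaged inverse
\[
\Hbt(\x_t)^{-1} := \frac1q\sum_{i=1}^q \Hbh_i(\x_t)^{-1}
\]
is an $(\epsilon,\delta)$-approximation of $(\A^\top\A)^{-1} = \nabla^2 f(\x_t)^{-1}$ in the Loewner order. Writing $\H := \nabla^2 f(\x_t)$, this is equivalent, up to an absolute constant, to the spectral bound $\|\H^{1/2}\Hbt(\x_t)^{-1}\H^{1/2} - \I\|_2 \leq \epsilon$.

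Finally, I would plug this spectral approximation into the standard decomposition
\[
\x_{t+1}-\x^* = \big(\I-\Hbt(\x_t)^{-1}\H\big)(\x_t-\x^*) + \Hbt(\x_t)^{-1}\!\int_0^1\!\big[\H-\nabla^2 f(\x^*+s(\x_t-\x^*))\big](\x_t-\x^*)\,ds,
\]
which follows from $\nabla f(\x^*)=0$ and the fundamental theorem of calculus applied to $\nabla f$. The linear term contracts by a factor of $\epsilon$ in the $\H$-norm thanks to the spectral bound, and converting back to the Euclidean norm costs the $\sqrt{\kappa}$ factor through the two-sided inequalities $\|\v\|_\H \leq \sqrt{\lambda_{\max}}\|\v\|_2$ and $\|\v\|_2 \leq \|\v\|_\H/\sqrt{\lambda_{\min}}$. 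The quadratic remainder is controlled by the $L$-Lipschitz continuity of $\nabla^2 f$ together with the uniform operator bound $\|\Hbt(\x_t)^{-1}\|_2 \leq (1+\epsilon)/\lambda_{\min}$, producing the $\tfrac{2L}{\lambda_{\min}}\|\x_t-\x^*\|^2$ term. Taking the larger of the two contributions yields the claimed linear-quadratic recursion.

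The main obstacle is not in the Newton analysis itself, which is standard and is the exact decomposition used in \cite{debiasing-second-order}, but in verifying that the $(\epsilon,\delta)$-guarantee of Corollary~\ref{c:less} translates cleanly into the required Loewner approximation of $\H^{-1}$ rather than leaving a residual inversion-bias term that would degrade the effective linear rate below $\epsilon\sqrt{\kappa}$. This is precisely what our averaging corollary is designed to do: the sketch size $m=O(d\log(d/\epsilon\delta)+\sqrt d/\epsilon)$ drives the per-sketch inversion bias below $\epsilon/\sqrt{m}$, and averaging $q=O(m\log^2(d/\delta))$ copies suppresses the residual variance to the same $\epsilon$ level in Loewner order. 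Once this has been established, the remainder is a direct substitution into the cited Distributed Newton Sketch analysis.
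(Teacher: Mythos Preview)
Your proposal is correct and follows essentially the same approach as the paper: the paper does not give a standalone proof of this corollary but states that it is obtained by applying Corollary~\ref{c:less} to the existing Distributed Newton Sketch analysis in Section~4 of \cite{debiasing-second-order}, which is exactly what you have sketched. Your additional detail of explicitly writing $\nabla^2 f(\x_t)=\A^\top\A$ by stacking the rescaled data rows with the $\sqrt{\lambda}\,\I_d$ block, and then carrying out the standard linear--quadratic Newton decomposition, fills in precisely the routine steps the paper leaves to the cited reference.
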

This result provides an improvement over the recently proposed DPP-based sketching methods of
\cite{debiasing-second-order}, which suffer no inversion bias but are
more expensive, as well as over other fast sketching methods like row
sampling \cite{distributed-newton}, which, as shown 
in this work, may indeed suffer from large inversion bias.

\subsection*{Acknowledgments}
We would like to acknowledge DARPA, IARPA, NSF, and ONR via its BRC on
RandNLA for providing partial support of this work.  Our conclusions
do not necessarily reflect the position or the policy of our sponsors,
and no official endorsement should be inferred. 

\bibliography{../liao,../asy/references,../pap}
\ifisarxiv
\bibliographystyle{alpha}
\fi

\appendix

\section{Preliminaries}

\paragraph{Notations.} In the remainder of the article, we follow the
convention of denoting scalars by lowercase, vectors by lowercase
boldface, and matrices by uppercase boldface letters. The norm $\|
\cdot \| $ is the Euclidean norm for vectors and the spectral or
operator norm for matrices, and $\| \cdot \|_F$ is the Frobenius norm
for matrices. For vector $\v \in \R^d$, we let $\| \v \|_1 : = \sum_{i=1}^d
|v_i|$ denote the $\ell_1$ norm and $\| \v\|_\infty := \max_i |v_i|$ denote the
$\ell_\infty$ norm of $\v$. We use $\lambda_{\max}(\A)$ to denote the
maximum eigenvalue of a symmetric matrix $\A$. We say $\A \preceq \B$
if and only if $\B- \A$ is positive semi-definite. We use $\A \circ
\B$ to denote the entry-wise Hadamard product of matrices or
vectors. For random vectors or matrices, we say $\A \stackrel{d}{=} \B$ if $\A$
follows the same distribution as $\B$. For positive
semi-definite (p.s.d.) matrices $\A$ and $\B$, or non-negative scalars
$a$ and $b$, we use $\A\approx_{\eta}\B$ and $a\approx_{\eta}b$ to
denote the relative error approximation (Definition
\ref{d:approximation}). The big-O notation is used to absorb constant
factors in upper bounds, where the constant only depends on other big-O
constants appearing in a given statement (thus, all constants
can be made absolute). 

\medskip

An important linear algebraic result that will be used in proving the
restricted Bai-Silverstein inequality
(Theorem~\ref{t:generalized-bai-silverstein}) is the following 
Perron-Frobenius theorem on non-negative matrices. While the most well known version of the Perron-Frobenius theorem concerns matrices with strictly positive entries, there is also a version for matrices with only non-negative entries.
\begin{lemma}[Perron-Frobenius theorem, {\cite[claims~8.3.1~and~8.3.2]{meyer2000matrix}}]\label{l:Perron-Frobenius}
For a non-negative symmetric matrix $\A \in \R^{n \times n}$ such that $[\A]_{ij} \geq 0$ for all $i,j \in \{1, \ldots, n\}$, then the largest eigenvalue of $\A$ is non-negative, i.e., $r=\lambda_{\max}(\A) \ge 0$. Moreover, there is a corresponding eigenvector $\z$, i.e., $\A\z =r\z$  with non-negative entries $\z_i \ge 0$ for all $i$.
\end{lemma}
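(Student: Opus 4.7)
The plan is to prove this as a direct consequence of the real spectral theorem for symmetric matrices combined with the Rayleigh-quotient (variational) characterization of $\lambda_{\max}$, together with the elementary observation that replacing an eigenvector entry-wise by its absolute value can only increase the Rayleigh quotient when the matrix has non-negative entries.

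First, since $\A$ is real and symmetric, the spectral theorem guarantees that all eigenvalues are real, so $r := \lambda_{\max}(\A)$ is well-defined and satisfies
\[
  r \;=\; \max_{\v\neq\zero}\frac{\v^\top\A\v}{\v^\top\v}.
\]
To establish $r\geq 0$, I would simply test the Rayleigh quotient at the all-ones vector $\one$, which gives $\one^\top\A\one = \sum_{i,j}[\A]_{ij}\geq 0$ by non-negativity of the entries, hence $r\geq 0$.

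Second, to produce a non-negative eigenvector, I would take any unit-norm eigenvector $\v^*$ of $\A$ associated with $r$ (which exists by the spectral theorem) and argue that $\z := |\v^*|$, defined entry-wise, is itself an eigenvector for $r$. The key step is the chain
\[
  r \;=\; (\v^*)^\top\A\v^* \;=\; \sum_{i,j}[\A]_{ij}v_i^*v_j^*
  \;\leq\; \sum_{i,j}[\A]_{ij}|v_i^*||v_j^*| \;=\; \z^\top\A\z,
\]
where the inequality uses $[\A]_{ij}\geq 0$ entry-wise. Since $\|\z\|=\|\v^*\|=1$, the variational upper bound gives $\z^\top\A\z\leq r$, so both inequalities are equalities and $\z$ attains the maximum of the Rayleigh quotient. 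A standard consequence of the variational characterization (equality in the Rayleigh quotient forces membership in the top eigenspace) then yields $\A\z = r\z$, and by construction $z_i\geq 0$ for all $i$.

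The only subtlety I expect to navigate is the case when the eigenspace of $r$ has dimension greater than one, in which case there may be eigenvectors of $\A$ with mixed signs; however, the argument above sidesteps this entirely because it only needs $|\v^*|$ to attain the maximum Rayleigh quotient, which by the spectral decomposition of $\A$ immediately places $|\v^*|$ inside the $r$-eigenspace. Thus the proof reduces to two short invocations of the spectral theorem and an entry-wise triangle-inequality-style bound, and does not require any of the heavier machinery (positivity of the spectral radius for irreducible matrices, strict positivity of the Perron vector, or simplicity of the top eigenvalue) that appears in the stronger statements of Perron--Frobenius but is not needed here.
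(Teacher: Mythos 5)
Your proof is correct and complete. Note that the paper itself offers no proof of this lemma at all --- it is stated in the preliminaries purely as a citation to Meyer's textbook (claims 8.3.1 and 8.3.2) and then invoked in the proof of Theorem~\ref{t:generalized-bai-silverstein}. So there is nothing in the paper to compare against line by line; what you have supplied is a self-contained replacement for the citation. Your argument is the standard variational one: $r\geq 0$ follows from testing the Rayleigh quotient at $\one$, and the non-negative eigenvector is obtained by taking entrywise absolute values of any top eigenvector and observing that non-negativity of the entries of $\A$ gives $r=(\v^*)^\top\A\v^*\leq \z^\top\A\z\leq r$, so that $\z=|\v^*|$ attains the maximum and hence lies in the $r$-eigenspace (the equality-forces-eigenvector step is the standard consequence of expanding $\z$ in an orthonormal eigenbasis). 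Each step is sound, and you are right that none of the heavier Perron--Frobenius machinery (irreducibility, simplicity, strict positivity) is needed for the weak form stated here; indeed the paper's downstream use in the restricted Bai--Silverstein argument only requires existence of \emph{some} non-negative eigenvector for $\lambda_{\max}$, which your proof delivers.
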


Our analysis of the inversion bias (proof of Theorem
\ref{t:structural}) crucially relies on a standard
rank-one update formula for the matrix inverse, which is given below.
\begin{lemma}[Sherman-Morrison formula]\label{lem:rank-one}
For an invertible matrix $\A \in \R^{n \times n}$ and $\u,\v \in \R^n$, $\A + \u \v^\top$ is invertible if and only if $1+\v^\top \A^{-1} \u \neq 0$. If this holds, then
\begin{equation*}
    (\A + \u \v^\top)^{-1} = \A^{-1} - \frac{\A^{-1} \u \v^\top \A^{-1} }{1+\v^\top \A^{-1} \u}.
\end{equation*}
In particular, it follows that: 
\begin{equation*}
    (\A + \u \v^\top)^{-1} \u = \frac{\A^{-1} \u}{1+\v^\top \A^{-1} \u}.
\end{equation*}
\end{lemma}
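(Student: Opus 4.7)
The plan is to prove the iff statement by establishing each direction separately, then deducing the particular identity by a one-line computation. First I would dispense with the trivial case $\u = \zero$ (or $\v = \zero$), where both the invertibility claim and the inverse formula reduce to $\A^{-1}$, and the scalar $1+\v^\top\A^{-1}\u = 1 \neq 0$ automatically. In what follows assume $\u,\v \neq \zero$.

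For the forward implication, suppose $\beta := 1+\v^\top\A^{-1}\u \neq 0$ and define the candidate inverse $\M := \A^{-1} - \beta^{-1}\A^{-1}\u\v^\top\A^{-1}$. The main step is to verify $(\A+\u\v^\top)\M = \I$ by expanding:
\begin{align*}
(\A+\u\v^\top)\M
&= \I + \u\v^\top\A^{-1} - \beta^{-1}\u\v^\top\A^{-1} - \beta^{-1}\u(\v^\top\A^{-1}\u)\v^\top\A^{-1} \\
&= \I + \u\v^\top\A^{-1}\bigl(1 - \beta^{-1} - \beta^{-1}(\beta-1)\bigr) = \I,
\end{align*}
using $\v^\top\A^{-1}\u = \beta - 1$ and the cancellation $1 - \beta^{-1} - \beta^{-1}(\beta-1) = 0$. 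Hence $\A+\u\v^\top$ is invertible with inverse $\M$.

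For the reverse implication, suppose $\beta = 0$. Then $\A^{-1}\u$ is nonzero (since $\A$ is invertible and $\u\neq\zero$), yet $(\A+\u\v^\top)\A^{-1}\u = \u + \u(\v^\top\A^{-1}\u) = \u(1+\v^\top\A^{-1}\u) = \zero$, exhibiting a nontrivial kernel vector and proving singularity. The main obstacle, if any, is just being careful about the $\u = \zero$ edge case and the direction of the iff; the algebra itself is routine rank-one bookkeeping.

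Finally, for the ``in particular'' clause, I would simply right-multiply the established formula by $\u$:
\begin{align*}
(\A+\u\v^\top)^{-1}\u
= \A^{-1}\u - \beta^{-1}\A^{-1}\u\,(\v^\top\A^{-1}\u)
= \A^{-1}\u\cdot\bigl(1 - \beta^{-1}(\beta-1)\bigr)
= \frac{\A^{-1}\u}{\beta},
\end{align*}
which gives the claimed identity.
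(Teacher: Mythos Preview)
Your proof is correct and complete. The paper does not actually prove this lemma; it is stated in the preliminaries as a standard result (the Sherman--Morrison formula) without proof, so there is no approach to compare against.
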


\medskip

Our proofs rely on different types of concentration and
anti-concentration inequalities, from scalars to quadratic forms of the
type $\x^\top \B \x$, and eventually to matrix concentration
bounds. These technical lemmas are collected in this section and will
be repeatedly used in the proofs of our main results.  

\subsection{Scalar concentration and anti-concentration inequalities}

The Burkholder inequality \cite{burkholder1973distribution} provides
moment bounds on the sum of a martingale difference sequence. It is
used to show Lemma~\ref{l:applying-burkholder} as part of the proof of
Theorem \ref{t:structural}.
\begin{lemma}[Burkholder inequality, \cite{burkholder1973distribution}]
For $\{ x_j \}_{j=1}^m$ a real martingale difference sequence with
respect to the increasing $\sigma$ field $\mathcal F_j$, we have, for
$L > 1$, there exists $C_L > 0$ such that 
\[
  \E \bigg[
    \Big| \sum_{j=1}^m x_j \Big|^L
  \bigg]
  \leq C_L \cdot \E \bigg[
    \Big( \sum_{j=1}^m |x_j|^2 \Big)^{L/2}
  \bigg].
\]
\end{lemma}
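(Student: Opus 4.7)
The plan is to prove the Burkholder inequality in two stages: first the easy base case $L=2$, and then the general case $L > 1$ via a good-$\lambda$ argument due originally to Burkholder and Gundy.

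Write $S_k := \sum_{j \leq k} x_j$, $S^* := \max_{k \leq m} |S_k|$, and $Q_m := \bigl(\sum_{j=1}^m x_j^2\bigr)^{1/2}$. Since $\E[x_j \mid \mathcal F_{j-1}] = 0$, for $i < j$ we have $\E[x_i x_j] = \E\bigl[x_i \E[x_j \mid \mathcal F_{j-1}]\bigr] = 0$. Expanding $|S_m|^2$ then gives the inequality with $C_2 = 1$, and this will anchor the general argument.

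For general $L > 1$, the key step is to establish a good-$\lambda$ inequality of the form
\[
  \Pr\bigl(S^* > \beta \lambda,\ Q_m \leq \delta \lambda\bigr) \leq \gamma(\beta,\delta)\, \Pr\bigl(S^* > \lambda\bigr),
\]
where $\beta > 1$ is fixed and $\gamma(\beta,\delta) \to 0$ as $\delta \to 0$. I would obtain this by introducing the stopping time $\tau := \inf\{k : |S_k| > \lambda\}$; on the event $\{S^* > \beta\lambda,\ Q_m \le \delta\lambda\}$ one has $\tau \leq m$ and $|S_m - S_\tau| > (\beta-1)\lambda$, while the quadratic variation of the tail martingale $(S_k - S_\tau)_{k > \tau}$ is at most $\delta^2\lambda^2$. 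Conditioning on $\mathcal F_\tau$ and applying Chebyshev to this tail martingale (whose second moment equals its stopped quadratic variation by the $L=2$ case) gives a bound proportional to $\delta^2/(\beta-1)^2$, yielding the desired decay.

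From the good-$\lambda$ inequality the moment bound follows by the classical integration trick: write $\E(S^*)^L = L \int_0^\infty \lambda^{L-1} \Pr(S^* > \lambda)\, d\lambda$, substitute $\lambda \mapsto \beta \lambda$, and split the integrand according to whether $Q_m \leq \delta \lambda$. The latter piece contributes at most $\beta^L \gamma(\beta,\delta)\, \E(S^*)^L$, while the other is dominated by $\delta^{-L}\, \E Q_m^L$ via Markov. Choosing $\beta$ and $\delta$ so that $\beta^L \gamma(\beta,\delta) < 1$ lets one absorb the first term, giving $\E(S^*)^L \leq C_L\, \E Q_m^L$; since $|S_m| \leq S^*$, the stated bound follows.

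The chief obstacle is calibrating $\gamma(\beta,\delta)$ carefully enough for the absorption step to work uniformly for all $L > 1$, and in particular near $L = 1$ where the maximal function is only weakly controlled. One also needs to handle the possibility that $\E(S^*)^L$ is a priori infinite, which I would address by first truncating via the stopping time $\sigma_N := \inf\{k : |S_k| + Q_k > N\}$, proving the inequality for the stopped sequence $(x_j \mathbf{1}_{j \leq \sigma_N})$, and then letting $N \to \infty$ with Fatou's lemma. An alternative route for even integer $L$ is a purely combinatorial expansion of $S_m^L$ followed by repeated conditioning to eliminate cross terms, but the good-$\lambda$ approach covers all $L > 1$ uniformly and avoids case splits.
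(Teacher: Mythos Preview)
The paper does not prove this lemma; it is stated as a cited classical result from \cite{burkholder1973distribution} and used as a black box in the proof of Lemma~\ref{l:applying-burkholder}. In fact, the only case the paper invokes is $L=2$, where, as it remarks, the inequality reduces to the law of iterated variance (orthogonality of martingale increments) with $C_2=1$.

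Your good-$\lambda$ argument is the standard Burkholder--Gundy route and is essentially correct in outline. Two small technical points you would need to tighten in an actual write-up: (i) on the event $\{S^*>\beta\lambda,\ Q_m\le\delta\lambda\}$ one only has $|S_\tau|\le\lambda+|x_\tau|\le(1+\delta)\lambda$, so the tail overshoot is at least $(\beta-1-\delta)\lambda$ rather than $(\beta-1)\lambda$; (ii) the event $\{Q_m\le\delta\lambda\}$ is not $\mathcal F_\tau$-measurable, so the Chebyshev step on the tail martingale is cleanest if you introduce a second stopping time $\sigma:=\inf\{k:Q_k>\delta\lambda\}$ and work with the martingale $(S_{k\wedge\sigma}-S_{\tau\wedge\sigma})_{k\ge\tau}$, whose quadratic variation is then deterministically bounded by $\delta^2\lambda^2$. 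Your truncation-by-$\sigma_N$ to ensure finiteness of $\E(S^*)^L$ before absorbing is the right move. None of this affects the structure of the argument, and for the paper's purposes (only $L=2$) your first paragraph already suffices.
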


The Paley-Zygmund inequality is used to establish an
anti-concentration inequality for the Binomial distribution
(Lemma~\ref{l:inverse-moment}), which is the key in
deriving a lower bound for the inversion bias of  leverage score
sampling in Appendix~\ref{s:lower}. 
\begin{lemma}[Paley-Zygmund inequality, {\cite{paley1932note}}]
  For any non-negative variable $Z $ with finite variance and
  $\theta\in(0,1)$, we have:
  \begin{align*}
    \Pr\big(Z\geq \theta\,\E[Z]\big) \geq (1-\theta)^2\frac{\E[Z]^2}{\E[Z^2]}.
  \end{align*}
\end{lemma}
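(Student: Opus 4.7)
The plan is to prove the Paley-Zygmund inequality by the standard two-step argument: split the expectation of $Z$ according to whether $Z$ is above or below $\theta\E[Z]$, bound the low part trivially, and control the high part using the Cauchy-Schwarz inequality. This converts the problem into a quadratic inequality in $\sqrt{\Pr(Z\geq\theta\E[Z])}$, which then yields the stated lower bound after rearrangement.

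Concretely, I would first write
\[
  \E[Z] \;=\; \E\!\big[Z\,\mathbb{1}_{\{Z<\theta\E[Z]\}}\big] + \E\!\big[Z\,\mathbb{1}_{\{Z\geq\theta\E[Z]\}}\big].
\]
On the event $\{Z<\theta\E[Z]\}$ we have $Z\leq\theta\E[Z]$, and $Z\geq 0$ everywhere, so the first term is at most $\theta\E[Z]$. Rearranging gives
\[
  (1-\theta)\,\E[Z] \;\leq\; \E\!\big[Z\,\mathbb{1}_{\{Z\geq\theta\E[Z]\}}\big].
\]

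Next I would apply the Cauchy-Schwarz inequality to the right-hand side, viewing it as $\E[Z\cdot \mathbb{1}_A]$ with $A=\{Z\geq\theta\E[Z]\}$, to obtain
\[
  \E\!\big[Z\,\mathbb{1}_A\big] \;\leq\; \sqrt{\E[Z^2]}\,\sqrt{\E[\mathbb{1}_A^2]} \;=\; \sqrt{\E[Z^2]}\,\sqrt{\Pr(A)}.
\]
Combining the two inequalities gives $(1-\theta)\E[Z]\leq\sqrt{\E[Z^2]\cdot\Pr(Z\geq\theta\E[Z])}$, and squaring both sides (both sides being non-negative) yields the desired bound. The case $\E[Z]=0$ is trivial since $Z\geq 0$ forces $Z=0$ almost surely, and the right-hand side is $0$; the case $\E[Z^2]=0$ is similarly degenerate, so in the meaningful regime $\E[Z^2]>0$ we divide through to obtain the claim.

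There is no substantive obstacle here: the only care needed is to use $Z\geq 0$ to control the truncated tail by $\theta\E[Z]$, and to ensure that the degenerate cases (either $\E[Z]=0$ or $\E[Z^2]=0$) do not cause division issues; both are handled by noting they force $Z=0$ almost surely and the stated inequality holds trivially.
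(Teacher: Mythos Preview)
Your argument is correct and is the standard proof of the Paley-Zygmund inequality. Note, however, that the paper does not actually prove this lemma: it is stated in the preliminaries as a classical result with a citation to \cite{paley1932note}, and is then invoked as a black box in the proof of Lemma~\ref{l:inverse-moment}. So there is no ``paper's own proof'' to compare against; your proposal simply fills in the well-known justification that the paper omits.
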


\subsection{Quadratic form concentration}

Being the key object of (one of) the structural conditions in
Theorem~\ref{t:structural}, the (random) quadratic form of the type
$\x^\top \B \x$ will consistently appear in our analysis, for instance
in the form of the Bai-Silverstein inequality in
Lemma~\ref{l:classical-bai-silverstein} on quadratic form variance, as
well as the following Hanson-Wright inequality on the tail
probability.
\begin{lemma}[Hanson-Wright inequality, {\cite[Theorem~1.1]{rudelson2013hanson}}]
Let $\x$ have independent $O (1)$-sub-gaussian entries with
mean zero and unit variance. Then, there is $c=\Omega(1)$ such that
for any $n\times n$ matrix $\B$ and $t\geq 0$,
\begin{align*}
  \Pr\Big\{|\x^\top\B\x-\tr(\B)|\geq t\Big\}\leq
  2\exp\bigg(-c\min\Big\{\frac{t^2}{\|\B\|_F^2},\frac{t}{\|\B\|}\Big\}\bigg).
\end{align*}
\end{lemma}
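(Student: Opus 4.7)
The plan is to prove the Hanson-Wright inequality by decomposing the centered quadratic form into a diagonal sum of independent subexponential variables and an off-diagonal chaos of mean-zero products, and then to handle each piece with a Bernstein-type bound and a decoupling argument respectively. Writing $Y = \x^\top\B\x - \tr(\B) = \sum_i b_{ii}(x_i^2-1) + \sum_{i\neq j} b_{ij} x_ix_j =: Y_{\mathrm{diag}} + Y_{\mathrm{off}}$, a union bound reduces the problem to controlling each term at level $t/2$ with the claimed tail.

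For $Y_{\mathrm{diag}}$, each summand $b_{ii}(x_i^2-1)$ is mean-zero and, since $x_i$ is $O(1)$-sub-gaussian, a standard computation shows it is sub-exponential with Orlicz $\psi_1$-norm $O(|b_{ii}|)$. Bernstein's inequality for independent sub-exponential variables then yields $\Pr\{|Y_{\mathrm{diag}}|\geq t\} \leq 2\exp(-c\min\{t^2/\sum_i b_{ii}^2,\, t/\max_i|b_{ii}|\})$, and both $\sum_i b_{ii}^2$ and $\max_i|b_{ii}|$ are dominated by $\|\B\|_F^2$ and $\|\B\|$ respectively.

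For $Y_{\mathrm{off}}$ I would follow the Rudelson--Vershynin strategy: first decouple, passing from $\sum_{i\neq j} b_{ij} x_i x_j$ to a decoupled chaos $Z = \sum_{i,j} b_{ij} x_i x_j'$ where $\x'$ is an independent copy of $\x$, at the price of absolute constants. Then bound the Laplace transform $\E \exp(\lambda Z)$ by first conditioning on $\x$. Conditional on $\x$, the inner sum $\sum_j (\sum_i b_{ij}x_i) x_j'$ is a weighted sum of independent sub-gaussians, so $\E_{\x'}[\exp(\lambda Z)\mid \x] \leq \exp\big(C\lambda^2 \|\B^\top\x\|^2\big)$. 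Taking expectation in $\x$ reduces matters to controlling the Gaussian-like moment generating function $\E \exp(C\lambda^2 \x^\top\B\B^\top\x)$, which, by comparison with the Gaussian case and a Laplace transform bound for sub-gaussian quadratic forms (valid for $\lambda$ in a small interval depending on $\|\B\|$), gives $\E\exp(\lambda Z) \leq \exp(C'\lambda^2\|\B\|_F^2)$ provided $|\lambda|\leq c/\|\B\|$. A Markov/Chernoff optimization in $\lambda$ then delivers the two-regime tail bound.

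The main obstacle is the off-diagonal piece: one must set up the decoupling cleanly and then bound the random Gaussian-type moment generating function $\E\exp(C\lambda^2\x^\top\B\B^\top\x)$ uniformly in $\lambda$ in a sub-gaussian way. This is where the interplay between $\|\B\|_F$ and $\|\B\|$ arises — $\|\B\|_F^2$ emerges as the ``variance proxy'' of the chaos and $\|\B\|$ as the sub-exponential scale that restricts the range of admissible $\lambda$. Once this nested MGF bound is in place, the rest is mechanical Chernoff bounding and combination with the diagonal estimate.
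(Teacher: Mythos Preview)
The paper does not actually prove this lemma; it is quoted verbatim from \cite{rudelson2013hanson} and used as a black-box tool, so there is no ``paper's own proof'' to compare against. Your outline is essentially the Rudelson--Vershynin argument (diagonal/off-diagonal split, Bernstein for the diagonal, decoupling plus a conditional sub-gaussian MGF bound for the off-diagonal), and it is correct in spirit. The one place that needs care is the step ``$\E\exp(C\lambda^2\,\x^\top\B\B^\top\x)\le \exp(C'\lambda^2\|\B\|_F^2)$ for $|\lambda|\le c/\|\B\|$'': this is not just a ``comparison with the Gaussian case'' but is precisely the nontrivial lemma in \cite{rudelson2013hanson} (their Lemma~2.2, proved via a Gaussian replacement/rotation-invariance trick), so if you are writing a self-contained proof you should either prove that lemma or cite it explicitly.
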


\subsection{Matrix concentration inequalities}

When random matrices are considered, different variants of Matrix Chernoff/Bernstein inequalities will be needed to handle the case where the random matrix under study is known to have (almost surely) bounded operator norm, or only to admit a subexponential decay for its higher order moments.

\begin{lemma}[Matrix Bernstein: Bounded Case, {\cite[Theorem~1.4]{tropp2012user}}]
For $i=1,2,...$, consider a finite sequence $\X_i$ of $d\times d$
independent and symmetric random matrices such that   
\[
  \E[\X_i] = \mathbf{0}, \quad \lambda_{\max}(\X_i)\leq
        R\quad\text{almost surely.}
\]
Then, defining the variance parameter $\sigma^2 = \| \sum_i\E[\X_i^2] \|$, for any $t>0$ we have:
\begin{align*}
  \Pr \bigg\{ \lambda_{\max}\Big( \sum\nolimits_i \X_i \Big) \geq t
  \bigg\}& \leq d \cdot \exp \left( \frac{ -t^2/2 }{ \sigma^2 +
            R t/3 } \right).
\end{align*}
\end{lemma}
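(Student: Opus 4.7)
The plan is to prove the inequality by the matrix Laplace transform (matrix Chernoff) method, which reduces a tail bound on $\lambda_{\max}$ of a sum of random matrices to a bound on the matrix moment generating function (MGF). Concretely, I would first note the matrix analogue of Markov's inequality: for any $\theta>0$,
$$\Pr\{\lambda_{\max}(\textstyle\sum_i\X_i)\geq t\}\leq e^{-\theta t}\cdot \E\,\tr\exp\bigl(\theta\textstyle\sum_i\X_i\bigr),$$
since $\lambda_{\max}(\Y)\leq \tr e^{\Y}$ whenever $\Y$ is symmetric (up to the factor $e^{\lambda_{\max}}$). This converts the problem into bounding the trace MGF of the sum.

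The central obstacle is that the summands $\X_i$ do not commute, so one cannot factor $e^{\theta\sum_i\X_i}$ as a product of $e^{\theta\X_i}$'s. The key tool to circumvent this is Lieb's concavity theorem, which states that $\A\mapsto \tr\exp(\H+\log\A)$ is concave on the positive definite cone. Peeling off one summand at a time and applying Jensen's inequality conditionally yields the subadditivity relation
$$\E\,\tr\exp\bigl(\theta\textstyle\sum_i\X_i\bigr)\;\leq\; \tr\exp\bigl(\textstyle\sum_i \log\E\,e^{\theta\X_i}\bigr).$$
It then remains to control each matrix cumulant generating function. Using the scalar bound $e^x\leq 1+x+g(\theta)x^2/\theta^2$ for $x\leq \theta R$, with $g(\theta)=(e^{\theta R}-\theta R-1)/R^2$, and the almost-sure condition $\lambda_{\max}(\X_i)\leq R$ together with $\E\X_i=\mathbf{0}$, one obtains the operator inequality
$$\E\,e^{\theta\X_i}\;\preceq\;\I+g(\theta)\E\X_i^2,\qquad\text{hence}\qquad \log\E\,e^{\theta\X_i}\;\preceq\;g(\theta)\E\X_i^2,$$
where the second step uses matrix monotonicity of the logarithm on the p.s.d.\ cone and $\log(\I+\A)\preceq\A$.

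Summing these bounds and then using $\tr\exp(\B)\leq d\cdot e^{\lambda_{\max}(\B)}$, together with the definition $\sigma^2=\|\sum_i\E\X_i^2\|$, gives
$$\E\,\tr\exp\bigl(\theta\textstyle\sum_i\X_i\bigr)\;\leq\;d\cdot\exp\bigl(g(\theta)\sigma^2\bigr),$$
so that $\Pr\{\lambda_{\max}(\sum_i\X_i)\geq t\}\leq d\cdot\exp(-\theta t+g(\theta)\sigma^2)$. To finish, I would optimize the free parameter by choosing $\theta=R^{-1}\log(1+Rt/\sigma^2)$, and invoke the elementary numerical inequality $-\theta t+g(\theta)\sigma^2\leq -t^2/(2(\sigma^2+Rt/3))$, which delivers the stated Bernstein-type tail. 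The two main technical hurdles I anticipate are (i) the clean invocation of Lieb's theorem to justify the non-commutative subadditivity of the trace MGF, and (ii) the careful verification of the operator inequality $\log\E\,e^{\theta\X_i}\preceq g(\theta)\E\X_i^2$, which hinges on the almost-sure bound $\lambda_{\max}(\X_i)\leq R$ and the matrix-monotonicity properties of $\log$ and $\exp$.
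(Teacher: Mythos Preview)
The paper does not prove this lemma; it is quoted verbatim from \cite[Theorem~1.4]{tropp2012user} as a preliminary tool and used as a black box (see the Preliminaries section and the application in Appendix~\ref{s:averaging}). Your outline is correct and is precisely Tropp's original argument---the matrix Laplace transform combined with Lieb's concavity to get subadditivity of the log-MGF, the Bennett-type scalar inequality applied spectrally under the almost-sure bound $\lambda_{\max}(\X_i)\le R$, and the final optimization in $\theta$---so there is nothing to compare against in this paper.
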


\begin{lemma}[Matrix Bernstein: Subexponential Case, {\cite[Theorem~6.2]{tropp2012user}}] 
For $i=1,2,...$, consider a finite sequence $\X_i$ of $d\times d$
independent and symmetric random matrices such that   
\[
  \E[\X_i] = \mathbf{0}, \quad \E[\X_i^p] \preceq \frac{p!}2
        \cdot R^{p-2} \A_i^2\quad\textmd{for}\quad p=2,3,...
\]
Then, defining the variance parameter $\sigma^2 = \| \sum_i
\A_i^2 \|$, for any $t>0$ we have:
\begin{align*}
  \Pr \bigg\{ \lambda_{\max}\Big( \sum\nolimits_i \X_i \Big) \geq t
  \bigg\}& \leq d \cdot \exp \left( \frac{ -t^2/2 }{ \sigma^2 +
            R t } \right).
\end{align*}
\end{lemma}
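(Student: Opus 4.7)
The plan is to follow the matrix Laplace transform framework developed in Ahlswede--Winter and refined by Tropp. First I would reduce the tail bound for $\lambda_{\max}(\sum_i\X_i)$ to an estimate on the matrix moment generating function (MGF) via the master tail inequality: for any $\theta>0$,
\begin{align*}
\Pr\Big\{\lambda_{\max}\Big(\sum\nolimits_i\X_i\Big)\geq t\Big\}
\leq \ee^{-\theta t}\,\E\,\tr\exp\Big(\theta\sum\nolimits_i\X_i\Big).
\end{align*}
The standard consequence of Lieb's concavity theorem (applied to independent summands with $\E[\X_i]=0$) then yields
\begin{align*}
\E\,\tr\exp\Big(\theta\sum\nolimits_i\X_i\Big)
\leq \tr\exp\Big(\sum\nolimits_i\log\E\,\ee^{\theta\X_i}\Big).
\end{align*}
So the whole argument reduces to bounding each matrix cumulant $\log\E\,\ee^{\theta\X_i}$ in the semidefinite order.

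Next I would use the subexponential moment hypothesis to control $\E\,\ee^{\theta\X_i}$. Expanding the exponential as a power series (valid in the operator-norm sense) and invoking $\E[\X_i]=\mathbf 0$ plus $\E[\X_i^p]\preceq \tfrac{p!}{2}R^{p-2}\A_i^2$ for $p\geq 2$,
\begin{align*}
\E\,\ee^{\theta\X_i} \;=\; \I + \sum_{p\geq 2}\frac{\theta^p}{p!}\E[\X_i^p]
\;\preceq\; \I + \frac{\theta^2}{2}\Big(\sum_{p\geq 2}(\theta R)^{p-2}\Big)\A_i^2
\;=\; \I + \frac{\theta^2/2}{1-\theta R}\,\A_i^2,
\end{align*}
which is meaningful for $0<\theta<1/R$. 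Since the operator monotone inequality $\log(\I+\M)\preceq \M$ holds for p.s.d.\ $\M$, I obtain $\log\E\,\ee^{\theta\X_i}\preceq \tfrac{\theta^2/(2(1-\theta R))}\,\A_i^2$. Summing and using monotonicity of the trace-exponential,
\begin{align*}
\tr\exp\Big(\sum\nolimits_i\log\E\,\ee^{\theta\X_i}\Big)
\leq d\cdot\exp\Big(\tfrac{\theta^2/2}{1-\theta R}\,\sigma^2\Big),
\end{align*}
where I use $\lambda_{\max}(\sum_i\A_i^2)=\sigma^2$ and the fact that the trace of a p.s.d.\ matrix of rank $\leq d$ is bounded by $d$ times its operator norm, applied inside the exponential.

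Plugging into the master bound and choosing $\theta=t/(\sigma^2+Rt)$ (which verifies $\theta R<1$) optimizes the exponent and gives
\begin{align*}
\Pr\Big\{\lambda_{\max}\Big(\sum\nolimits_i\X_i\Big)\geq t\Big\}
\leq d\cdot\exp\Big(\tfrac{-t^2/2}{\sigma^2+Rt}\Big),
\end{align*}
which is the claimed bound. The main obstacle is justifying the scalar-to-matrix passage cleanly: the Lieb-concavity step and the operator-monotone inequality $\log(\I+\M)\preceq\M$ are the two black boxes one must cite or derive, and care is needed so that the $\ee^{\theta\X_i}$ power series is summed in the semidefinite order rather than entrywise. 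Once those ingredients are in hand, the remainder is the routine geometric sum for the subexponential moments and a one-variable optimization over $\theta$.
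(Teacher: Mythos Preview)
The paper does not prove this lemma; it is quoted from \cite[Theorem~6.2]{tropp2012user} as a black box in the preliminaries, so there is no ``paper's own proof'' to compare against. Your argument is the standard one from Tropp's paper: master tail bound via the matrix Laplace transform, subadditivity of matrix cumulants from Lieb's concavity, the geometric-series bound on $\E\,\ee^{\theta\X_i}$ from the moment hypothesis, and optimization over $\theta$. The steps are correct, including the choice $\theta=t/(\sigma^2+Rt)$, which indeed yields the exponent $-t^2/(2(\sigma^2+Rt))$.
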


\begin{lemma}[Matrix Chernoff, {\cite[Theorem~1.1 and Remark 5.3]{tropp2012user}}]
    For $i=1,2,...$, consider a finite sequence $\X_i$ of $d\times d$
independent positive semi-definite random matrices such that
$\E\big[\sum_i\X_i\big] = \I$ and $\|\X_i\|\leq R$. Then, for any $t\geq \ee$,
we have:
\begin{align*}
  \Pr\Big\{\Big\|\sum_i\X_i\Big\|\geq t\Big\}\leq
  d\cdot\Big(\frac\ee t\Big)^{t/R}.
\end{align*}
\end{lemma}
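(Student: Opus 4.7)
The plan is to follow the matrix Laplace transform approach of Ahlswede-Winter-Tropp. First, I would observe that for any $\theta>0$, Markov's inequality applied to the matrix exponential gives
\begin{align*}
  \Pr\Big\{\lambda_{\max}\Big(\sum\nolimits_i\X_i\Big)\geq t\Big\}
  &= \Pr\Big\{\ee^{\theta\lambda_{\max}(\sum_i\X_i)}\geq \ee^{\theta t}\Big\}\\
  &\leq \ee^{-\theta t}\,\E\Big[\tr\exp\Big(\theta\sum\nolimits_i\X_i\Big)\Big].
\end{align*}
This reduces the tail bound to controlling the trace of the matrix moment generating function.

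Next I would invoke the corollary of Lieb's concavity theorem (as in Tropp's framework) which, by induction on the number of summands together with the tower rule, yields
\begin{align*}
  \E\Big[\tr\exp\Big(\sum\nolimits_i\theta\X_i\Big)\Big]
  \leq \tr\exp\Big(\sum\nolimits_i\log\E[\ee^{\theta\X_i}]\Big).
\end{align*}
To bound each $\log\E[\ee^{\theta\X_i}]$, I would use that $\X_i$ is p.s.d.\ with $\|\X_i\|\leq R$, so its eigenvalues lie in $[0,R]$; convexity of $x\mapsto \ee^{\theta x}$ gives the scalar comparison $\ee^{\theta x}\leq 1+\frac{\ee^{\theta R}-1}{R}x$ on $[0,R]$, which transfers to the matrix inequality $\ee^{\theta\X_i}\preceq \I+\frac{\ee^{\theta R}-1}{R}\X_i$. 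Taking expectations and applying the operator monotone bound $\log(\I+\A)\preceq \A$ for $\A\succeq \mathbf 0$ yields
\begin{align*}
  \log\E[\ee^{\theta\X_i}]\preceq \frac{\ee^{\theta R}-1}{R}\,\E[\X_i].
\end{align*}

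Summing over $i$ and using $\E[\sum_i\X_i]=\I$, together with monotonicity of $\A\mapsto\tr\exp(\A)$ on the p.s.d.\ cone, gives
\begin{align*}
  \tr\exp\Big(\sum\nolimits_i\log\E[\ee^{\theta\X_i}]\Big)
  \leq \tr\exp\Big(\tfrac{\ee^{\theta R}-1}{R}\I\Big)
  = d\cdot\exp\Big(\tfrac{\ee^{\theta R}-1}{R}\Big),
\end{align*}
so altogether $\Pr\{\|\sum_i\X_i\|\geq t\}\leq d\cdot \ee^{-\theta t}\exp(\tfrac{\ee^{\theta R}-1}{R})$. The final step is to optimize in $\theta$: choosing $\theta=\log(t)/R$ (valid for $t\geq \ee$, which ensures $\theta>0$), the exponent becomes $-\tfrac{t\log t}{R}+\tfrac{t-1}{R}\leq \tfrac{t}{R}(1-\log t)=\tfrac{t}{R}\log(\ee/t)$, giving the claimed bound $d\cdot(\ee/t)^{t/R}$. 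The step I expect to require the most care is the application of Lieb's theorem in the correct form, since the naive use of Golden-Thompson would lose a spurious factor of $d$ in the sub-optimal regime; Tropp's refinement via Lieb is what allows the clean bound, and the hypothesis $t\geq \ee$ (equivalently $\theta R\geq 1$) is what makes the optimization yield the nice $(\ee/t)^{t/R}$ form rather than a weaker sub-Gaussian tail.
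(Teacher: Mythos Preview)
Your argument is correct and follows exactly the matrix Laplace transform method of Tropp: the master tail bound via Lieb's concavity, the Chernoff-type mgf control using the secant bound on $[0,R]$, and the optimization $\theta=\log(t)/R$. The paper itself does not supply a proof of this lemma; it is stated as a cited preliminary result from \cite{tropp2012user} and invoked as a black box, so there is no paper proof to compare against beyond noting that your derivation is precisely the one in the cited reference.
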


\section{Structural conditions for small inversion bias}

In this section, we prove Theorem \ref{t:structural}, which gives two
structural conditions for a random sketch of a rank $d$ matrix
$\A\in\R^{n\times d}$ to have small inversion bias. We assume that
the sketching matrix $\S_m\in\R^{m\times n}$ consists of $m\geq
8d$ i.i.d.~rows $\frac1{\sqrt m}\x_i^\top$, where
$\E[\x_i\x_i^\top]=\I$. To simplify the analysis, we assume that $m$ is divisible by $3$.

\subsection{Proof of Theorem \ref{t:structural}}
Note that the subspace embedding assumption (based on Condition
\ref{cond1}) immediately implies the 
result with $\epsilon=O(1)$, so without loss of generality we can
assume that $\alpha\sqrt d/m\leq1$.
  Let $\H=\A^\top\A$ and $\Q=(\gamma\A^\top\S_m^\top\S_m\A)^{-1}$ for
  $\gamma = \frac m{m-d}$. Moreover, let  $\S_{-i}$ denote $\S_m$
  without the $i$th row, with
  $\Q_{-i}=(\gamma\A^\top\S_{-i}^\top\S_{-i}\A)^{-1}$. Finally, for
  $t=m/3$, we define the following events:
  \begin{align}
    \Ec_j: \
    \frac1t\A^\top\Big(\sum_{i=t(j-1)+1}^{tj}\x_{i}\x_{i}^\top\Big)\A\succeq
    \frac12\cdot\A^\top\A,\quad j=1,2,3,\qquad \Ec=\bigwedge_{j=1}^3 \Ec_j.\label{eq:event}
  \end{align}
For each $j$, the meaning of the event $\Ec_j$ is that the average of the rank one matrices $\x_{i}\x_{i}^\top$ over the corresponding $j$-th third of indices $1,\ldots,m$ forms a sketch for $\A$ that is a ``lower'' spectral approximation of $\A^\top \A$.

Note that events $\Ec_1$, $\Ec_2$ and $\Ec_3$ are independent, and for
each $i\in\{1,...,m\}$ there is a $j=j(i)\in\{1,2,3\}$ such 
that:
\begin{enumerate}
\item $\Ec_j$ is independent of $\x_i$; and
\item $\Ec_j$ implies that
  $\Q_{-i}\preceq\gamma\Q_{-i}=(\A^\top\S_{-i}^\top\S_{-i}\A)^{-1}\preceq
  6\cdot(\A^\top\A)^{-1}=6\cdot \H^{-1}$.
\end{enumerate}
Here we use that $\A^\top\S_m^\top\S_m\A$ is the average of the three matrices to which the conditions in $\Ec_j$ refer to, and also that $m\geq 2d$.

From the subspace embedding assumption and the union bound
  we conclude that $\Pr(\Ec)\geq 1-\delta$.  
Letting $\E_{\Ec}$ denote the expectation conditioned on $\Ec$ and
  $\gamma_i=1+\frac{\gamma}{m}\x_i^\top\A\Q_{-i}\A^\top\x_i$, we have: 
  \begin{align*}
   \I - \E_{\Ec}[\Q]\H
    &= -\E_{\Ec}[\Q]\H + \gamma\,\E_{\Ec}[\Q\A^\top\S_m^\top\S_m\A]
=-\E_{\Ec}[\Q]\H + \gamma\,\E_{\Ec}[\Q\A^\top\x_i\x_i^\top\A]
    \\
    &\overset{(*)}=
-\E_{\Ec}[\Q]\H +
      \gamma\,\E_{\Ec}\Big[\tfrac{\Q_{-i}\A^\top\x_i\x_i^\top\A}
      {1+\frac{\gamma}{m}\x_i^\top\A\Q_{-i}\A^\top\x_i} \Big]
      \\
    &=-\E_{\Ec}[\Q]\H
+\E_{\Ec}[\Q_{-i}\A^\top\x_i\x_i^\top\A]
+ \E_{\Ec}\big[(\tfrac{\gamma}{\gamma_i} -1)\Q_{-i}\A^\top\x_i\x_i^\top\A\big]
    \\
    &=
\underbrace{\E_{\Ec}[\Q_{-i}\A^\top(\x_i\x_i^\top-\I)\A]}_{\Z_0}+
      \underbrace{\E_{\Ec}[\Q_{-i}-\Q]\H}_{\Z_1} +
      \underbrace{\E_{\Ec}\big[(\tfrac{\gamma}{\gamma_i}
      -1)\Q_{-i}\A^\top\x_i\x_i^\top\A\big]}_{\Z_2},
  \end{align*}
for a fixed $i$, where $(*)$
  uses the Sherman-Morrison rank-one update formula
  (Lemma~\ref{lem:rank-one}). We also
  used the fact that due to symmetry in the definition of event $\Ec$,
  the marginal distributions of the random vectors $\x_i$ are identical
   after conditioning
  (even though they are no longer independent and identically distributed). 
   To obtain the result, it suffices to bound:
  \begin{align}
    \|\I-\H^{\frac12}\E_{\Ec}[\Q]\H^{\frac12}\|
    &=\|\H^{\frac12}(\Z_0+\Z_1+\Z_2)\H^{-\frac12}\|\nonumber
    \\
    &\leq
      \|\H^{\frac12}\Z_0\H^{-\frac12}\|
      +\|\H^{\frac12}\Z_1\H^{-\frac12}\|+\|\H^{\frac12}\Z_2\H^{-\frac12}\|.
    \label{eq:two-terms}
  \end{align}
  We start by bounding the first term. Without loss of generality,
  assume that events $\Ec_1$ and $\Ec_2$ are both independent of
  $\x_i$, and let $\Ec'=\Ec_1\wedge\Ec_2$ as well as $\delta_3=\Pr(\neg\Ec_3)$. We have:
  \begin{align*}
    \Z_0
    &=
    \frac1{1-\delta_3}\cdot\Big(\E_{\Ec'}[\Q_{-i}\A^\top(\x_i\x_i^\top-\I)\A]
    -\E_{\Ec'}[\Q_{-i}\A^\top(\x_i\x_i^\top-\I)\A\cdot\one_{\neg \Ec_3}]\Big)
    \\
    &=-\frac{1}{1-\delta_3}\cdot
\E_{\Ec'}\big[\Q_{-i}\A^\top(\x_i\x_i^\top-\I)\A\cdot\one_{\neg \Ec_3}\big].
  \end{align*}
Above, we evaluated the expectation $\E_{\Ec'}[\Q_{-i}\A^\top(\x_i\x_i^\top-\I)\A]$ by first conditioning on all randomness except $\x_i$, and using the independence of $\x_i$ and $\Ec'$, as well as $\E[\x\x^\top]=\I$.

  Thus, since $\delta_3\leq \frac12$,  we obtain that:
  \begin{align*}
    \|\H^{\frac12}\Z_0\H^{-\frac12}\|
    &\leq 2\,\Big\|\E_{\Ec'}\big[\H^{\frac12}\Q_{-i}\A^\top(\x_i\x_i^\top-\I)\A\H^{-\frac12}
      \cdot\one_{\neg \Ec_3}\big]\Big\|
      \\
&\leq
2\,\E_{\Ec'}\Big[\big\|\H^{\frac12}\Q_{-i}\A^\top(\x_i\x_i^\top-\I) \A\H^{-\frac12}\big\|
           \cdot\one_{\neg\Ec_3}\Big]
    \\
    &\leq
2\,\E_{\Ec'}\Big[\|\H^{\frac12}\Q_{-i}\H^{\frac12}\|\cdot\big\|\H^{-\frac12}\A^\top(\x_i\x_i^\top-\I) \A\H^{-\frac12}\big\|
      \cdot\one_{\neg\Ec_3}\Big]
\\
&\leq
12\,\E_{\Ec'}\Big[\big(\x_i^\top\A\H^{-1}\A^\top\x_i+1\big)
           \cdot\one_{\neg\Ec_3}\Big].
  \end{align*}
Note that $\E[\x_i^\top\A\H^{-1}\A^\top\x_i]=d$, and using
Condition \ref{cond2} (Restricted Bai-Silverstein), we have
$\Var[\x_i^\top\A\H^{-1}\A^\top\x_i]$  $\leq \alpha\cdot d$ (and both are
still true after conditioning on $\Ec'$, because it is independent of $\x_i$). Chebyshev's inequality thus
implies that for $x\geq 2d$ we have
$\Pr(\x_i^\top\A\H^{-1}\A^\top\x_i\geq x \mid\Ec')\leq C\alpha d/x^2$.
Combining this with the assumption that $\delta_3 \leq \delta \leq1/m^3$, we have:
\begin{align*}
  \E_{\Ec'}\big[\x_i^\top\A\H^{-1}\A^\top\x_i\cdot\one_{\neg\Ec}\big]
  &=\int_0^\infty\Pr(\x_i^\top\A\H^{-1}\A^\top\x_i\cdot\one_{\neg\Ec}\geq
    x\mid\Ec') \,dx
    \\
  &\leq 2m^2\delta_3 +
    \int_{2m^2}^\infty\Pr(\x_i^\top\A\H^{-1}\A^\top\x_i\geq x) \,dx
  \\
  &  \leq \frac 2m +  C\alpha d   \int_{2m^2}^\infty\frac1{x^2}\,dx
    \leq \frac 2m + C\frac {\alpha d}{m^2},
\end{align*}
which implies that $\|\H^{\frac12}\Z_0\H^{-\frac12}\| =
O(1/m + \alpha d/m^2)=O(\alpha\sqrt d/m)$. 
We now move on to bounding the second term in \eqref{eq:two-terms}.
In the following, we will use the observation that for a p.s.d.~random
matrix $\C$ (or non-negative random variable) in the probability space of $\S_m$, we have:
\begin{align}
  \E_{\Ec}[\C] = \frac{\E[(\prod_{j=1}^3\one_{\Ec_j})\cdot \C]}{\Pr(\Ec)}
  \preceq \frac1{1-\delta}\,\E[\one_{\Ec'}\cdot\C]\preceq
  2\cdot\E_{\Ec'}[\C].
  \label{eq:conditional-bound}
\end{align}
Using the above, and the fact that event $\Ec'$ is independent of
$\x_i$, we have:
  \begin{align*}
    \E_{\Ec}[\Q_{-i}-\Q]
   &\preceq 2\cdot \E_{\Ec'}[\Q_{-i}-\Q]
  =  \frac{2\gamma}{m}\cdot \E_{\Ec'}\big[\gamma_i^{-1}\Q_{-i}\A^\top\x_i\x_i^\top\A\Q_{-i}\big]
\preceq\frac{2\gamma}m\cdot \E_{\Ec'}[\Q_{-i}\H\Q_{-i}].
  \end{align*}
 We now bound the second term in \eqref{eq:two-terms} by using the
 fact that $\Ec'$ implies
 $\H^{\frac12}\gamma\Q_{-i}\H^{\frac12}\preceq 6\I$:
  \begin{align*}
    \|\H^{\frac12}\Z_1\H^{-\frac12}\| =
    \|\H^{\frac12}\E_{\Ec}[\Q_{-i}-\Q]\H^{\frac12}\| \leq
    \frac{2\gamma}m\cdot
    \E_{\Ec'}\big[\|\H^{\frac12}\Q_{-i}\H^{\frac12}\cdot\H^{\frac12}\Q_{-i}\H^{\frac12}\|\big]
    \leq\frac {2}m\cdot36.
  \end{align*}
We next bound the last term in \eqref{eq:two-terms}, applying
the Cauchy-Schwarz inequality twice:
\begin{align*}
  \| \H^{\frac12} \Z_2 \H^{-\frac12}\|
  &= \sup_{\| \v \| = 1,~\| \u \| = 1} \E_{\Ec}
    \Big[ \left|\tfrac\gamma{\gamma_i}-1 \right|\cdot \v^\top \H^{\frac12}
    \Q_{-i} \A^\top \x_i \x_i^\top \A \H^{-\frac12} \u \Big] \\
  & \leq \sqrt{\E_{\Ec} \big[ (\tfrac\gamma{\gamma_i} -1)^2\big] } \cdot
    \sup_{\| \v \|=1,~\| \u \| = 1}
    \sqrt{\E_{\Ec}\big[ (\v^\top \H^{\frac12} \Q_{-i} \A^\top \x_i \cdot
    \x_i^\top \A \H^{-\frac12} \u)^2\big] }
  \\
  &\leq \underbrace{\sqrt{\E_{\Ec}\big[(\gamma_i-\gamma)^2\big]}}_{T_1}
    \cdot
    \underbrace{\sup_{\|\u\|=1}\sqrt[4]{\E_{\Ec}\big[(\u^\top\H^{\frac12}\Q_{-i}\A^\top\x_i)^4\big]}}_{T_2}
    \cdot
    \underbrace{\sup_{\|\u\|=1}\sqrt[4]{\E_{\Ec}\big[(\u^\top\H^{-\frac12}\A^\top\x_i)^4\big]}}_{T_3}.
\end{align*}
To bound $T_3$, we rely on Restricted Bai-Silverstein with
$\B=\A\H^{-\frac12}\u\u^\top\H^{-\frac12}\A^\top$, noting that
$\tr(\B^2)=\tr(\B)=(\u^\top\H^{-\frac12}\H\H^{-\frac12}\u)^2=\|\u\|^4=1$.
Recall
that event $\Ec'$ is independent of $\x_i$, so we have:
\begin{align*}
  \E_{\Ec}\big[(\u^\top\H^{-\frac12}\A^\top\x_i)^4\big]
  &\leq 2\,\E_{\Ec'}\big[(\u^\top\H^{-\frac12}\A^\top\x_i)^4\big]
    \\
  &=2\,\E\big[(\x_i^\top\B\x_i)^2\big]
  \\
  &=
    2\Var[\x_i^\top\B\x_i]+2\big(\E[\x_i^\top\B\x_i]\big)^2
  \\
  &\leq 2\alpha\cdot \tr(\B^2)+2\big(\tr(\B)\big)^2 = 2(\alpha+1),
\end{align*}
obtaining that $T_3=O(\sqrt[4]{\alpha+1})$. We can similarly bound $T_2$ by letting
$\B=\A\Q_{-i}\H^{\frac12}\u\u^\top\H^{\frac12}\Q_{-i}\A^\top$. Note
that, conditioned on $\Ec'$, we again have
\[\tr(\B^2)=
\big(\u^\top(\H^{\frac12}\Q_{-i}\H^{\frac12})^2\u\big)^2\leq 6^4,\]
so analogously as above we conclude
that $T_2=O(\sqrt[4]{\alpha+1})$.

It thus remains to bound the term $T_1$. First, note that:
\begin{align}
  \E_{\Ec}[(\gamma - \gamma_i)^2]
  \leq 2\,\E_{\Ec'}[(\gamma - \gamma_i)^2]= 2\,(\gamma - \bar\gamma)^2
  + 2\,\E_{\Ec'}[(\gamma_i-\bar\gamma)^2],\label{eq:gamma}
\end{align}
where $\bar\gamma=\E_{\Ec'}[\gamma_i] = 1+\frac\gamma m\tr(\E_{\Ec'}[\Q_{-i}]\H)$.
To bound the second term in \eqref{eq:gamma}, we write:
\begin{align*}
  \E_{\Ec'}[(\gamma_i-\bar\gamma)^2]
  &= \frac{\gamma^2}{m^2}\,\E_{\Ec'}\Big[\big(\tr (\Q_{-i} -
    \E_{\Ec'}[\Q_{-i}])\H \big)^2\Big]
    + \frac{\gamma^2}{m^2}\,\E_{\Ec'}\Big[ \big(\tr(\Q_{-i}\H) - \x_i^\top \A
    \Q_{-i} \A^\top \x_i \big)^2\Big].
\end{align*}
The latter term can be bounded by again using Condition \ref{cond2},
with $\B=\A\Q_{-i}\A^\top$, obtaining:  
\begin{align*}
\frac{\gamma^2}{m^2}\,\E_{\Ec'}\Big[ \big(\tr(\Q_{-i}\H) - \x_i^\top \A
  \Q_{-i} \A^\top \x_i \big)^2\Big]
  &\leq
    \frac{1}{m^2}\,\E_{\Ec'}\big[\alpha\cdot\tr((\gamma \Q_{-i}\H)^2)\big]
\leq 36\cdot \frac{\alpha d}{m^2}.
\end{align*}
The former term can be bounded using the Burkholder inequality for
martingale difference sequences. We state this bound as a lemma, proven
separately in Appendix \ref{s:applying-burkholder}.
\begin{lemma}\label{l:applying-burkholder}
Let $\Var_{\Ec'}[\cdot]$ be the conditional variance
with respect to event $\Ec'=\Ec_1\wedge\Ec_2$, see \eqref{eq:event},
with $\x_i$ independent of $\Ec'$. Then, there is an
absolute constant $C>0$ such that: 
  \begin{align*}
    \Var_{\Ec'}\!\big[\tr(\Q_{-i}\H)\big] \leq C\cdot d.
  \end{align*}
\end{lemma}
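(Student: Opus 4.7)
The plan is to bound $\Var_{\Ec'}[\tr(\Q_{-i}\H)]$ via a leave-one-out martingale decomposition combined with the Sherman-Morrison rank-one update formula, using Burkholder's inequality to convert the second moment of a sum of martingale differences into the sum of their second moments, and Condition~\ref{cond2} (Restricted Bai-Silverstein) to supply the per-row variance estimate. This mirrors the standard random-matrix-theory route for concentration of the Stieltjes transform, specialised to handle the tripartite conditioning event $\Ec'$.

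Write $Y = \tr(\Q_{-i}\H)$ and let $\E_k[\cdot]$ denote conditional expectation given $\x_1,\ldots,\x_k$. Since $\Q_{-i}$ depends only on $\{\x_j:j\ne i\}$, we have the martingale decomposition $Y-\E[Y]=\sum_{k\ne i}D_k$ with $D_k = (\E_k - \E_{k-1})[Y]$. Burkholder's inequality with $L=2$ gives $\E[(Y-\E[Y])^2]\le C\sum_{k\ne i}\E[D_k^2]$, and a standard truncation step (restricting to the indicator $\one_{\Ec'}$, on which $\tr(\Q_{-i}\H)\le 6d$ and hence $Y$ is bounded) transfers this to the conditional variance $\Var_{\Ec'}[Y]\le C\sum_{k\ne i}\E[D_k^2\one_{\Ec'}]$. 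Thus it suffices to show $\E[D_k^2\one_{\Ec'}] = O(d^2/m^2)$ for each of the $m-1$ nonzero increments; using $m\ge 8d$ the summation then yields the desired $O(d)$ bound.

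For each $k\ne i$, let $\Q_{-i,-k}$ be the resolvent with rows $i$ and $k$ both removed; it is independent of $\x_k$, and the $\x_k$-independent quantity $\tr(\Q_{-i,-k}\H)$ can be subtracted inside $D_k$. The Sherman-Morrison formula (Lemma~\ref{lem:rank-one}) then gives
\[
\tr(\Q_{-i}\H) - \tr(\Q_{-i,-k}\H) = -\frac{(\gamma/m)\,\x_k^\top \B_k \x_k}{1+(\gamma/m)\,\x_k^\top\A\Q_{-i,-k}\A^\top\x_k},\qquad \B_k := \A\Q_{-i,-k}\H\Q_{-i,-k}\A^\top.
\]
Since the denominator is $\ge 1$ and the numerator is non-negative, $|D_k|\le 2(\gamma/m)\,\x_k^\top\B_k\x_k$. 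To bound $\E[(\x_k^\top\B_k\x_k)^2\one_{\Ec'}]$ I would exploit the tripartite structure of $\Ec'$: for each pair $(i,k)$ one can select $j^\star\in\{1,2\}$ so that the third $\Ec_{j^\star}$ is (a) measurable with respect to rows disjoint from $\{i,k\}$, (b) a superset of $\Ec'$ (so $\one_{\Ec'}\le\one_{\Ec_{j^\star}}$), and (c) strong enough on its own to imply $\Q_{-i,-k}\preceq 6\H^{-1}$. Consequently $\B_k = \P\B_k\P \preceq 36\,\P$ with $\P$ the projector onto the column span of $\A$, so $\tr(\B_k)=O(d)$ and $\tr(\B_k^2)=O(d)$. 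Condition~\ref{cond2} applied to $\x_k$ and the $\x_k$-independent matrix $\B_k$ now gives $\E[(\x_k^\top\B_k\x_k)^2\mid \B_k]\le \alpha\,\tr(\B_k^2) + (\tr\B_k)^2 = O(d^2)$ on $\Ec_{j^\star}$, delivering $\E[D_k^2\one_{\Ec'}] = O(d^2/m^2)$ as required.

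The main obstacle I anticipate is the conditioning on $\Ec'$: this event correlates most of the rows and disrupts the clean i.i.d.\ martingale structure that one would use under the unconditional measure. The workaround, already used inside the proof of Theorem~\ref{t:structural}, is to exploit the tripartite decomposition and, for each pair of indices $(i,k)$, replace the awkward event $\Ec'$ by a sub-event $\Ec_{j^\star}$ that simultaneously (i) dominates the conditioning indicator, (ii) is independent of the two relevant rows $\x_i,\x_k$ (so that the $\x_k$-expectation can be taken freely and Restricted Bai-Silverstein can be invoked), and (iii) is on its own strong enough to give the subspace-embedding-type bound $\Q_{-i,-k}\preceq 6\H^{-1}$ needed to control $\B_k$.
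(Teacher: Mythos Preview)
Your overall strategy (martingale $+$ Burkholder $+$ Sherman--Morrison $+$ tripartite subspace embedding) matches the paper's. The gap is precisely where you anticipated: the conditioning on $\Ec'$. Your proposed fix---run the martingale unconditionally and then ``truncate'' via $\one_{\Ec'}$---does not work. First, $Y=\tr(\Q_{-i}\H)$ is not even defined with full probability (the sketch can be singular, e.g.\ for Rademacher or \less), so the unconditional $\E[Y]$ and the increments $D_k$ are ill-posed. Second, even after any regularisation of $Y$, the inequality $\Var_{\Ec'}[Y]\le C\sum_k\E[D_k^2\one_{\Ec'}]$ is not a valid consequence of Burkholder: the indicator $\one_{\Ec'}$ depends on all rows and does not factor through the filtration, so inserting it destroys the martingale-difference property.

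The paper resolves this by running the entire martingale under the conditional measure $\E_{\Ec'}$ from the start, writing $(\E_{\Ec',k}-\E_{\Ec',k-1})[\tr(\Q_{-i}\H)]=\psi_k+\xi_k$, where $\psi_k$ is your Sherman--Morrison ratio and
\[
\xi_k:=-(\E_{\Ec',k}-\E_{\Ec',k-1})[\tr(\Q_{-i,-k}\H)]
\]
is the \emph{conditioning artifact}: under $\E_{\Ec'}$ the quantity $\tr(\Q_{-i,-k}\H)$ is \emph{not} independent of $\x_k$, because $\Ec'=\Ec_1\wedge\Ec_2$ itself depends on $\x_k$. The key extra step you are missing is to bound $|\xi_k|$. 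The paper does this by comparing $\E_{\Ec',k}$ with $\E_{\Ec_1,k}$ (choosing $\Ec_1$ independent of both $\x_i$ and $\x_k$, so the artifact vanishes under $\Ec_1$ alone), which yields $|\xi_k|\le O(\delta_2\cdot d)\le O(d/m)$ via the failure probability $\delta_2=\Pr(\neg\Ec_2)\le 1/m$.

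Two smaller points. (i) Your pointwise bound $|D_k|\le 2(\gamma/m)\x_k^\top\B_k\x_k$ is not correct as written, since $D_k$ is a difference of conditional expectations and $\B_k$ depends on rows other than $\x_k$; the second-moment version $\E[D_k^2]\le 4(\gamma/m)^2\E[(\x_k^\top\B_k\x_k)^2]$ is what you actually need and does hold. (ii) The paper does \emph{not} invoke Condition~\ref{cond2} here at all: instead of bounding $\E[(\x_k^\top\B_k\x_k)^2]$ via Restricted Bai--Silverstein, it uses the denominator in the Sherman--Morrison ratio more aggressively, observing that $\B_k\preceq 6\,\A\Q_{-i,-k}\A^\top$ on $\Ec_1$, so the ratio is bounded by $\tfrac{6u}{1+u}\le 6$ with $u=(\gamma/m)\x_k^\top\A\Q_{-i,-k}\A^\top\x_k$, and then $\big(\tfrac{6u}{1+u}\big)^2\le 36u$ gives $\E_{\Ec'}[\psi_k^2]=O(d/m)$ from a first-moment (trace) bound alone. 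This keeps the constant in Lemma~\ref{l:applying-burkholder} genuinely absolute, independent of~$\alpha$.
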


\noindent
Using Lemma \ref{l:applying-burkholder}, we conclude that
$\E_{\Ec'}[(\gamma_i-\bar\gamma)^2]\leq C'\cdot \alpha d/m^2$ for some
absolute constant $C'$. It remains to bound the term:
\begin{align*}
  |\gamma-\bar\gamma| = \bigg|\frac m{m-d} -
  \Big(1+\frac{\gamma}{m}\tr(\E_{\Ec'}[\Q_{-i}]\H)\Big)\bigg| =
\frac{|d-\tr(\E_{\Ec'}[\Q_{-i}]\H)|}{m-d}. 
\end{align*}
Observe that we have:
\begin{align*}
  \big|d-\tr\,\E_{\Ec'}[\Q_{-i}]\H\big|
&=    \big|\tr((\E_{\Ec}[\Q]-\E_{\Ec'}[\Q_{-i}])\H)+ \tr(\I-\E_{\Ec}[\Q]\H)\,\big|
\\
  &=\big|\tr((\E_{\Ec}-\E_{\Ec'})[\Q_{-i}]\H) +\tr(-\Z_1) +
    \tr(\Z_0+\Z_1+\Z_2)\big|
  \\
  &\leq \big|\tr((\E_{\Ec}-\E_{\Ec'})[\Q_{-i}]\H)\big| + |\tr(\Z_0)| + |\tr(\Z_2)|.
\end{align*}
The first two terms can be bounded similarly as we did
$\|\H^{\frac12}\Z_0\H^{-\frac12}\|$, obtaining that
$|\tr(\Z_0)|=O(\alpha d/m)$, and also:
\begin{align*}
  \big|\tr((\E_{\Ec}-\E_{\Ec'})[\Q_{-i}]\H)\big|
  = \frac{\delta_3}{1-\delta_3}\big|\tr((\E_{\Ec'}[\Q_{-i}]-\E_{\Ec'}[\Q_{-i}\mid\neg\Ec_3])\H)\big|=O(d\delta_3)=O(d/m^3).
\end{align*}
For the last term, we have:
\begin{align*}
  |\tr(\Z_2)|
  &=\Big|\E_{\Ec}\big[(\tfrac\gamma{\gamma_i}-1)\x_i^\top\A\Q_{-i}\A^\top\x_i\big]\Big|
  \\
  &\leq\Big|\E_{\Ec}\big[\tfrac{\gamma-\bar\gamma}{\gamma_i}\,
    \x_i^\top\A\Q_{-i}\A^\top\x_i\big]\Big| +
    \Big|\E_{\Ec}\big[\tfrac{\bar\gamma-\gamma_i}{\gamma_i}\,
    \x_i^\top\A\Q_{-i}\A^\top\x_i\big]\Big|
  \\
  &\leq
    |\gamma-\bar\gamma|\cdot\E_{\Ec}[\x_i^\top\A\Q_{-i}\A^\top\x_i]
    + (m-d)\cdot\E\big[|\gamma_i-\bar\gamma|\big]
  \\
  &\leq |\gamma-\bar\gamma|\cdot \frac6{1-\delta}\,
    \E_{\Ec'}[\x_i^\top\A\H^{-1}\A^\top\x_i]+
    (m-d)\cdot\sqrt{\E[(\gamma_i-\bar\gamma)^2]}
  \\
  &\leq |\gamma-\bar\gamma|\cdot \frac 6{1-\delta}\, d+
    \sqrt{C'\alpha d}.
\end{align*}
The bound for the second term $\Big|\E_{\Ec}\big[\tfrac{\bar\gamma-\gamma_i}{\gamma_i} \x_i^\top\A\Q_{-i}\A^\top\x_i\big]\Big|$ 
comes from the definition of $\gamma_i=1+\frac{\gamma}{m}\x_i^\top\A\Q_{-i}\A^\top\x_i$,  because $\x_i^\top\A\Q_{-i}\A^\top\x_i\big/\gamma_i \leq m/\gamma = m-d$.

Thus, putting this together we conclude that:
\begin{align*}
  |\gamma-\bar\gamma|\cdot \big(1- \tfrac{6d}{(m-d)(1-\delta)}\big) \leq
  O(\alpha d/m^2) + O(\sqrt{\alpha d}/m) = O(\sqrt{\alpha d}/m),
\end{align*}
which for $m\geq8 d$ and $\delta\leq 1/m^3$ implies that $(\gamma-\bar\gamma)^2 =
O(\alpha d/m^2)$ so we get $T_1=O(\sqrt{\alpha d}/m)$. Finally, we
obtain the bound $\|\H^{\frac12}\Z_2\H^{-\frac12}\|\leq T_1\cdot
T_2\cdot T_3= O(\alpha\sqrt d/m)$, which concludes the proof.

\subsection{Proof of Lemma \ref{l:applying-burkholder}}
\label{s:applying-burkholder}
Let $\Q_{-ij}$ denote the matrix
$(\gamma\A^\top\S_{-ij}^\top\S_{-ij}\A)^{-1}$ where $\S_{-ij}$ is the
matrix $\S_m$ without the $i$th and $j$th rows and $\gamma=\frac{m}{m-d}$.
Let $\E_{\Ec',j}[\cdot]$ be the conditional expectation with respect to
$\Ec'$ and the $\sigma$-field $\mathcal F_j$ generating the rows
$\frac1{\sqrt m}\x_1^\top \ldots, \frac1{\sqrt m}\x_j^\top$ of $\S$.
First note that
\begin{align*}
  \tr (\Q_{-i} - \E_{\Ec'} \Q_{-i}) \A^\top \A
  &= \E_{\Ec',m} [\tr \Q_{-i} \A^\top
  \A]  - \E_{\Ec',0}[ \tr \Q_{-i} \A^\top \A]\\ 
  &= \sum_{j=1}^m \left( \E_{\Ec',j}[\tr \Q_{-i} \A^\top \A] -
    \E_{\Ec',j-1}[\tr \Q_{-i} \A^\top \A]\right)
    = -\sum_{j=1}^m (\psi_j + \xi_j),
  \\
  \text{where}\quad \psi_j
  &=(\E_{\Ec',j}-\E_{\Ec',j-1})[\tr (\Q_{-ij} - \Q_{-i}) \A^\top \A]
  \\
  \text{ and }\ \xi_j&=-(\E_{\Ec',j}-\E_{\Ec',j-1})[\tr\,\Q_{-ij} \A^\top \A].
\end{align*}
 This forms a martingale difference sequence and hence falls within the scope of the Burkholder inequality \cite{burkholder1973distribution}, recalled as follows. We mention that similar martingale decomposition techniques are common in random matrix theory, see e.g., \cite{bai2010spectral}. Also, for the case $L=2$ that we will use, Burkholder inequality is nothing but the law of iterated variance.

\begin{lemma}[\cite{burkholder1973distribution}]\label{l:burkholder}
For $\{ x_j \}_{j=1}^m$ a real martingale difference sequence with
respect to the increasing $\sigma$ field $\mathcal F_j$, we have, for
$L > 1$, there exists $C_L > 0$ such that 
\[
  \E \bigg[
    \Big| \sum_{j=1}^m x_j \Big|^L
  \bigg]
  \leq C_L \cdot \E \bigg[
    \Big( \sum_{j=1}^m |x_j|^2 \Big)^{L/2}
  \bigg].
\]
\end{lemma}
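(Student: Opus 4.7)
The plan is to prove Burkholder's inequality by treating the cases $L \geq 2$ and $1 < L < 2$ separately, with $L = 2$ as a trivial base case. Throughout, write $S_n = \sum_{j=1}^n x_j$ for the partial sums, $T_m = (\sum_{j=1}^m x_j^2)^{1/2}$ for the square function, and $S_m^* = \max_{j \leq m} |S_j|$ for the maximal function.

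The base case $L = 2$ is immediate from orthogonality of martingale differences: for $i < j$, conditioning on $\mathcal{F}_{j-1}$ gives $\E[x_i x_j] = \E[x_i \E[x_j \mid \mathcal{F}_{j-1}]] = 0$, so $\E[S_m^2] = \sum_{j=1}^m \E[x_j^2] = \E[T_m^2]$ and one may take $C_2 = 1$. This is the \emph{law of iterated variance} referenced in the paper's subsequent application of the result.

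For $L > 2$ my plan is a discrete It\^o-type Taylor expansion combined with Doob's maximal inequality. Applying a second-order Taylor expansion of $\phi(x) = |x|^L$ around $S_{j-1}$, one writes $\phi(S_j) - \phi(S_{j-1}) = \phi'(S_{j-1})\, x_j + \tfrac{1}{2}\phi''(\xi_j)\, x_j^2$ for some $\xi_j$ between $S_{j-1}$ and $S_j$. Summing over $j$ and taking expectations, the first-order terms vanish because $\phi'(S_{j-1})$ is $\mathcal{F}_{j-1}$-measurable and $\E[x_j \mid \mathcal{F}_{j-1}] = 0$, leaving
\[ \E\bigl[|S_m|^L\bigr] \;\leq\; \tfrac{L(L-1)}{2}\, \E\bigl[(S_m^*)^{L-2}\, T_m^2\bigr]. \]
H\"older's inequality with conjugate exponents $L/(L-2)$ and $L/2$ gives $\E[(S_m^*)^{L-2} T_m^2] \leq \bigl(\E[(S_m^*)^L]\bigr)^{(L-2)/L} \bigl(\E[T_m^L]\bigr)^{2/L}$, and Doob's maximal inequality yields $\E[(S_m^*)^L] \leq (L/(L-1))^L\, \E[|S_m|^L]$. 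Dividing through by $\bigl(\E[|S_m|^L]\bigr)^{(L-2)/L}$, after first truncating the differences to ensure finiteness and then passing to monotone limits, produces the desired inequality with an explicit constant $C_L$ of order $L^L$.

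For $1 < L < 2$ the Taylor-expansion argument breaks down since $\phi''$ is unbounded near the origin, so I would instead use a duality argument with the conjugate exponent $L' = L/(L-1) > 2$, pairing $(x_j)$ against a dual martingale $(y_j)$ normalized so that $\bigl(\E[T_m(y)^{L'}]\bigr)^{1/L'} \leq 1$ and invoking the case $L' > 2$ already established. The main obstacle is handling this lower range carefully; an alternative is Burkholder's original good-$\lambda$ inequality $\Pr(S_m^* > 2\lambda,\, T_m \leq \delta\lambda) \leq c\delta^2\, \Pr(S_m^* > \lambda)$, which upon integrating against $L \lambda^{L-1}\,d\lambda$ yields the $L^L$ bound for all $1 < L < \infty$ uniformly, unifying the two cases at the cost of a more intricate pointwise stopping-time argument.
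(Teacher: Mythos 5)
The paper does not prove this lemma at all: it is quoted verbatim from \cite{burkholder1973distribution}, and the authors explicitly remark that the only case they use is $L=2$, where the inequality reduces to the orthogonality identity $\E[(\sum_j x_j)^2]=\sum_j\E[x_j^2]$ (the ``law of iterated variance''). Your base case establishes exactly this, with $C_2=1$, so for the purposes of this paper your proposal is already complete after its first paragraph.

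As a proof of the full statement, your $L>2$ argument (second-order Taylor expansion of $|x|^L$, vanishing of the first-order terms by the martingale property, then H\"older plus Doob's maximal inequality and division by $(\E[|S_m|^L])^{(L-2)/L}$) is a correct and standard route to a constant $C_L$ of order $L^L$, modulo the finiteness issue you acknowledge; note, however, that naively truncating the increments $x_j$ destroys the martingale-difference property, so the truncation should be implemented via stopping times (e.g.\ stopping when $|S_j|$ or $T_j$ first exceeds a level $K$) rather than by capping the $x_j$. The genuine gap is in the range $1<L<2$: the duality argument you sketch requires the \emph{reverse} inequality $\E[T_m(y)^{L'}]\leq C\,\E[|Y|^{L'}]$ for the dual martingale at the conjugate exponent $L'>2$, i.e.\ the square function dominated by the terminal value, which is the opposite direction from the one you established for exponents above $2$. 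So ``invoking the case $L'>2$ already established'' does not close the argument. Your fallback via Burkholder's good-$\lambda$ inequality is the correct classical fix and does cover all $1<L<\infty$ uniformly, but as written it is only named, not carried out. Since the paper needs nothing beyond $L=2$, this gap is immaterial to the application, but it would have to be filled to claim the lemma in the stated generality.
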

Note that for each pair $i,j$,  one of $\Ec_1,\Ec_2$ is independent 
of both $\x_i$ and $\x_j$. Without loss of generality, suppose that this is
$\Ec_1$. Then, in particular, $\Ec_1$ implies that $\A\Q_{-ij}\A^\top\preceq 6\,\I$.
Thus, conditioned on $\Ec_1$, it follows that
\begin{align*}
  \tr(\Q_{-ij} - \Q_{-i}) \A^\top \A
  &=
\tr\bigg(  \frac{\frac\gamma m\Q_{-ij} \A^\top
  \x_j \x_j^\top \A \Q_{-ij}}{ 1 + \frac\gamma m\x_j^\top \A \Q_{-ij}
  \A^\top \x_j }\A^\top\A\bigg)
\\
&=
     \frac{\frac\gamma m\x_j^\top(\A\Q_{-ij}\A^\top)^2\x_j}
     {1+\frac\gamma m\x_j^\top\A\Q_{-ij}\A^\top\x_j}
\leq
     \frac{6\cdot \frac\gamma m\x_j^\top\A\Q_{-ij}\A^\top\x_j}
     {1+\frac\gamma m\x_j^\top\A\Q_{-ij}\A^\top\x_j}\leq 6,
\end{align*}
which implies that $|\psi_j|\leq 6$. We now provide a bound
on the second moment of~$\psi_j$, bounding the $\Ec'$-conditional
expectation in terms of the $\Ec_1$-conditional expectation
analogously as in \eqref{eq:conditional-bound}:
\begin{align*}
  \E_{\Ec'}[\psi_j^2]
  &\leq
  2\cdot\E_{\Ec_1}\Bigg[\bigg(\frac{6\cdot \frac\gamma m\x_j^\top\A\Q_{-ij}\A^\top\x_j}
     {1+\frac\gamma m\x_j^\top\A\Q_{-ij}\A^\top\x_j}\bigg)^2\Bigg]
\leq
    72\cdot \E_{\Ec_1}[\tfrac\gamma m\x_j^\top\A\Q_{-ij}\A^\top\x_j]
      \\
  &=
    72\cdot \frac{\E_{\Ec_1}[\tr\,\A\Q_{-ij}\A^\top]}{m-d}\leq 72\cdot
    6\cdot \frac d{m-d}.
\end{align*}
We now aim to bound
$|\xi_j|$. Since $\Ec_1$ is independent of $\x_j$, we
have $\E_{\Ec_1,j}[\tr\,\Q_{-ij}\A^\top\A] = \E_{\Ec_1,j-1}[\tr\,\Q_{-ij}\A^\top\A]$.
Furthermore, letting $\delta_2=\Pr(\neg\Ec_2)$, we have:
\begin{align*}
  \E_{\Ec_1,j-1}[\tr\,\Q_{-ij}\A^\top\A]
  &=
    \E_{\Ec',j-1}[\tr\,\Q_{-ij}\A^\top\A](1-\delta_2)+\E_{\Ec_1,j-1}[\tr\,\Q_{-ij}\A^\top\A\mid\neg\Ec_2]\delta_2,
  \\
  \E_{\Ec_1,j}[\tr\,\Q_{-ij}\A^\top\A]
  &=
    \E_{\Ec',j}[\tr\,\Q_{-ij}\A^\top\A](1-\delta_2)+\E_{\Ec_1,j}[\tr\,\Q_{-ij}\A^\top\A\mid\neg\Ec_2]\delta_2.
\end{align*}
Thus, subtracting the two equalities from each other, we conclude that:
\begin{align*}
  |\xi_j|
  &=|(\E_{\Ec',j}-\E_{\Ec',j-1})[\tr\,\Q_{-ij} \A^\top \A]|
\\
  &\leq \delta_2\cdot\frac{|(\E_{\Ec_1,j}-\E_{\Ec_1,j-1})[\tr\,\Q_{-ij} \A^\top \A \mid\neg\Ec_2]|}
    {1-\delta_2}
  \\
  &\leq 2\delta_2\cdot 6d \leq 12\cdot d/m,\quad\text{for}\quad\delta_2\leq 1/m.
\end{align*}
So, with $x_j = \psi_j+\xi_j$ and $X =- \tr (\Q_{-i} - \E_{\Ec'}
[\Q_{-i}]) \A^\top \A$  in Lemma~\ref{l:burkholder}, for $L =2$ we get:
\begin{align*}
  \E_{\Ec'}[X^2]
  &\leq C_2 \cdot \sum_j \E_{\Ec'}\big[(\psi_j+\xi_j)^2\big]
\\
  &=
  C_2\cdot \sum_j \Big(\E_{\Ec'}[\psi_j^2] +
  2\,\E_{\Ec'}[\psi_j\xi_j]+\E_{\Ec'}[\xi_j^2]\Big)
\\
  &\leq C_2 m\cdot
  \Big(72\cdot 6\cdot\frac{d}{m-d}+2\cdot6\cdot12\cdot\frac{d}{m}+12^2\frac{d^2}{m^2}\Big)
  \leq Cd,
\end{align*}
where we also used that $m\geq 8d$, thus concluding the proof.

\section{Restricted Bai-Silverstein inequality}
\label{s:bai-silverstein}

In this section, we prove Theorem
\ref{t:generalized-bai-silverstein}. Specifically, we study
Condition \ref{cond2} (Restricted Bai-Silverstein), the second structural condition
for small inversion bias in Theorem 
\ref{t:structural}, which describes the deviation of a quadratic form
$\x^\top\B\x$ from its mean, for a random vector $\x$. We start by
showing Theorem~\ref{t:generalized-bai-silverstein}, a generalized
version of the lemma of Bai and Silverstein (Lemma
\ref{l:classical-bai-silverstein}), which applies when $\x$ 
has independent entries. Then, in Appendix \ref{s:less-bai-silverstein}
we show a similar result for a
leverage score sparsified vector, constructed as in
Definition~\ref{d:leverage-score-sparsifier}, which has
non-independent 
entries. Finally, in Appendix~\ref{s:lower-bai-silverstein} 
we consider random vectors used in other fast sketching methods,
and give lower bounds demonstrating why these methods do not provide
satisfactory guarantees for Condition \ref{cond2}.

\subsection{Proof of Theorem \ref{t:generalized-bai-silverstein}}

Since the assumptions on $\x$ only depend on the leverage scores of $\A$, and the conclusion is about the variance of a quadratic form, which only depends on the first four moments of the entries of $\x$, we can assume without loss of generality that the distribution of $\x$ only depends on the leverage scores of $\A$. We will prove the claim for such random vectors $\x$.

We start by proving the following result:

\begin{proposition}Let $\A$ be a fixed $n \times d$ matrix with $n\geq
  d$, and $\x$ be a random vector with independent entries with mean
  zero and unit variance, whose distribution only depends on the
  leverage scores of $\A$. Then, Condition \ref{cond2} (Restricted
  Bai-Silverstein) for the matrix $\A$ is equivalent to 
\begin{align*}
\lambda_{\max}\left((\U \circ \U)^\top \D (\U \circ \U)\right) \leq \alpha-2,
\end{align*}
where $\U$ is the $n \times d$ matrix of left singular vectors of $\A$ and $\D$ is the $n \times n$ matrix $\D = \diag (d_k)$, with $d_k = \E x_k^4 - 3$.
\end{proposition}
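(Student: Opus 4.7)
The plan is to compute the variance under independence, reduce Condition~\ref{cond2} to a scalar quadratic-form inequality over p.s.d.\ matrices, and then invoke Perron-Frobenius to identify the worst case as an eigenvalue.

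First I would expand $\E[(\x^\top\B\x)^2]=\sum_{i,j,k,l} B_{ij}B_{kl}\,\E[x_ix_jx_kx_l]$. Independence combined with $\E x_i=0$ and $\E x_i^2=1$ leaves only the index pairings $\{i{=}j{=}k{=}l\}$, $\{i{=}j\neq k{=}l\}$, $\{i{=}k\neq j{=}l\}$, $\{i{=}l\neq j{=}k\}$; subtracting $(\E[\x^\top\B\x])^2=\tr(\B)^2$ and using $\B=\B^\top$ yields
\[
\Var[\x^\top\B\x] \;=\; 2\tr(\B^2) \;+\; \sum_i d_i B_{ii}^2, \qquad d_i = \E x_i^4-3.
\]
Since $\P=\U\U^\top$, every p.s.d.\ $\B$ with $\B=\P\B\P$ is uniquely of the form $\U\M\U^\top$ for some $\M\succeq 0$, giving $\tr(\B^2)=\tr(\M^2)$ and $B_{ii}=\u_i^\top\M\u_i$ with $\u_i^\top$ the $i$-th row of $\U$. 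Condition~\ref{cond2} is thus equivalent to the inequality
\[
\sum_i d_i(\u_i^\top\M\u_i)^2 \;\le\; (\alpha-2)\,\tr(\M^2), \qquad \forall\,\M\succeq 0,
\]
which I denote by $(\star)$.

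Specializing $(\star)$ to diagonal $\M=\diag(\w)$ with $\w\ge 0$, one has $\u_i^\top\M\u_i=[(\U\circ\U)\w]_i$ and $\tr(\M^2)=\|\w\|^2$, so the restricted inequality becomes $\w^\top\mathbf{K}\w\le(\alpha-2)\|\w\|^2$ for $\w\ge 0$, with $\mathbf{K}=(\U\circ\U)^\top\D(\U\circ\U)$. Applying Lemma~\ref{l:Perron-Frobenius} to the dominating non-negative symmetric matrix $\mathbf{K}^+=(\U\circ\U)^\top\D^+(\U\circ\U)$, where $\D^+=\diag(d_i^+)$, the supremum of $\w^\top\mathbf{K}\w/\|\w\|^2$ over $\w\ge 0$ is attained at a non-negative eigenvector and equals $\lambda_{\max}(\mathbf{K})$; plugging this eigenvector back into $(\star)$ as $\M=\diag(\w)$ proves the necessity $\lambda_{\max}(\mathbf{K})\le\alpha-2$.

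The main obstacle is the converse direction: deriving $(\star)$ for arbitrary p.s.d.\ $\M$ from $\lambda_{\max}(\mathbf{K})\le\alpha-2$. Diagonalizing $\M=\V\Lambda\V^\top$ and setting $\tilde\U=\U\V$ rewrites $(\star)$ as $\boldsymbol\lambda^\top\mathbf{K}(\tilde\U)\boldsymbol\lambda\le(\alpha-2)\|\boldsymbol\lambda\|^2$ for $\boldsymbol\lambda\ge 0$, with $\mathbf{K}(\tilde\U)=(\tilde\U\circ\tilde\U)^\top\D(\tilde\U\circ\tilde\U)$. The working hypothesis that the distribution of $\x$ depends only on the leverage scores ensures $\D$ is unchanged under $\U\mapsto\tilde\U$, because $\|\tilde\u_i\|^2=\|\u_i\|^2$. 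The delicate remaining step is to bound $\boldsymbol\lambda^\top\mathbf{K}(\tilde\U)\boldsymbol\lambda$ uniformly by $\lambda_{\max}(\mathbf{K}(\U))\|\boldsymbol\lambda\|^2$; I expect this to follow from a Cauchy-Schwarz bound $B_{ii}^2=\bigl(\sum_k\mu_k(\u_i^\top\v_k)^2\bigr)^2\le\ell_i\sum_k\mu_k^2(\u_i^\top\v_k)^2$ combined with a careful treatment of the sign structure of $\D$ via the Perron-Frobenius eigenvector of $\mathbf{K}^+$.
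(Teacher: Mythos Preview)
Your variance computation and the reduction of Condition~\ref{cond2} to the inequality $(\star)$ are correct and match the paper exactly. From that point on, however, your argument diverges from the paper and contains a genuine gap.

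\medskip

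\textbf{Necessity.} Your Perron--Frobenius step does not do what you claim. Applying Lemma~\ref{l:Perron-Frobenius} to $\mathbf{K}^+=(\U\circ\U)^\top\D^+(\U\circ\U)$ yields a non-negative top eigenvector of $\mathbf{K}^+$, not of $\mathbf{K}$. When $\D$ has negative entries (i.e., some $\E x_i^4<3$), $\mathbf{K}$ is not entrywise non-negative, its top eigenvector can have mixed signs, and then
\[
\sup_{\w\ge 0}\frac{\w^\top\mathbf{K}\w}{\|\w\|^2}\ <\ \lambda_{\max}(\mathbf{K})
\]
is perfectly possible. So from $(\star)$ restricted to diagonal p.s.d.\ $\M$ you cannot conclude $\lambda_{\max}(\mathbf{K})\le\alpha-2$ this way. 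The paper does \emph{not} use Perron--Frobenius here at all; it simply allows $\g$ to range over all of $\R^d$ (not just $\g\ge 0$), so that the diagonal case is immediately equivalent to the eigenvalue bound on $\mG=\mathbf{K}+2\I$.

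\medskip

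\textbf{Sufficiency.} You correctly locate the difficulty: after diagonalizing $\M=\V\Lambda\V^\top$, one is left with $\mathbf{K}(\tilde\U)$ for $\tilde\U=\U\V$, and must relate it back to $\mathbf{K}(\U)$. Your proposed route (Cauchy--Schwarz on $B_{ii}^2$ combined with the sign structure of $\D$ via a Perron--Frobenius eigenvector of $\mathbf{K}^+$) is only a sketch and, as stated, does not close the gap; in particular the Cauchy--Schwarz bound $B_{ii}^2\le \ell_i\sum_k\mu_k^2(\u_i^\top\v_k)^2$ introduces an extra factor of $\ell_i$ that does not recombine into $\mathbf{K}(\U)$.

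The paper's approach is structurally different and does not attempt to compare $\mathbf{K}(\tilde\U)$ with $\mathbf{K}(\U)$ at all. Instead, it uses the hypothesis that the distribution of $\x$ depends only on the leverage scores of $\A$. Since $\U$ and $\tilde\U=\U\V$ have identical row norms (hence identical leverage scores), $\D$ is unchanged, and the paper argues that any condition depending only on the leverage scores and $\alpha$ that guarantees the variance inequality for \emph{diagonal} $\M$ automatically guarantees it for \emph{all} $\M$ (because verifying $(\star)$ for $\M$ with respect to $\U$ is the same as verifying it for $\M_d$ with respect to $\U\V$). This rotation-invariance reduction is the key idea you are missing; once it is in place, both directions of the equivalence reduce to the diagonal case and the eigenvalue characterization is immediate.
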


\begin{proof}
The Restricted Bai-Silverstein condition is equivalent to having, for all matrices $\B$ of the form $\B = \U \M \U^\top$, where $\U$ is the matrix of left singular vectors of $\A$,
$$\Var[\x^\top \B \x] \leq \alpha \cdot \tr(\B^2).$$
Let $\z = \U ^\top \x$. Then this is equivalent to
\begin{align}\label{c1}
\Var[\z^\top \M \z] \leq \alpha \cdot \tr(\M^2).
\end{align}

First we claim that it is enough to consider diagonal matrices $\M$.
Suppose that we have a condition $C(\diag(\U \U^\top),\alpha)$ that guarantees that \eqref{c1} holds for diagonal matrices $\M_d$, and that depends only on the leverage scores and $\alpha$. 
Now, consider a general matrix $\M$, and suppose it has the eigendecomposition $\M = \O\M_d \O^\top$ for a diagonal matrix $\M_d$. We can write the equivalences
\begin{align*}
\Var[\z^\top \M \z] &\leq \alpha \cdot \tr(\M^2)\\
\Var[\z^\top  \O\M_d \O^\top \z] &\leq \alpha \cdot \tr([\O \M_d \O^\top]^2)\\
\Var[\z_d^\top \M_d \z_d] &\leq \alpha \cdot \tr(\M_d^2)
\end{align*}
where $\z_d = \O^\top \z = (\U \O)^\top \x$. Now we apply the condition $C(\diag(\U _o \U_o^\top),\alpha)$ to $\U_o = \U \O$ and the diagonal matrix $\M_d$. This condition is applicable, because $\M_o$ is a diagonal matrix, and guarantees \eqref{c1} for $\M_d$. However, we also have that the row norms of $\U_o = \U \O$ are the same as the row norms of $\U$, because $\O$ simply acts by an orthogonal rotation of the rows. So $\diag(\U _o \U_o^\top) = \diag(\U  \U^\top)$. Thus, since the distributions of the sketches we consider only depend on the leverage scores of $\A$, which are the diagonals of the matrix $\A (\A^\top \A)^{-1} \A^\top = \U  \U^\top$, the condition $C(\diag(\U  \U^\top),\alpha)$ guarantees that \eqref{c1} holds for the original matrix $\M$. This shows that it is enough to establish the condition for diagonal matrices $\M$.

Hence we can rotate $\U$ by the eigenvectors $\O$ of $\M$ into $\U' = \U  \O$, and thus assume without loss of generality that $\M$ is diagonal, $\M = \diag(\g)$, where $\g$ is a vector. Then, the condition simplifies to
\begin{align*}
\Var[\z^\top \M \z]& = \Var[\sum_{i=1}^d z_i^2 g_i]\\
& = \g^\top \mG \g \leq \alpha \cdot \|\g\|^2,
\end{align*}
where $\mG $ is the covariance matrix of $\z\circ
\z$. Here the  symbol $\circ$ means entrywise product.
This condition has to be true for any vector $\g$. Thus, this condition says exactly that the largest eigenvalue of $\mG$ is at most $\alpha$:
\begin{align*}
\lambda_{\max}(\mG) \leq \alpha.
\end{align*}
Also we assume that $\E \x \x^\top = \I_m$, hence for any symmetric matrix $\F$ (see e.g., \cite{bai2010spectral,couillet2011random} and \cite[Lemma B.6.]{mei2019generalization}),
\begin{equation}
  \Var[\x^\top \F \x] = \sum_{k} d_k F_{kk}^2 + 2 \tr (\F^2) 
\end{equation}

where $d_k = \E x_k^4 - 3$. Therefore, applying this for $\F = \U  \diag(\g) \U^\top$, and matching terms, one has $\mG = (\U \circ \U)^\top \D \U\circ \U + 2 \I_n$, where $\D = \diag (d_k)$ and with $d_k = \E x_k^4 - 3$. This finishes the proof.
\end{proof}

We now continue with the proof of the main claim
(Theorem~\ref{t:generalized-bai-silverstein}). Based on the above
results, as 
long as the random vector $\x$ has independent entries of 
zero mean and unit variance, proving Condition \ref{cond2} boils down
to the control of the fourth moment of the 
distribution. 

Let $\bbR = \U \circ \U$, and let $\bbr_i$ denote its rows. Note that $\bbr_i$ have non-negative entries. Let $\L = \diag(1/\|\u_i\|^2) = \diag(1/\|\bbr_i\|_1) = \diag(1/l_i)$ be the matrix of inverse leverage scores of $\A$, which are also the inverse $\ell_1$ norms of the rows $\bbr_i$ of $\bbR$. We can simply discard the zero rows to ensure that this is well defined and $\|\bbr_i\|_1>0$ for all indices.

Then if we can bound $\lambda_{\max}\left(\bbR^\top \L \bbR\right) \leq \kappa$, it follows that $\lambda_{\max}\left(\bbR^\top \D \bbR\right) \leq C\kappa \leq \alpha-2$, which is our desired condition as long as $\alpha$ is sufficiently large. We will show this bound with $\kappa=1$.

Note that $\Q=\bbR^\top \L \bbR$ is a symmetric matrix and has non-negative entries, because 
the rows of $\bbR$, $\bbr_i =
\u_i \circ \u_i$ are the entry-wise squares of certain vectors, and the entries of $\L$ are all positive.
Moreover,  it is readily verified that the all ones vector $\one_d$ (which clearly has non-negative entries), is an eigenvalue of $\Q$ with unit eigenvalue, 
\begin{align*}
\Q\one_d = \one_d.
\end{align*}
In other words, $\Q$ is a symmetric doubly stochastic matrix.
In more detail, we have
\begin{align*}
\Q\one_d &= \bbR^\top \L \bbR \one_d 
= \sum_{i=1}^n \frac{\bbr_i\bbr_i^\top}{\|\bbr_i\|_1} \one_d 
= \sum_{i=1}^n \bbr_i\cdot\frac{\bbr_i^\top \one_d}{\|\bbr_i\|_1}.
\end{align*}
Now, clearly, since $\bbr_i$ have non-negative entries, we have $\bbr_i^\top \one_d = \|\bbr_i\|_1$. Therefore, we find
\begin{align*}
\Q\one_d & 
= \sum_{i=1}^n \bbr_i\cdot\frac{\|\bbr_i\|_1}{\|\bbr_i\|_1}
= \sum_{i=1}^n \bbr_i =  \one_d.
\end{align*}
In the last equality, we have used that, since the columns of $\U$ are orthogonal vectors, we
have that $\sum_{i=1}^n r_{ij} = 1$ for all $j=1,\ldots,d$.

Hence, the largest eigenvalue of $\Q$ is at least $1$. By the Perron-Frobenius theorem for non-negative matrices, it follows that the largest eigenvalue of $\Q$ is paired with an eigenvector $\v$ of non-negative entries, see e.g., \cite[claims~8.3.1~and~8.3.2]{meyer2000matrix}. 
We can write, for any such vector $\v \geq 0$, that
\begin{align*}
\Q\v &= \bbR^\top \L \bbR \v 
= \sum_{i=1}^n \frac{\bbr_i\bbr_i^\top}{\|\bbr_i\|_1} \v 
= \sum_{i=1}^n \bbr_i\cdot\frac{\bbr_i^\top \v}{\|\bbr_i\|_1}.
\end{align*}
Now, clearly $\bbr_i^\top \v/\|\bbr_i\|_1 \leq \|\v\|_\infty$. Since
each entry of each $\bbr_i$ is non-negative, we have that $0\leq (\Q\v)_j
\leq (\sum_{i=1}^n r_{ij}) \|\v\|_\infty$. 
As mentioned, we also have that $\sum_{i=1}^n r_{ij} = 1$. Hence,
\begin{align*}
0 \leq (\Q\v)_j &\leq \|\v\|_\infty, \,\, j=1,\ldots, n.
\end{align*}
Suppose $\v$ is an eigenvector of $\Q$ with eigenvalue $\lambda\geq0$, i.e., $\Q \v = \lambda \v$. Based on the above inequality, we find $\|\lambda \v\|_\infty \leq \|\v\|_\infty$, hence $\lambda\leq 1$.
This shows that the largest eigenvalue of $\Q$ is at most unity.
Thus, by the above reasoning $\lambda_{\max}\left(\bbR^\top \D
  \bbR\right) \leq C$, and thus Condition \ref{cond2} holds as long as $C+2\leq \alpha$. This finishes the proof.

\subsection{Restricted Bai-Silverstein for \less\ embeddings}
\label{s:less-bai-silverstein}
In this section we show that a sub-gaussian random vector sparsified
using our leverage score sparsifier (\less) satisfies Condition
\ref{cond2} (Restricted Bai-Silverstein) with $\alpha=O(1)$. We use
this fact in Appendix~\ref{s:subspace-embedding} to prove Theorem \ref{t:lsse}.
\begin{lemma}[Restricted Bai-Silverstein for \less]\label{l:restricted-bai-silverstein}
Fix a matrix $\A\in\R^{n\times d}$ with rank $d$ and let $\xib$ be a
leverage score sparsifier for $\A$. For any p.s.d.\ matrix $\B$
restricted to the span of $\A$ and any 
$\x^\top=(x_1,...,x_n)$ having independent entries with mean zero,
unit variance and $\E[\x_i^4]=O(1)$,
\begin{align*}
  \Var\big[(\x\circ\xib)^\top\B (\x\circ\xib)\big] \leq O(1)\cdot \tr(\B^2).
\end{align*}
\end{lemma}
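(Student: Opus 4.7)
\textbf{Proof plan for Lemma \ref{l:restricted-bai-silverstein}.}
The plan is to use the law of total variance by conditioning on the sparsification pattern $\xib$, which encodes the multinomial counts $b_i$. Writing $\y=\x\circ\xib$, decompose
\begin{align*}
\Var[\y^\top\B\y]=\E_\xib\!\big[\Var_\x[\y^\top\B\y\mid\xib]\big]+\Var_\xib\!\big[\E_\x[\y^\top\B\y\mid\xib]\big].
\end{align*}
Note that, conditionally on $\xib$, the vector $\y$ has independent mean-zero entries with variances $\xi_i^2=b_i/(dp_i)$, so both conditional quantities have clean closed forms in terms of the $b_i$. The global randomness in $\xib$ is driven by the i.i.d.\ samples $s_1,\dots,s_d$ drawn from $(p_1,\dots,p_n)$, and throughout I will use that $p_i\asymp l_i/d$ so that $1/(dp_i)=O(1/l_i)$ and also that leverage scores satisfy $l_i\leq 1$.

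For the second (outer) term, $\E_\x[\y^\top\B\y\mid\xib]=\sum_i\xi_i^2 B_{ii}=\sum_{t=1}^d B_{s_t s_t}/(dp_{s_t})$, which is a sum of $d$ i.i.d.\ variables. Its variance is therefore bounded by $\tfrac{1}{d}\sum_i B_{ii}^2/p_i=O(\sum_i B_{ii}^2/l_i)$. For the first (inner) term, I would invoke the standard identity
\begin{align*}
\Var_\x[\y^\top\B\y\mid\xib]=\sum_i B_{ii}^2\,\xi_i^4\,(\E x_i^4-3)+2\,\tr\!\big((\diag(\xi_i^2)\B)^2\big),
\end{align*}
and then take expectation in $\xib$. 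Using that $b_i\sim\binom(d,p_i)$ gives $\E[\xi_i^4]\leq 1/(dp_i)+1$ and, from the multinomial covariance, $\E[\xi_i^2\xi_j^2]\leq 1$ for $i\neq j$, so that $\E_\xib[\tr((\diag(\xi_i^2)\B)^2)]\leq \tr(\B^2)+\sum_i B_{ii}^2/(dp_i)$. Combining with the bound on the outer term, both summands reduce to controlling the two quantities $\tr(\B^2)$ and $\sum_i B_{ii}^2/l_i$.

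The crucial inequality, and the step where the restriction $\B=\P\B\P$ actually gets used, is
\begin{align*}
\sum_i \frac{B_{ii}^2}{l_i}\;\leq\;\tr(\B^2).
\end{align*}
To prove it, write $\B=\U\M\U^\top$ where $\U$ is the matrix of left singular vectors of $\A$ (so the rows $\u_i$ satisfy $\|\u_i\|^2=l_i$), observe that $B_{ii}=\u_i^\top\M\u_i$, apply Cauchy--Schwarz to get $B_{ii}^2\leq\|\u_i\|^2\|\M\u_i\|^2=l_i\,\u_i^\top\M^2\u_i$, and sum using $\sum_i\u_i\u_i^\top=\U^\top\U=\I_d$, which yields $\sum_i\u_i^\top\M^2\u_i=\tr(\M^2)=\tr(\B^2)$. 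Putting all the estimates together gives $\Var[\y^\top\B\y]=O(\tr(\B^2))$ as required.

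The main obstacle, and the reason this does not follow directly from Theorem \ref{t:generalized-bai-silverstein}, is that the entries of $\xib$ (hence of $\y=\x\circ\xib$) are \emph{not} independent because of the multinomial constraint $\sum_i b_i=d$. The total variance decomposition is the device that lets us isolate this dependence into the outer variance of an i.i.d.\ sum, while inside the conditional expectation the entries of $\y$ regain independence and the classical quadratic-form variance identity applies.
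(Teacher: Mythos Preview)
Your proposal is correct and follows essentially the same route as the paper: both condition on $\xib$ and apply the law of total variance, use the classical Bai--Silverstein identity for the inner term, exploit the i.i.d.\ structure of the samples $s_1,\dots,s_d$ for the outer term, and rely on the same Cauchy--Schwarz step $(\u_i^\top\M\u_i)^2\le l_i\,\u_i^\top\M^2\u_i$ to convert the leverage-weighted diagonal sum into $\tr(\B^2)$. The only cosmetic difference is that the paper carries out the bookkeeping via the sample-index representation $\U_{\xib}^\top\U_{\xib}=\sum_t \u_{s_t}\u_{s_t}^\top/(dp_{s_t})$ (splitting the trace into $t=r$ and $t\neq r$ terms), whereas you work in the coordinate-index representation via the multinomial moments $\E[\xi_i^2\xi_j^2]$ and isolate the inequality $\sum_i B_{ii}^2/l_i\le\tr(\B^2)$ as a standalone step; these are equivalent computations.
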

\begin{proof}
Let $\U=\A(\A^\top\A)^{-1/2}$ be the orthonormal basis matrix for the
span of $\A$, and let $\U_{\xib}=\diag(\xib)\U$. Note that $\B=\U\U^\top\B\U\U^\top=\U\C\U^\top$ for
$\C=\U^\top\B\U$. It follows
that:
\begin{align*}
  \Var\big[(\x\circ\xib)^\top\B(\x\circ\xib)\big]
  &= \Var[\x^\top\U_{\xib}\C\U_{\xib}^\top\x]
    =\Var\big[\tr(\U_{\xib}\C\U_{\xib}^\top)\big]
    + \E\big[\Var_{\xib}[\x^\top\U_{\xib}\C\U_{\xib}^\top\x]\big],
\end{align*}
where $\Var_{\xib}$ denotes the conditional variance with respect to
$\xib$. Recall that $\xi_i=\sqrt{\frac{b_i}{dp_i}}$, where
  $b_i=\sum_{t=1}^d1_{[s_t=i]}$, with $s_t$ sampled i.i.d. from $(p_1,...,p_n)$ and
  $p_i\approx_{O(1)} \|\u_i\|^2/d$ (here, $\u_i^\top$ denotes the $i$th
  row of $\U$). Thus,
  $\U_{\xib}^\top\U_{\xib}=\sum_{t=1}^d\frac{\u_{s_t}\u_{s_t}^\top}{d
    p_{s_t}}$ and it follows that:
  \begin{align*}
    \Var\big[\tr(\U_{\xib}\C\U_{\xib}^\top)\big]
    &=\Var\bigg[\sum_{t=1}^d\frac{\u_{s_t}^\top\C\u_{s_t}}{dp_{s_t}}\bigg]
      =d\,\Var\bigg[\frac{\u_{s_1}^\top\C\u_{s_1}}{dp_{s_1}}\bigg]
      \\
      &\leq d\,\E\bigg[\frac{\tr(\C\u_{s_1}\u_{s_1}^\top\C\u_{s_1}\u_{s_1}^\top)}{d^2p_{s_1}^2}\bigg]
        \leq\E\bigg[\frac{\|\u_{s_1}\|^2}{dp_{s_1}}\,\frac{\u_{s_1}^\top\C^2\u_{s_1}}{p_{s_1}}\bigg]
    \\
    &\leq O(1)\,\E\bigg[\frac{\u_{s_1}^\top\C^2\u_{s_1}}{p_{s_1}}\bigg] = O(1)\,\tr(\U\C^2\U^\top)=O(1)\,\tr(\B^2).
  \end{align*}
The Bai-Silverstein inequality (Lemma \ref{l:classical-bai-silverstein}) implies that
$\Var_{\xib}[\x^\top\U_{\xib}\C\U_{\xib}^\top\x]\leq O(1)\cdot
\tr\big((\U_{\xib}\C\U_{\xib}^\top)^2\big)$, so we have: 
\begin{align*}
  \E\big[\Var_{\xib}[\x^\top\U_{\xib}\C\U_{\xib}^\top\x]\big]
  &\leq O(1)\cdot\E\Big[\tr\big((\U_{\xib}\C\U_{\xib}^\top)^2\big)\Big]
   =O(1)\cdot
    \E\bigg[\tr\bigg(\Big(\sum_{t=1}^d\frac{\C\u_{s_t}\u_{s_t}^\top}{dp_{s_t}}\Big)^2\bigg)\bigg]
  \\
  &\leq O(1)\sum_{t=1}^d
    \E\bigg[\frac{\tr(\C\u_{s_t}\u_{s_t}^\top \C\u_{s_t}\u_{s_t}^\top)}{d^2p_{s_t}^2}\bigg]
    +O(1)\sum_{t\neq r}\E\bigg[\frac{\tr(\C\u_{s_t}\u_{s_t}^\top\C
    \u_{s_r} \u_{s_r}^\top)}{dp_{s_t}\cdot dp_{s_r}}\bigg]
  \\
  &\leq O(1)\,\tr(\B^2) +
    O(1)\,\tr\Big(\C\,\E\Big[\frac{\u_{s_1}\u_{s_1}^\top}{p_{s_1}}\Big]\C\,
    \E\Big[\frac{\u_{s_2}\u_{s_2}^\top}{p_{s_2}}\Big]\Big)\leq O(1)\cdot\tr(\B^2).
\end{align*}
Thus, we obtain the desired bound:
\begin{align*}
\Var\big[(\x\circ\xib)^\top\B(\x\circ\xib)\big]\leq O(1)\cdot \tr(\B^2),
\end{align*}
which completes the proof.
\end{proof}

\subsection{Lower bounds for other sketching methods}
\label{s:lower-bai-silverstein}

In this section, we show lower bounds for Condition \ref{cond2} (Restricted Bai-Silverstein) in
the context of existing fast sketching techniques. To do that, we first discuss the basic requirement of the
framework defined by Theorem \ref{t:structural}, namely that the
sketching matrix $\S$ must have i.i.d.\ rows.
  
\textbf{Fast sketches with i.i.d.\ rows.}
In our discussion, we will focus on three types of sketches:
approximate leverage score sampling \cite{drineas2006sampling},
Subsampled Randomized Hadamard 
Transform \cite{ailon2009fast}, and sparse embedding matrices
(extensions of the CountSketch \cite{cw-sparse}, see \cite{nn-sparse,cohen2016nearly}),
all of which can be implemented in time nearly linear in the input size.
The i.i.d.\ row assumption can be easily satisfied
by any row sampling sketch, including approximate leverage score
sampling. The SRHT technically does not satisfy this 
assumption, however if we treat the Randomized Hadamard Transform as a
preprocessing step (given that it does not distort the covariance matrix),
then the subsampling part can be analyzed analogously as leverage score
sampling. In the case of sparse embedding matrices, the most commonly
studied variant has a fixed number of non-zeros per column of $\S$ and
so it does not have independent rows, however, it is known that a variant
with independently sparsified entries (which fits into the setup of
Theorem \ref{t:structural}) achieves nearly matching approximation
guarantees \cite{cohen2016nearly}. 

\textbf{Leverage score sampling.} Let $\S$ be a row sampling sketch of
size $m$, i.e., each row is distributed independently as $\frac1{\sqrt m}\x^\top$, where
$\x=\frac1{\sqrt{p_{s}}}\e_{s}$ and $s$ is an index drawn from
distribution $(p_1,...,p_n)$. Given a 
matrix $\A\in\R^{n\times d}$ of rank $d$, we call this an approximate
leverage score sampling sketch if $p_i\approx_{O(1)}l_i/d$ for all $i$, where
$l_i=\a_i^\top(\A^\top\A)^{-1}\a_i$ is the $i$th leverage score
of~$\A$. We will present two lower bound constructions. 

1. \underline{Approximate sampling and arbitrary $\A$}.
Now, suppose that $n$ is even and consider the following
specific example: 
\begin{align*}
  p_{j} =
  \begin{cases}
    l_j/2d,
    &\text{for }j\leq n/2,\\[1mm]
    3l_j/2d,
    &\text{otherwise.}
  \end{cases}
\end{align*}
Further, consider the matrix $\B=\A(\A^\top\A)^{-1}\A^\top = \P$,
which is the projection onto the column-span of $\A$, and therefore
satisfies the restriction requirement in the Restricted
Bai-Silverstein condition. Then, since
$\tr(\B^2)=\tr(\P^2)=\tr(\P)=d$, we have: 
\begin{align*}
  \Var[\x^\top\B\x]
  &= \E\Big[\big(\e_s^\top\A(\A^\top\A)^{-1}\A^\top\e_s/p_s -
    d\big)^2\Big]
= \E\Big[\big(l_s/p_s - d\big)^2\Big]
\geq (d/3)^2
\end{align*}

2. \underline{Exact sampling and a specific $\A$}.
  Suppose that $\A^\top\A=\I$, each $\a_i$ is a standard
  basis vector scaled by $\sqrt{d/n}$ and we are sampling index
  $s$ according to exact leverage scores, i.e., uniformly at random. Then, letting $\x =
  \frac1{\sqrt{p_s}}\e_s$ and $\B=\A\C\A^\top$, we have:
  \begin{align*}
    \Var[\x^\top\B\x]
    & = \E\Big[\big(\x^\top\B\x-\tr(\B)\big)^2\Big]
=\E\Big[\big(d\cdot\frac{\a_s^\top\C\a_s}{\a_s^\top(\A^\top\A)^{-1}\a_s}
      - \tr(\C)\big)^2\Big]
    \\
    &=d^2\cdot\frac1d\sum_{j=1}^d\Big(C_{jj} -
      \frac1d\sum_{i=1}^dC_{ii}\Big)^2
=d^2\cdot \Omega(1),\quad\text{if}\quad
      C_{ii}=
      \begin{cases}1/2, &\text{for even $i$,}\\
        3/2,&\text{for odd $i$.}
      \end{cases}
  \end{align*}
In both constructions, we have $\tr(\B^2)=\Theta(d)$, so this implies
that for leverage score sampling, Condition \ref{cond2} can
only be shown with factor $\alpha=\Omega(d)$, as opposed to $O(1)$ for
sub-gaussian sketches and \less\ embeddings.

\textbf{Data-oblivious sparse embeddings.} Let $\S$ be a sketch of
size $m$, where each row is distributed independently as $\frac1{\sqrt
  m}\x^\top$ and $\x=(\sqrt{\frac ms}b_1r_1,...,\sqrt{\frac
  ms}b_nr_n)$, with $b_i\sim\mathrm{Bernoulli}(\frac sm)$ and $r_i$ 
  distributed as a uniformly random sign. While this is not the most
  commonly studied variant of a sparse embedding, it is known to
  satisfy the subspace embedding property for sketch size $m=O(d\log
  d)$ with sparsity level $s=O(\log d)$ \cite{cohen2016nearly}, which
 matches the state-of-the-art for sparse embeddings. Other sparse
 embeddings have non-i.i.d.\ row distributions
 \cite{cw-sparse,nn-sparse,mm-sparse}, and so they do not fit into the 
 framework laid out by Theorem \ref{t:structural}. The key difference
 between the sparsification of $\x$ relative to our \less\ embeddings
 is that it is data-oblivious. We can exploit that in our lower bound
 example by choosing an extremely skewed leverage score distribution
 of matrix $\A$. In particular, suppose that $\A^\top\A=\I$ and
 moreover, $\a_i=\e_i$ for $i=1,...,k$ (where $1\leq k\leq d$) and for
 all $i>k$, the first $k$ coordinates of $a_i$ are zero. This
 construction ensures that the first $k$ leverage scores of $\A$ are
 equal $1$. Once again setting $\B=\A(\A^\top\A)^{-1}\A^\top$, we get:
 \begin{align*}
   \Var[\x^\top\B\x] \geq \sum_{i=1}^k\Var\Big[\frac ms b_ir_i\Big] =
   k\cdot \frac ms\Big(1-\frac sm\Big).
 \end{align*}
If we let $k=\Omega(d)$, then we get $\Var[\x^\top\B\x]\geq\Omega(m/s)\cdot
\tr(\B^2)$. Thus, unless we zero-out merely a constant
fraction of entries of $\S$, the sketching matrix will not satisfy
Condition \ref{cond2} with a constant factor
$\alpha=O(1)$. We conjecture that this example can be extended to show
a general lower bound on the inversion bias, as we did for approximate
leverage score sampling.

\section{Subspace embedding guarantee for \less\ embeddings}
\label{s:subspace-embedding}
In this section, we prove Lemma \ref{l:subspace-embedding} and Theorem \ref{t:lsse}.
In particular, we prove that \less\ embeddings achieve the subspace
embedding property for a sketch of size $O(d\log d)$ (Lemma \ref{l:subspace-embedding}), thus
establishing Condition \ref{cond1}. Then,
at the end of the section we briefly discuss how to combine Lemmas
\ref{l:subspace-embedding} and \ref{l:restricted-bai-silverstein},
using the structural conditions via Theorem \ref{t:structural}, to obtain Theorem \ref{t:lsse}. 

\subsection{Proof of Lemma \ref{l:subspace-embedding}}
First, note that instead of directly showing the subspace embedding of
$\S\A$ for the span of $\A$, it suffices to show the guarantee when
replacing $\A$ with its orthonormal basis matrix
$\U=\A(\A^\top\A)^{-1/2}$, since
$\A^\top\S^\top\S\A=(\A^\top\A)^{\frac12}\U^\top\S^\top\S\U(\A^\top\A)^{\frac12}$. 
Then, a standard technique, e.g., as used for leverage score sampling
sketches, relies on the following decomposition of
$\U^\top\S^\top\S\U$ as an average of independent rank-one p.s.d. random matrices:
\begin{align*}
  \U^\top\S^\top\S\U = \sum_{i=1}^m\U^\top\s_i\s_i^\top\U,
\end{align*}
where $\s_i^\top$ represents the $i$th row of $\S$.
For standard leverage score sampling sketches it suffices to use
the matrix Chernoff bound \cite[Theorem~1.1]{tropp2012user}, which
uses an almost sure bound on each rank-one matrix to ensure
concentration around the mean, $\E[\U^\top\S^\top\S\U]=\I$. However, in the case of
a leverage score sparsified 
embedding an almost sure bound is not sufficient. Instead, we show
that the rank-one matrices $\U^\top\s_i\s_i^\top\U$ exhibit
sub-exponential tails on all of their moments, as required by the
following variant of the matrix Bernstein bound.
\begin{lemma}[{\cite[Theorem~6.2]{tropp2012user}}]\label{theo:sub-exp-Bernstein}
For $i=1,2,...$, consider a finite sequence $\X_i$ of $d\times d$
independent and symmetric random matrices such that   
\[
	\E[\X_i] = \mathbf{0}, \quad \E[\X_i^p] \preceq \frac{p!}2
        \cdot R^{p-2} \A_i^2\quad\textmd{for}\quad p=2,3,...
\]
Then, defining the variance parameter $\sigma^2 = \| \sum_i
\A_i^2 \|$, for any $t>0$ we have:
\begin{align*}
	\Pr \bigg\{ \lambda_{\max}\Big( \sum\nolimits_i \X_i \Big) \geq t
  \bigg\}& \leq d \cdot \exp \left( \frac{ -t^2/2 }{ \sigma^2 +
            R t } \right).
\end{align*}
\end{lemma}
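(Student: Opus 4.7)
The plan is to prove the matrix Bernstein inequality via the matrix Laplace transform method. For any $\theta > 0$, Markov's inequality gives
\[
\Pr\Big\{\lambda_{\max}\Big(\sum_i \X_i\Big) \geq t\Big\} \leq e^{-\theta t}\, \E\,\tr\,\exp\Big(\theta \sum_i \X_i\Big),
\]
so the task reduces to controlling the matrix moment generating function. I would invoke Lieb's concavity theorem (as in Tropp's user-friendly tail bounds) to subadditively split the expected trace exponential across the independent summands:
\[
\E\,\tr\,\exp\Big(\theta\sum_i \X_i\Big) \leq \tr\,\exp\Big(\sum_i \log \E[e^{\theta \X_i}]\Big).
\]

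Next I would bound $\E[e^{\theta \X_i}]$ using the sub-exponential moment hypothesis. Expanding the exponential, using $\E[\X_i] = 0$, and inserting $\E[\X_i^p] \preceq \tfrac{p!}{2} R^{p-2} \A_i^2$, I get a geometric series that sums for $0 < \theta < 1/R$:
\[
\E[e^{\theta \X_i}] \preceq \I + \frac{\theta^2}{2}\,\A_i^2 \sum_{p \geq 0}(R\theta)^p = \I + \frac{\theta^2 \A_i^2}{2(1 - R\theta)}.
\]
Using the operator-monotone inequality $\log(\I + \M) \preceq \M$ for p.s.d.\ $\M$, this yields $\log \E[e^{\theta \X_i}] \preceq g(\theta)\,\A_i^2$ with $g(\theta) = \tfrac{\theta^2}{2(1-R\theta)}$. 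Summing and using monotonicity of $\lambda_{\max}$ under $\tr\exp$, we obtain
\[
\E\,\tr\,\exp\Big(\theta \sum_i \X_i\Big) \leq d\, \exp\!\big(g(\theta)\,\sigma^2\big).
\]

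Substituting back into the Markov bound gives $\Pr\{\lambda_{\max}(\sum_i \X_i) \geq t\} \leq d\,\exp(g(\theta)\sigma^2 - \theta t)$ for all $\theta \in (0, 1/R)$. The final step is to optimize over $\theta$: the choice $\theta = t/(\sigma^2 + Rt)$ lies in the admissible range and, after elementary algebra, produces the stated bound
\[
d\cdot \exp\!\Big(\frac{-t^2/2}{\sigma^2 + R t}\Big).
\]

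The main technical obstacle is the Lieb concavity step, which is the nontrivial ingredient allowing us to commute expectation past the trace exponential of a sum; everything else is careful but essentially routine scalar-style calculation lifted to the operator setting via operator monotonicity of $\log$ and the Löwner order. A secondary subtlety is ensuring that the moment-series bound is valid as a Löwner inequality (not merely in trace), which follows because each $\A_i^2$ is p.s.d.\ and $\X_i^p \preceq \tfrac{p!}{2} R^{p-2}\A_i^2$ already holds in the Löwner sense by hypothesis.
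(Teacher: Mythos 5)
Your proposal is correct: this lemma is not proved in the paper at all but is imported verbatim from the cited reference (Tropp's Theorem~6.2), and your argument is precisely the standard proof given there --- matrix Laplace transform, Lieb-concavity subadditivity of the cumulant generating functions, the geometric-series bound $\E[e^{\theta\X_i}]\preceq \I+\tfrac{\theta^2}{2(1-R\theta)}\A_i^2$ for $\theta\in(0,1/R)$, the operator-monotone bound $\log(\I+\M)\preceq\M$, and the choice $\theta=t/(\sigma^2+Rt)$, which indeed yields exponent $-t^2/(2(\sigma^2+Rt))$. The only point worth making explicit in a full write-up is the justification for exchanging expectation with the matrix exponential series and summing the Löwner inequalities termwise, which follows routinely from the assumed moment bounds.
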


\noindent
We apply the above result for
$\X_i=\pm(\U^\top\s_i\s_i^\top\U-\frac1m\I)$, where
$\s_i=\frac1{\sqrt m}(\x_i\circ\xib)$ is a
leverage score sparsified sub-gaussian random vector. We next
establish the subexponential moment bound needed for the matrix
Bernstein bound.
\begin{lemma}\label{l:subexponential-moments}
Fix a matrix $\U\in\R^{n\times d}$ such that $\U^\top\U=\I$. Suppose that
$\xib$ is a leverage score sparsifier for $\U$ and $\x$ has
i.i.d. $O(1)$-sub-gaussian entries with mean zero and unit variance.
Then, there is $C=O(1)$ such that for all $p=2,3,...$ we have 
\begin{align*}
  \Big\|\E\Big[\Big(\U^\top(\x\circ\xib)(\x\circ\xib)^\top\U -
  \I\Big)^p\Big] \Big\|\leq \frac {p!}2 \cdot (Cd)^{p-1}.
\end{align*}
\end{lemma}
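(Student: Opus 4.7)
The plan is to bound the operator norm of the $d\times d$ matrix moment by reducing to a scalar moment of $\|Z\|$, where $Z=\U^\top(\x\circ\xib)(\x\circ\xib)^\top\U-\I$, and then control $\|Z\|$ via the restricted Hanson--Wright inequality (Lemma~\ref{l:restricted-hanson-wright}) combined with a covering argument. Since $Z$ is symmetric, $\|Z^p\|=\|Z\|^p$, so $\|\E[Z^p]\|\leq\E[\|Z\|^p]$, and the task becomes showing $\E[\|Z\|^p]\leq \frac{p!}{2}(Cd)^{p-1}$. This is the Bernstein-form moment bound corresponding to variance proxy $Cd$ and sub-exponential scale $Cd$, so it suffices to produce (i) a variance estimate $\E[\|Z\|^2]=O(d)$ and (ii) sub-exponential tails for $\|Z\|$ with scale $O(d)$, then integrate.

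Writing $\|Z\|=\sup_{v\in S^{d-1}}|v^\top Z v|$, for any unit $v$ I set $u=\U v\in\R^n$, which satisfies $\|u\|=1$ and lies in the column span of $\U$; then
\[
v^\top Z v \;=\; \x^\top\!\bigl(\diag(\xib)\,uu^\top\,\diag(\xib)\bigr)\x \;-\; 1
\]
is a centered quadratic form in $\x$ when conditioned on $\xib$. The key structural input comes from restricting $u$ to $\mathrm{col}(\U)$: Cauchy--Schwarz gives $u_i^2=(v^\top\u_i)^2\leq\|\u_i\|^2=l_i$, and the leverage-score-aware choice $p_i\asymp l_i/d$ then yields
\[
\sum_i \xi_i^2 u_i^2 \;=\; \sum_i \frac{b_i}{d p_i}\,u_i^2 \;\leq\; \frac{1}{d}\sum_i b_i \,\cdot O(1)\;=\; O(1) \cdot d \quad\text{almost surely,}
\]
while its expectation is $\sum_i u_i^2=1$. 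These are precisely the spectral-radius and trace parameters of the rank-one Hanson--Wright matrix, and Lemma~\ref{l:restricted-hanson-wright} therefore produces a sub-exponential tail on $v^\top Z v$ with variance proxy $O(1)$ and scale $O(d)$.

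To pass from pointwise to uniform control, I invoke a standard $1/4$-net $\mathcal{N}\subset S^{d-1}$ of cardinality $\leq 9^d$, giving $\|Z\|\leq 2\max_{v\in\mathcal{N}}|v^\top Z v|$. The union bound combined with the sub-exponential tail yields an overall bound on $\Pr(\|Z\|\geq t)$, where the $9^d=\exp(d\log 9)$ prefactor is absorbed by the linear exponential rate $\exp(-ct/d)$ once $t$ exceeds a large multiple of $d$. Integrating $\E[\|Z\|^p]=\int_0^\infty p t^{p-1}\Pr(\|Z\|\geq t)\,dt$ in two regimes---small $t$ handled by Markov against the directly verifiable second-moment bound $\E[\|Z\|^2]=O(d)$, large $t$ by the sub-exponential tail yielding a $\Gamma(p)$ integral---produces the claimed $\frac{p!}{2}(Cd)^{p-1}$ estimate.

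The main obstacle is obtaining Hanson--Wright parameters sharp enough that the $9^d$ net cost does not swamp the sub-exponential rate. Naively, a rank-one quadratic form with matrix $M=\diag(\xib)uu^\top\diag(\xib)$ has $\|M\|$ as large as $\sum_i \xi_i^2 u_i^2$, which could a priori blow up because $\xi_i^2=b_i/(dp_i)$ has potentially large entries; the data-aware choice $p_i\asymp l_i/d$ is indispensable here, exactly balancing the large possible entries of $\xib$ against the inner products $u_i^2\leq l_i$ to force $\|M\|\leq O(d)$ deterministically while keeping $\E\|M\|=O(1)$. A secondary subtlety is the multinomial dependence among the $b_i$'s, which is handled by conditioning on the multinomial outcomes $(b_1,\ldots,b_n)$---under which $\y=\x\circ\xib$ has independent coordinates with sub-gaussian norm $|\xi_i|$---and applying the standard Hanson--Wright conditionally before averaging over $\xib$.
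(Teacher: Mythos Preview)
Your reduction $\|\E[Z^p]\|\leq \E[\|Z\|^p]$ is fatally lossy here. Observe that $A:=\U^\top(\x\circ\xib)(\x\circ\xib)^\top\U$ is rank one with nonzero eigenvalue $Y:=(\x\circ\xib)^\top\U\U^\top(\x\circ\xib)$, so $Z=A-\I$ has eigenvalues $Y-1$ (once) and $-1$ (with multiplicity $d-1$), whence $\|Z\|=\max(|Y-1|,1)$. But $\E[Y]=\tr(\U\U^\top)=d$, so by Jensen $\E[\|Z\|]\geq |\E[Y]-1|=d-1$ and thus $\E[\|Z\|^2]\geq (d-1)^2$. Your claimed ``variance estimate $\E[\|Z\|^2]=O(d)$'' is therefore false, and more generally $\E[\|Z\|^p]=\Omega(d^p)$ for every fixed $p$, which is a factor of $d$ too large for the target $\tfrac{p!}{2}(Cd)^{p-1}$. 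Plugged into the matrix Bernstein framework this would yield a sketch size $m=O(d^2\log d)$ rather than $O(d\log d)$, defeating the purpose of the lemma. (As a side note, the $\epsilon$-net detour is entirely unnecessary since $\|Z\|$ is already a function of the single scalar $Y$.)

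The paper recovers the missing factor by \emph{not} passing to $\|Z\|$: using the rank-one identity $A^2=YA$ one gets the recursion $Z^p=(Y-1)^{p-1}A-Z^{p-1}$, so it suffices to bound $\big\|\E\big[|Y-1|^{p-1}A\big]\big\|$. On the high-probability event $\{Y\leq Cpd\}$ (supplied by Lemma~\ref{l:restricted-hanson-wright}) this is at most $(Cpd)^{p-1}\,\|\E[A]\|$, and the crucial gain is that $\|\E[A]\|=\|\I\|=1$, \emph{not} $d$. The contribution from the complementary event is handled by integrating the sub-exponential tail of $Y$ against $t^{p-1}$ via incomplete Gamma estimates. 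In short, the matrix structure of $A$ must be retained inside the expectation; collapsing to the scalar $\|Z\|$ replaces $\|\E[A]\|=1$ by $\E[\|A\|]=\E[Y]=d$ and loses exactly the factor you need.
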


\noindent
Now, the matrix Bernstein bound (Lemma \ref{theo:sub-exp-Bernstein})
can be invoked with $\A_i^2=\frac {Cd}{m^2}\cdot\I$ and $\sigma^2=R=\frac {Cd}m$,
obtaining that for $\eta\in(0,1)$:
\begin{align*}
  \Pr\Big\{\big\|\U^\top\S^\top\S\U-\I\big\|\geq \eta\Big\}\leq
  2d\cdot\exp\Big(-\frac{\eta^2m}{4Cd}\Big)\leq
  \delta\quad\text{for}\quad m\geq 4Cd\log(2d/\delta)/\eta^2,
\end{align*}
which completes the proof.

\subsection{Proof of Lemma \ref{l:subexponential-moments}}
The key part of our proof of Lemma \ref{l:subexponential-moments}  involves
establishing the following concentration inequality which can be
viewed as a form of the Hanson-Wright inequality
\cite{rudelson2013hanson} that takes advantage of 
the leverage score sparsifier $\xib$, similarly as we did for the
Restricted Bai-Silverstein inequality (Lemma~\ref{l:restricted-bai-silverstein}).
\begin{lemma}\label{l:restricted-hanson-wright}
Fix a matrix $\U\in\R^{n\times d}$ such that $\U^\top\U=\I$. Suppose that
$\xib$ is a leverage score sparsifier for $\U$ and $\x$ has
independent $O (1)$-sub-gaussian entries with mean zero and unit
variance. Then, there is $c=\Omega(1)$ and $C=O(1)$ such that for any $t\geq Cd$ we have:
\begin{align*}
\Pr\Big\{(\x\circ\xib)^\top\U\U^\top(\x\circ\xib)\geq t\Big\}\leq
  \exp\Big(-c\,\big(\sqrt t + t/d\big)\Big).
\end{align*}
\end{lemma}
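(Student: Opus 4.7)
The plan is to condition on the sparsifier $\xib$, apply the classical Hanson--Wright inequality to the sub-Gaussian vector $\x$, and then integrate out $\xib$ by controlling $\|\M\|$ through a matrix Bernstein bound, where $\M := \U_\xib^\top\U_\xib = \sum_{t=1}^d \u_{s_t}\u_{s_t}^\top/(dp_{s_t})$ with $\U_\xib := \diag(\xib)\U$ and $\u_i^\top$ denoting the $i$th row of $\U$. Writing the quadratic form as $\x^\top\B\x$ for $\B := \diag(\xib)\U\U^\top\diag(\xib) = \U_\xib\U_\xib^\top$, and using $p_i\approx_{O(1)} l_i/d = \|\u_i\|^2/d$, I record three scaling identities: $\tr(\B) = \sum_i \xi_i^2\|\u_i\|^2 = \sum_i b_i\|\u_i\|^2/(dp_i) = \Theta(d)$; $\|\B\| = \|\M\|$; and $\tr(\B^2) = \tr(\M^2)\leq \|\M\|\tr(\M) = O(d\|\M\|)$.

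Conditionally on $\xib$, the Hanson--Wright inequality recalled in the preliminaries gives
\begin{align*}
\Pr\{\x^\top\B\x \geq t \mid \xib\} \leq 2\exp\bigl(-c\min\{(t-\tr\B)^2/\tr(\B^2),\ (t-\tr\B)/\|\B\|\}\bigr).
\end{align*}
For $t\geq Cd$ with $C$ a sufficiently large universal constant, $t-\tr(\B)\geq t/2$, and substituting the bounds $\tr(\B^2)\leq O(d\|\M\|)$ and $\|\B\|=\|\M\|$ shows that the minimum is attained by the second argument (up to absolute constants), so the conditional tail simplifies to $2\exp(-c_1 t/\|\M\|)$. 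The problem thus reduces to bounding $\E[\exp(-c_1 t/\|\M\|)]$.

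Next, I apply matrix Bernstein (bounded case) to the centered summands $\u_{s_t}\u_{s_t}^\top/(dp_{s_t}) - \I/d$, each of which has spectral norm $O(1)$ by the proportionality $p_i\asymp\|\u_i\|^2/d$, and produces total variance $O(1)$ through a direct calculation yielding $\E[(\u_{s_t}\u_{s_t}^\top/(dp_{s_t}))^2]\preceq O(\I/d)$ and $\sum_t \E[\cdot^2]\preceq O(\I)$. This produces a sub-exponential tail $\Pr\{\|\M\|\geq\tau\}\leq d\exp(-c_2\tau)$ for $\tau$ above a constant; moreover $\|\M\|\leq\tr(\M)\leq C_3 d$ holds almost surely. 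Splitting on $\{\|\M\|\leq\tau\}$ versus its complement yields
\begin{align*}
\Pr\{\x^\top\B\x\geq t\} \leq 2\exp(-c_1 t/\tau) + 2d\exp(-c_2\tau).
\end{align*}

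I then optimize $\tau$ in two regimes. For $Cd\leq t\leq (C_3 d)^2$, choosing $\tau=\sqrt{t}$ balances the two terms into $\exp(-c\sqrt{t})$; the leading $d$ factor is absorbed into the exponent because $\sqrt{t}\geq\sqrt{Cd}$ dominates $\log d$ once $C$ is chosen as a sufficiently large universal constant (the relevant quantity $\log^2(2d)/d$ is bounded by $O(1)$ over all $d\geq 1$). For $t\geq (C_3 d)^2$, choosing $\tau=C_3 d$ eliminates the Bernstein term via the almost-sure upper bound on $\|\M\|$, leaving $\exp(-c' t/d)$, which dominates $\sqrt{t}$ in this regime. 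The two regime-wise bounds combine to $\exp(-c(\sqrt{t}+t/d))$. The main obstacle is ensuring that the decay of $\|\M\|$ is genuinely sub-exponential rather than the Chernoff-type $\exp(-\tau\log\tau)$ that a naive matrix Chernoff argument would give: the latter rate, after the saddle-point balancing, would only produce $\exp(-c\sqrt{t/\log t})$ and so would fail to deliver the clean $\sqrt{t}$ term. It is precisely the bounded-case matrix Bernstein inequality, applied to summands whose uniform $O(1)$ spectral bound is guaranteed by the leverage-score proportionality $p_i\asymp l_i/d$, that supplies the stronger sub-exponential tail required here.
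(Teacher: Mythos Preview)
Your proof is correct and follows essentially the same two-step template as the paper: first control $\|\M\|=\|\U_\xib^\top\U_\xib\|$ via a matrix concentration inequality, then apply the classical Hanson--Wright inequality conditionally on $\xib$. The paper packages the case split into the single event $\{\|\M\|\le\min\{\sqrt t,\,Rd\}\}$ and keeps both Hanson--Wright terms, whereas you split into two $t$-regimes and observe (correctly) that for $t\ge Cd$ the linear Hanson--Wright term $t/\|\B\|$ already dominates; these are cosmetic differences.

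Your closing paragraph, however, contains a genuine misconception. The matrix Chernoff tail $d(e/\tau)^{\tau/R}\asymp\exp(-c\,\tau\log\tau)$ is \emph{stronger}, not weaker, than the sub-exponential Bernstein tail $\exp(-c\tau)$. Balancing $t/\tau$ against $\tau\log\tau$ gives $\tau\approx\sqrt{t/\log t}$ and an exponent of order $\sqrt{t\log t}$, not $\sqrt{t/\log t}$ as you wrote; in particular, simply plugging $\tau=\sqrt t$ into the Chernoff tail already yields $d\exp(-c\sqrt t\log\sqrt t)\le\exp(-c'\sqrt t)$ once $t\ge Cd$. This is exactly what the paper does: it uses the matrix Chernoff bound (Theorem~1.1 of \cite{tropp2012user}), not Bernstein. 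So your assertion that ``it is precisely the bounded-case matrix Bernstein inequality \ldots\ that supplies the stronger sub-exponential tail required here'' has the comparison backwards. Bernstein suffices, as your argument shows, but Chernoff is at least as good and is the tool the paper actually uses.
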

\begin{proof}
  We use the shorthand $\U_\xib=\diag(\xib)\U$. Similarly as for
  Lemma~\ref{l:restricted-bai-silverstein}, our strategy is to 
  show that the sparsification $\U_\xib$ preserves enough of the
  structure of $\U$ so that we can apply the classical Hanson-Wright
  inequality, which is repeated below, following \cite{rudelson2013hanson}, 
\begin{lemma}[{\cite[Theorem~1.1]{rudelson2013hanson}}]\label{l:hanson-wright}
Let $\x$ have independent $O (1)$-sub-gaussian entries with
mean zero and unit variance. Then, there is $c=\Omega(1)$ such that
for any $n\times n$ matrix $\B$ and $t\geq 0$,
\begin{align*}
  \Pr\Big\{|\x^\top\B\x-\tr(\B)|\geq t\Big\}\leq
  2\exp\bigg(-c\min\Big\{\frac{t^2}{\|\B\|_F^2},\frac{t}{\|\B\|}\Big\}\bigg).
\end{align*}
\end{lemma}
To show that the leverage score sparsification $\U_\xib$ is
sufficiently accurate, we can rely on the matrix Chernoff bound,
repeated below, and the following decomposition:
\begin{align*}
  \U_\xib^\top\U_\xib = \sum_{i=1}^d\frac{\u_{s_i}\u_{s_i}^\top}{dp_{s_i}},
\end{align*}
where $s_i$ are the random indices sampled from the approximate
leverage score distribution $(p_1,...,p_n)$ (see Definition
\ref{d:leverage-score-sparsifier}). For simplicity, we only repeat the
large deviation part of the Chernoff bound, which is the one relevant to our analysis.
\begin{lemma}[{\cite[Theorem~1.1 and Remark 5.3]{tropp2012user}}]\label{theo:matrix-Chernoff}
    For $i=1,2,...$, consider a finite sequence $\X_i$ of $d\times d$
independent positive semi-definite random matrices such that
$\E\big[\sum_i\X_i\big] = \I$ and $\|\X_i\|\leq R$. Then, for any $t\geq \ee$,
we have:
\begin{align*}
  \Pr\Big\{\Big\|\sum_i\X_i\Big\|\geq t\Big\}\leq
  d\cdot\Big(\frac\ee t\Big)^{t/R}.
\end{align*}
\end{lemma}
We apply the matrix Chernoff to
$\X_i=\frac1{dp_{s_i}}\u_{s_i}\u_{s_i}^\top$, noting that since
$p_i\geq\|\u_i\|^2/Rd$ for $R=O(1)$, it follows that $\|\X_i\|\leq R$. Moreover,
$\E[\sum_{i=1}^d\X_i]=\I$, so for $t\geq O(1)\cdot d$ we have:
\begin{align*}
  \Pr\big\{\|\U_\xib^\top\U_\xib\|\geq \sqrt t\big\}\leq
  d\exp\big(-\sqrt t\ln(\sqrt t/\ee)/R\big)\leq \exp(-c\sqrt t),
\end{align*}
for some $c=\Omega(1)$. Also, note that
$\|\U_\xib^\top\U_\xib\|\leq\tr(\U_\xib^\top\U_\xib)\leq Rd$ almost surely, which
implies that event $\Ec: \big[\|\U_\xib^\top\U_\xib\|\leq \min\{\sqrt
t,Rd\}\big]$ holds with probability at least $1-\exp(-c(\sqrt t +
t/d))$. Conditioned on $\Ec$, it holds that
$\|\U_\xib\U_\xib^\top\|_F^2\leq \|\U_\xib^\top\U_\xib\|\cdot
\tr(\U_\xib^\top\U_\xib)\leq \min\{\sqrt t,Rd\}\cdot Rd$, so applying
Lemma \ref{l:hanson-wright} for fixed $\xib$ we get:
\begin{align*}
  \Pr\big\{\x^\top\U_\xib\U_\xib^\top\x \geq Rd+t\ \mid\ \xib,\Ec\big\}
  &\leq
    2\exp\bigg(-c\min\Big\{\frac{t^2}{\|\U_\xib\U_\xib^\top\|_F^2},\frac{t}{\|\U_\xib\U_\xib\|}\Big\}\bigg)
  \\
  &\leq     2\exp\bigg(-c\min\Big\{\frac{t^2}{\min\{\sqrt t,Rd\}\cdot
    {Rd} },\frac{t}{\min\{\sqrt t,Rd\}}\Big\}\bigg)
  \\
  &\leq 2\exp\big(-c(\sqrt t +t/Rd)\big).
\end{align*}
Appropriately rescaling $t$, we obtain the claim.
\end{proof}

We are now ready to present the proof of Lemma
\ref{l:subexponential-moments}, obtaining subexponential moment bounds
for the random matrix $\U^\top(\x\circ\xib)(\x\circ\xib)^\top\U$, thus
completing the proof of the subspace embedding guarantee for leverage
score sparsified sketches.
\begin{proof}[Proof of Lemma \ref{l:subexponential-moments}] 
Throughout the proof, we will use the shorthand
$\U_{\xib}=\diag(\xib)\U$.  It is easy to show by induction over $p$
that: 
  \begin{align*}
\underbrace{\big(\U_{\xib}^\top\x\x^\top\U_{\xib}-\I\big)^p}_{\Z^p} =
    \big(\x^\top\U_\xib\U_\xib^\top\x-1\big)^{p-1}\U_\xib^\top\x\x^\top\U_\xib
    - \underbrace{\big(\U_{\xib}^\top\x\x^\top\U_{\xib}-\I\big)^{p-1}}_{\Z^{p-1}}.
  \end{align*}
Thus, it follows that for any $p=2,3,...$ (both even and odd) we have
the following upper bound:
\begin{align*}
  \big\|\E[\Z^p]\big\|
  \leq
  \Big\|\E\big[\underbrace{|\x^\top\U_\xib\U_\xib^\top\x-1|^{p-1}\U_\xib^\top\x\x^\top\U_\xib}_{\T_p}
  \big]\Big\|
  +  \big\|\E[\Z^{p-1}]\big\|.
\end{align*}
To bound the quadratic form $\x^\top\U_\xib\U_\xib^\top\x$ in the
first term, we can use
Lemma \ref{l:restricted-hanson-wright}. In particular, the 
lemma implies that the event $\Ec:[\x^\top\U_\xib\U_\xib^\top\x\leq
Cpd]$ fails with probability at most $\ee^{-\sqrt{pd}}$
for a sufficiently large $C=O(1)$, so we have:
\begin{align*}
  \big\|\E[\T_p]\big\|
  &\leq   \big\|\E[\T_p\cdot\one_\Ec]\big\| +
  \big\|\E[\T_p\cdot\one_{\neg\Ec}]\big\|
  \\
  &\leq (pd)^{p-1}\big\|\E[\U_\xib^\top\x\x^\top\U_\xib]\big\| +
    \E\big[(\x^\top\U_\xib\U_\xib^\top\x\cdot\one_{\neg \Ec})^p\big]
  \\
  &= (Cpd)^{p-1} +
    \int_0^\infty
    pt^{p-1}\Pr\big\{\x^\top\U_\xib\U_\xib^\top\x\cdot\one_{\neg
    \Ec}>t\big\} dt
  \\
  &\leq (Cpd)^{p-1} + p(Cpd)^p\ee^{-\sqrt{pd}} +  \int_{Cpd}^\infty
    pt^{p-1}\ee^{-c(\sqrt t+t/d)} dt.
\end{align*}
Note that $(O(1)\,p)^{p+O(1)}d^{p-1}\leq p^p(O(1)\, d)^{p-1}\leq
(p!/2)(O(1)\,d)^{p-1}$, and also $\ee^{-\sqrt{pd}}\leq O(1/d)$, so the first two terms can be easily
bounded as desired. To bound the last term, we use the following
integral formula:
\begin{align*}
  \int t^{p-1}\ee^{-\alpha t^\theta}dt =
  -\frac{\Gamma(p/\theta,\alpha t^\theta)}{\theta \alpha^{p/\theta}} + \mathrm{const},
\end{align*}
which follows from the definition of the upper incomplete Gamma
function $\Gamma$. Note that for $p=2,3,...$ this function also satisfies:
\begin{align*}
  \Gamma(p,\lambda)
  &= (p-1)!\cdot \Pr\{x<p\}
\qquad\text{for}\quad x\sim\mathrm{Poisson}(\lambda),
  \\
  & \leq (p-1)!\cdot\ee^{-c\lambda}
    \qquad\text{for}\quad\lambda\geq 2p,
   \ c=\Omega(1),
\end{align*}
where the last inequality is a standard tail bound for a Poisson random variable.
With a slight abuse of notation, we let $c$ denote the minimum of the
above constant $c$ and the constant $c$ from Lemma
\ref{l:restricted-hanson-wright}. We apply the integral formula in two different ways, depending
on $p$. First, if $p<d$ then we have:
\begin{align*}
  \int_{Cpd}^\infty  pt^{p-1}\ee^{-c(\sqrt t+t/d)} dt
  &\leq   \int_{Cpd}^\infty  pt^{p-1}\ee^{-c\sqrt t} dt
    =2pc^{-2p}\Gamma(2p,c\sqrt {Cpd})
\leq 2c^{-2p}(2p)!\ee^{-c^2\sqrt{Cpd}}.
\end{align*}
By using the fact that $\exp(-c^2\sqrt{Cpd})\leq \exp(-c^2p)=O(1/p)$,
this expression can be bounded by $(p!/2)(O(1)\,p)^{p-1}\leq
(p!/2)(O(1)\,d)^{p-1}$. Next, we consider the case when $p\geq d$. We have:
\begin{align*}
  \int_{Cpd}^\infty  pt^{p-1}\ee^{-c(\sqrt t+t/d)} dt
  &\leq  \int_{Cpd}^\infty  pt^{p-1}\ee^{-ct/d} dt
    =p(d/c)^{p}\Gamma(p,cCp) \leq p!d^p\ee^{-c^2Cp},
\end{align*}
where the last inequality holds as long as $C\geq 2/c$. Here, we
note that $\ee^{-c^2Cp}\leq O(1/d)$ since $p\geq d$, thus again
obtaining a bound of the form $(p!/2)(O(1)\,d)^{p-1}$. Putting
everything together, we conclude that:
\begin{align*}
  \|\E[\Z^p]\|\leq \frac {p!}2\,(O(1)\,d)^{p-1} + \|\E[\Z^{p-1}]\|.
\end{align*}
Recursively summing up this bound concludes the proof.
\end{proof}
\subsection{Proof of Theorem \ref{t:lsse}}
In Lemma \ref{l:subspace-embedding}, we showed that a \less\ embedding of size $m\geq 
4Cd\log(3d/\delta)$ satisfies Condition~\ref{cond1} (subspace
embedding) for $\eta=1/2$ with probability $1-\delta/3$, as required
by Theorem~\ref{t:structural}. Also, in 
Lemma~\ref{l:restricted-bai-silverstein} we showed  
Condition \ref{cond2} (Restricted Bai-Silverstein) with $\alpha=O(1)$
for a leverage score sparsified sub-gaussian vector. Thus, as long
as $\delta\leq 1/m^3$ and $m/3\geq 
4Cd\log(3d/\delta)$, it follows that
$(\frac{m}{m-d}\A^\top\S^\top\S\A)^{-1}$ is an
$(\epsilon,\delta)$-unbiased estimator of $(\A^\top\A)^{-1}$ for
$\epsilon=O(\sqrt d/m)$, and we obtain the desired guarantee. Note
that the condition for invoking Theorem \ref{t:structural} can be
written as $m\geq C'd\log(m)$. This is satisfied for
$m=C'd\log(C'^2d^2)$, and since $m$ grows faster than $\log(m)$, it
will also be satisfied for all
$m\geq C'd\log(C'^2d^2)=O(d\log(d))$. This completes the proof of Theorem \ref{t:lsse}.

\section{Averaging nearly-unbiased estimators}
\label{s:averaging}

In this section, we show that averaging improves spectral
approximation for matrix estimators with small inversion bias, and as
a consequence we prove Corollaries \ref{c:subgaussian} and
\ref{c:less} for averaging sketched inverse covariance matrix estimators based
on sub-gaussian sketches and \less\ embeddings respectively.

\subsection{Conditions for effective averaging of random matrices}

We start with a more general result, which should be of interest
to averaging nearly-unbiased matrix estimators in settings other than
inverse covariance matrix estimation.

  \begin{lemma}[Conditions for effective averaging]\label{t:averaging}
    Suppose that $\delta\leq\epsinv\leq\epssub\leq 1$ and
    $\tilde\C_1,...,\tilde\C_q$ are i.i.d.~positive semi-definite $d$-dimensional 
    random matrices such that:
    \begin{enumerate}
    \item $\tilde\C_i$ is an $(\epsinv,\delta/2q)$-unbiased estimator of $\C$;
    \item $\tilde\C_i$ is an $(\epssub,\delta/2q)$-approximation of $\C$.
    \end{enumerate}
Then, $\frac1q\sum_{i=1}^q\tilde\C_i$ is an
$(\epsilon',2\delta)$-approximation of $\C$ for
$  \epsilon' =
    \epsinv  + \epssub\cdot O\big(\sqrt
  {\frac{\ln(d/\delta)}{q}}\,\big)$.
  \end{lemma}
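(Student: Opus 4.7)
The plan is to combine three ingredients: (i) a union bound to restrict attention to a joint event on which all $\tilde\C_i$ are simultaneously well-behaved; (ii) a perturbation argument showing that conditioning on this joint event changes the mean by only a negligible amount; (iii) a matrix Bernstein inequality for the i.i.d.\ conditional distribution to bound the deviation of the average from its conditional mean.

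First, for each $i$, let $\Ec_i$ be the event from Definition \ref{d:unbiased-estimator} (so that $\E[\tilde\C_i \mid \Ec_i] \approx_\epsinv \C$ and $\tilde\C_i \preceq O(1)\cdot \C$ on $\Ec_i$), and let $\Fc_i$ be the event $\tilde\C_i \approx_\epssub \C$. Each has probability at least $1 - \delta/(2q)$, and since $(\Ec_i,\Fc_i)$ depends only on $\tilde\C_i$, the events $\Gc_i := \Ec_i \cap \Fc_i$ are independent across $i$. By a union bound, $\Gc := \bigcap_i \Gc_i$ has probability at least $1 - \delta$. Conditional on $\Gc$, the $\tilde\C_i$ remain i.i.d.; on this event each satisfies $(1+\epssub)^{-1}\C \preceq \tilde\C_i \preceq (1+\epssub)\C$ almost surely.

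Next, the conditional mean $\mu := \E[\tilde\C_i \mid \Gc_i]$ must be compared with $\E[\tilde\C_i \mid \Ec_i]$, the quantity we know is $\epsinv$-close to $\C$. Writing
\begin{equation*}
\mu \cdot \Pr(\Gc_i) = \E[\tilde\C_i\,\one_{\Ec_i}] - \E[\tilde\C_i\,\one_{\Ec_i\setminus\Fc_i}],
\end{equation*}
and using $\tilde\C_i \preceq O(1)\cdot \C$ on $\Ec_i$ together with $\Pr(\Ec_i\setminus \Fc_i) \leq \delta/(2q)$, the correction term is $\preceq O(\delta/q)\,\C$ in the semidefinite order. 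Since $\delta \leq \epsinv$, dividing by $\Pr(\Gc_i) \geq 1-\delta/q$ gives $\mu \approx_{O(\epsinv)} \C$, with no worse a constant in the big-O than the one already absorbed in Definition \ref{d:unbiased-estimator}.

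Now whiten: let $\Y_i := \C^{-1/2}\tilde\C_i \C^{-1/2}$. Conditional on $\Gc$, the $\Y_i$ are i.i.d., satisfy $\|\Y_i\| \leq 1+\epssub$ almost surely, and have common mean $\bar\mu := \C^{-1/2}\mu\,\C^{-1/2}$ with $\|\bar\mu - \I\| \leq O(\epsinv)$. Writing $\Z_i := \Y_i - \bar\mu$, each $\Z_i$ is centered, bounded by $O(1)$ in operator norm, and has $\E[\Z_i^2] \preceq (1+\epssub)\E[\Y_i] - \bar\mu^2 \preceq O(\epssub)\cdot \I$ since $\|\Y_i - \I\| \leq O(\epssub)$ and $\|\bar\mu - \I\| \leq O(\epsinv) \leq O(\epssub)$. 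Applying matrix Bernstein to $\{\Z_i/q\}_{i=1}^q$ (conditionally on $\Gc$) yields
\begin{equation*}
\Big\|\tfrac1q\sum_{i=1}^q \Y_i - \bar\mu\Big\| \leq O\Big(\epssub\sqrt{\tfrac{\ln(d/\delta)}{q}} + \tfrac{\ln(d/\delta)}{q}\Big)
\end{equation*}
with conditional probability at least $1-\delta$; the first term dominates in the regime of interest. Combining with $\|\bar\mu - \I\| \leq \epsinv + O(\delta/q)$ and absorbing lower-order terms gives $\frac1q\sum_i \Y_i \approx_{\epsilon'} \I$, i.e.\ $\frac1q\sum_i \tilde\C_i \approx_{\epsilon'} \C$, on an event of total probability at least $1-2\delta$.

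The main delicacy is the second paragraph: carefully showing that conditioning on $\Gc_i$ (rather than $\Ec_i$ alone) only perturbs the mean by $O(\delta/q)\cdot\C$, so the near-unbiasedness from Definition \ref{d:unbiased-estimator} transfers to the joint good event without loss. Everything else is a direct application of matrix Bernstein to bounded i.i.d.\ matrices and a conversion between additive operator-norm error and relative-error (Löwner) approximation.
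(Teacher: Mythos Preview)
Your proposal is correct and follows essentially the same route as the paper: condition on the intersection of the unbiasedness and approximation events for each $i$, show the conditional mean is still $O(\epsinv)$-close to $\C$ using the $O(1)$ upper bound from Definition~\ref{d:unbiased-estimator}, whiten by $\C^{-1/2}$, and apply matrix Bernstein. The paper phrases the conditioning step via a coupling (defining auxiliary $\tilde\C_i'$ equal to $\tilde\C_i$ with high probability), but this is equivalent to your direct conditioning on $\Gc$.

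One minor slip: your stated variance bound $\E[\Z_i^2]\preceq O(\epssub)\cdot\I$ does not by itself yield the deviation $O(\epssub\sqrt{\ln(d/\delta)/q})$ you then write; it would give $O(\sqrt{\epssub\ln(d/\delta)/q})$ instead. The fix is immediate---since $\|\Y_i-\I\|\leq O(\epssub)$ and $\|\bar\mu-\I\|\leq O(\epssub)$ you in fact have $\|\Z_i\|\leq O(\epssub)$ almost surely, hence $\E[\Z_i^2]\preceq O(\epssub^2)\cdot\I$, which is exactly the variance parameter the paper uses.
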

\begin{proof}
For this, we use a variant of the matrix Bernstein
inequality given below.
\begin{lemma}[{\cite[Theorem~1.4]{tropp2012user}}]\label{theo:Bernstein}
For $i=1,2,...$, consider a finite sequence $\X_i$ of $d\times d$
independent and symmetric random matrices such that   
\[
	\E[\X_i] = \mathbf{0}, \quad \lambda_{\max}(\X_i)\leq
        R\quad\text{almost surely.}
\]
Then, defining the variance parameter $\sigma^2 = \| \sum_i\E[\X_i^2] \|$, for any $t>0$ we have:
\begin{align*}
	\Pr \bigg\{ \lambda_{\max}\Big( \sum\nolimits_i \X_i \Big) \geq t
  \bigg\}& \leq d \cdot \exp \left( \frac{ -t^2/2 }{ \sigma^2 +
            R t/3 } \right).
\end{align*}
\end{lemma}

\noindent
Suppose that $\tilde\C$ is an
$(\epsinv,\delta/2q)$-unbiased estimator and an
$(\epssub,\delta/2q)$-spectral approximation for $\C$, with $\Ec_{inv}$
and $\Ec_{sub}$ the associated high probability events. For
concreteness, let the $O(1)$ constant factor in Definition
\ref{d:unbiased-estimator} be denoted as $M$. Further, let
$\tilde\C_i,\Ec_{inv}^i,\Ec_{sub}^i$ be the i.i.d.~copies of
$\tilde\C$ with their associated events. Finally, let $\tilde\C_i'$ be
a random matrix obtained from conditioning $\tilde\C_i$ on
$\Ec_{inv}^i\wedge\Ec_{sub}^i$, and coupled with $\tilde\C_i$ so that
$\Pr(\tilde\C_i'=\tilde\C_i)\geq \Pr(\Ec_{inv}^i\wedge\Ec_{sub}^i)\geq
1-\delta/q$ (this coupling can be obtained by considering a
construction of $\tilde\C_i'$ via rejection sampling from
$\tilde\C_i$). We can bound the bias of $\tilde\C_i'$ (for any $i$) by observing that:
\begin{align*}
-\delta/q\cdot \E[\tilde\C_i\mid\Ec_{inv}^i,\neg\Ec_{sub}^i]\preceq  \E[\tilde\C_i']- \E[\tilde\C_i\mid\Ec_{inv}^i]\preceq
  \frac{\delta/q}{1-\delta/q}\cdot \E[\tilde\C_i\mid \Ec_{inv}^i].
\end{align*}
Since we have $\E[\tilde\C_i\mid \Ec_{inv}^i]\approx_{\epsinv}\C$ and
$\E[\tilde\C_i\mid\Ec_{inv}^i,\neg\Ec_{sub}^i]\preceq M\cdot \C$, it
follows that $\E[\tilde\C_i']$ is an $\epsilon'$-spectral approximation
of $\C$ for $\epsilon'=\epsinv +
\frac{2\delta}q(1+\epsinv+M)$.

We will now apply the matrix Bernstein inequality (Lemma
\ref{theo:Bernstein}) to the sequence of matrices:
\[\X_i=\frac1q\Big(\C^{-\frac12}\tilde\C_i'\C^{-\frac12}-\E\big[\C^{-\frac12}\tilde\C_i'\C^{-\frac12}\big]\Big),
  \quad i=1,...,q.\]
Note that we have $\tilde\C_i'\approx_{\epssub}\frac1q\C$, so it
follows that $\|\X_i\|\leq (\epssub+\epsilon')/q$ and
$\sum_i\|\X_i^2\|\leq (\epssub+\epsilon')^2/q$. Thus, we conclude
that for $t\in(0,1)$:
\begin{align*}
  \Pr\bigg\{\Big\|\sum_{i=1}^q\X_i\Big\|\geq
  t\,(\epssub+\epsilon')\bigg\}\leq 2d\exp\big(-t^2q/4\big).
\end{align*}
Setting $t=\sqrt {4\ln(2d/\delta)/q}$ (without loss of generality,
assume that $t\leq 1$), we obtain that with probability $1-\delta$,
\begin{align*}
  \Big\|\frac1q\sum_{i=1}^q\C^{-\frac12}\tilde\C_i'\C^{-\frac12} -
  \I\Big\|
  &\leq \Big\|\sum_{i=1}^q\X_i\Big\| +
    \Big\|\frac1q\sum_{i=1}^q\C^{-\frac12}\E[\tilde\C_i']\C^{-\frac12} -
    \I\Big\|
  \\
  &\leq t\cdot(\epssub+\epsilon') + \epsilon'
  \leq \epsinv + \epssub\cdot
    O\Big(\sqrt{\tfrac{\log(d/\delta)}{q}}\,\Big) +  O\Big(\frac{\delta M}{q}\Big).
\end{align*}
Note that under the assumptions that $M =O(1)$ and $\delta\leq
\epssub$, we can absorb the last term into the middle term. Finally,
observe that thanks to the coupling and a union bound, the above bound holds with
probability $1-2\delta$ after we replace $\tilde\C_i'$ with
$\tilde\C_i$, completing the proof of Lemma~\ref{t:averaging}.
\end{proof}

\subsection{Proof of Corollary \ref{c:subgaussian}}
Consider a sub-gaussian sketching matrix $\S$ of size $m\geq C(d+\sqrt
d/\epsilon+\log(2q/\delta))$. From Proposition \ref{t:subgaussian}, it
follows that $(\frac m{m-d}\A^\top\S^\top\S\A)^{-1}$ is an
$(\epsilon,\delta/2q)$-unbiased estimator of $(\A^\top\A)^{-1}$. Further,
it is an $(\epssub,\delta/2q)$-approximation of $(\A^\top\A)^{-1}$,
where $\epssub = O(\sqrt{d/m})=O(\epsilon\cdot\sqrt m)$. Thus, using
Lemma~\ref{t:averaging}, it follows that for $q$ i.i.d.\ copies
$\S_1,...,\S_q$, the averaged estimator $\frac1q\sum_{i=1}^q(\frac
m{m-d}\A^\top\S_i^\top\S_i\A)^{-1}$ is an
$(\epsilon'',2\delta)$-approximation of $(\A^\top\A)^{-1}$ for
\[\epsilon'' =
  \epsilon+O\Big(\epsilon\cdot\sqrt{m\log(d/\delta)/q}\,\Big).\]
Setting
$q=O(m\log(d/\delta))$ and adjusting the constants appropriately, we
obtain the claim.

\subsection{Proof of Corollary \ref{c:less}}
Consider a \less\ embedding matrix $\S$ of size $m\geq
C(d\log(2dq/\delta) + \sqrt d/\epsilon)$. From Theorem~\ref{t:lsse},
it follows that $(\frac m{m-d}\A^\top\S^\top\S\A)^{-1}$ is an
$(\epsilon,\delta/2q)$-unbiased estimator of
$(\A^\top\A)^{-1}$. Furthermore, the theorem also implies that this
matrix is an $(\epssub,\delta/2q)$-approximation of $(\A^\top\A)^{-1}$
for $\epssub = O(\sqrt{d\log(2dq/\delta)/m}) = O(\epsilon\cdot
\sqrt{m\log(d/\delta)})$. Using Lemma~\ref{t:averaging}, it follows
that for $q$ i.i.d.\ copies $\S_1,...,\S_q$, the averaged estimator
$\frac1q\sum_{i=1}^q(\frac m{m-d}\A^\top\S_i^\top\S_i\A)^{-1}$ is an
$(\epsilon'',2\delta)$-approximation of $(\A^\top\A)^{-1}$ for
\[\epsilon'' = \epsilon+O\bigg(\epsilon\cdot\sqrt{m\log^2(2dq/\delta)/q}\,\bigg).\]
Setting
$q=O(m\log^2(d/\delta))$ and adjusting the constants appropriately, we
obtain the claim.

Note that in both Corollaries there is a slight interdependence in the conditions for $m$
and $q$. This is in general unavoidable, since
as $q$ grows large with fixed $m$, the average has to eventually
converge to the true expectation of $(\frac
m{m-d}\A^\top\S^\top\S\A)^{-1}$, which may be unbounded.

\section{Inversion bias lower bound for leverage score sampling}
\label{s:lower}
In this section, we show a lower bound on the inversion bias of
approximate leverage score sampling, proving Theorem
\ref{t:lower}. In the proof, we show a lower bound for the inverse
moment of a shifted Binomial random variable
(Lemma~\ref{l:inverse-moment}), which should be of independent
interest.

\subsection{Proof of Theorem \ref{t:lower}}
Without loss of generality, suppose that $n=2d$
(otherwise the matrix $\A$ can be padded by zeros). We can also assume
that $m\geq d$, since the other cases follow easily. Our construction is designed 
so that uniform row sampling is a $1/2$-approximation of leverage
score sampling. Let $\S$ be a uniform row sampling sketch of size $m$,
i.e., its $i$th row is $\sqrt{\!\frac nm}\,\e_{s_i}^\top$, where
$s_1,...,s_m$ are  independent uniformly random indices from
$1,...,n$. Our matrix $\A$ consists of $n=2d$ scaled standard basis
vectors such that pairs of consecutive rows are given by
$\a_{2(i-1)+1}^\top=\a_{2(i-1)+2}^\top=\frac1{\sqrt 2}\ \e_i^\top$ for $i\geq
2$, whereas the first two rows are $\a_1^\top=\frac1{\sqrt
  4}\e_1^\top$ and $\a_2^\top=\sqrt{\frac 34}\e_1^\top$:
\begin{align*}
  \A = 
\begin{bmatrix}
\frac1{\sqrt 4} &&&\\
\sqrt{\frac34}&&0&\\
&\frac1{\sqrt 2}&&\\
&\frac1{\sqrt 2}&&\\
&&\ddots&\\
&0&&\frac1{\sqrt 2}\\
&&&\frac1{\sqrt 2}
\end{bmatrix}.
\end{align*}
First, note that $\A^\top\A=\I$, and all of the squared row norms are
within $[\frac12\frac dn,\frac32\frac dn]$, so uniform sampling is
indeed a $1/2$-approximate leverage score sampling scheme.
Further, for any $\gamma>0$, the matrix
$\gamma\A^\top\S^\top\S\A$ is diagonal, and its diagonal entries are given by:
\begin{align*}
  \big[\gamma\A^\top\S^\top\S\A\big]_{ii} =
  \begin{cases}
   \frac{\gamma
      n}{m}\sum_{j=1}^m\big(\frac14\one_{[s_j=1]}+\frac34\one_{[s_j=2]}\big)
    =\frac{\gamma n}{m}\cdot\frac{x+b_1/2}{2}
    &\text{for }i=1,\\
\frac{\gamma
      n}{m}\sum_{j=1}^m\big(\frac12\one_{[s_j=2(i-1)+1]}+\frac12\one_{[s_j=2(i-1)+2]}\big)
    =\frac {\gamma n}{m}\cdot\frac {b_i}{2}
    &\text{otherwise,}
    \end{cases}
\end{align*}
where $b_i$'s are all identically (but not independently) distributed as $\binom(m,1/d)$
and $x$ is distributed, conditionally on
$b_1$, as $\binom(b_1,1/2)$.
Here $b_i$ denote the number of times  $s_j \in \{2i-1,2i\}$, while $x$ denotes the number of times $s_j=2$. Due to the symmetry of the problem, conditionally on a given value of $b_1$ (i.e., a given value of counts $s_j$ that are equal to either unity or two), each $s_j\in\{1,2\}$ is distributed uniformly over $\{1,2\}$, hence the value $x$ of counts $s_j$ that are equal to two is distributed as $\binom(b_1,1/2)$. This leads to the claimed distributional representation.

 The key idea in the construction is that
the first diagonal entry of the sketch has more variance than the
others, and thus it will also have more inversion bias. As a
result, there is no scaling $\gamma$ that will simultaneously correct
the inversion bias of the first entry and of all the other entries. To
that end, we lower bound a shifted inverse moment of the Binomial
distribution in the following lemma, potentially of independent
interest, proven at the end of this section.
\begin{lemma}\label{l:inverse-moment}
There is a universal constant $C>0$ such that for any positive integer $b$, if $x\sim\binom(b,1/2)$ then:
\begin{align*}
  \E\bigg[\frac1{x+b/2}\bigg] \geq \Big(1 +\frac
  1{Cb}\Big)\cdot\frac1{b}.
  \end{align*}
\end{lemma}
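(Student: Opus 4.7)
Let $y = x + b/2$, so $\E[y] = b$, $\Var(y) = \Var(x) = b/4$, and crucially $y \in [b/2,\,3b/2]$ almost surely, since $x \in \{0,1,\dots,b\}$. The plan is to lower-bound $\E[1/y]$ via the second-order refinement of Jensen's inequality coming from the exact algebraic identity
\begin{equation*}
\frac{1}{y} \;=\; \frac{1}{b} \;-\; \frac{y-b}{b^{2}} \;+\; \frac{(y-b)^{2}}{b^{2}\,y},
\end{equation*}
which is valid for any $y>0$ (and can be verified by multiplying through by $b^{2}y$). The strict positivity $y \geq b/2 > 0$ makes the final term well-defined as a random variable.

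Taking expectations and using $\E[y-b] = 0$, the linear term vanishes, so
\begin{equation*}
\E\!\left[\frac{1}{y}\right] \;=\; \frac{1}{b} \;+\; \frac{1}{b^{2}}\,\E\!\left[\frac{(y-b)^{2}}{y}\right].
\end{equation*}
Hence it suffices to exhibit an absolute constant lower bound on $\E[(y-b)^{2}/y]$. This is where I would use the deterministic upper bound $y \leq 3b/2$: it gives $1/y \geq 2/(3b)$ pointwise, so
\begin{equation*}
\E\!\left[\frac{(y-b)^{2}}{y}\right] \;\geq\; \frac{2}{3b}\,\E[(y-b)^{2}] \;=\; \frac{2}{3b}\cdot\frac{b}{4} \;=\; \frac{1}{6}.
\end{equation*}
Plugging this back yields $\E[1/y] \geq 1/b + 1/(6b^{2}) = (1 + 1/(6b))/b$, giving $C = 6$.

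I do not expect a genuine obstacle: the short identity above converts the problem into lower-bounding a non-negative quadratic deviation, and the hard cap $x \leq b$ makes the pointwise denominator bound $y \leq 3b/2$ trivial. If one wished to avoid the deterministic bound on $y$ (for instance to handle more general base distributions), the same conclusion can be reached by replacing the pointwise step with the Paley-Zygmund inequality applied to $Z = (y-b)^{2}$: the binomial has $\E[Z]^{2}/\E[Z^{2}] = \Theta(1)$ (since $\E[(y-b)^{4}] = O(b^{2})$), so Paley-Zygmund gives a constant-probability event on which $(y-b)^{2} = \Omega(b)$, and on this event $1/y = \Omega(1/b)$, which again yields $\E[(y-b)^{2}/y] = \Omega(1)$. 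Either route produces the required improvement by an additive $\Theta(1/b^{2})$ over the trivial Jensen bound $\E[1/y] \geq 1/b$.
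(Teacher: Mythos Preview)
Your proof is correct and takes a genuinely different, more elementary route than the paper. The paper first pairs the terms $i$ and $b-i$ in the sum defining $\E[1/(x+b/2)]$ using the symmetry of the Binomial, reduces the claim to the anti-concentration statement $\Pr(x-b/2\leq -\sqrt b/4)=\Omega(1)$, and then invokes the Paley--Zygmund inequality on $Z=(x-b/2)^2$ to establish that anti-concentration. Your argument bypasses all of this: the exact second-order identity $1/y = 1/b - (y-b)/b^2 + (y-b)^2/(b^2 y)$ immediately isolates the gain over Jensen as $\E[(y-b)^2/y]/b^2$, and then the deterministic support bound $y\leq 3b/2$ (coming from $x\leq b$) turns this into the variance, which is known. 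This is strictly simpler and yields a much better constant ($C=6$ versus the paper's $C=256/3$). The paper's Paley--Zygmund route, which you also sketch as an alternative, has the advantage of being more robust if one ever needed the same inequality for a distribution without a convenient hard upper bound on $x$, but for the Binomial itself your direct argument is the cleaner one.
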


\noindent
Note that the expected inverse of $\gamma\A^\top\S^\top\S\A$
is undefined since the matrix may not be invertible. Thus, as in the
definition of an $(\epsilon,\delta)$-unbiased estimator, we must
condition on a high probability event which ensures invertibility. We
start by considering the largest such event, $\Ec^*:[\forall_i
b_i>0]$. Using the fact that, conditioned on $b_1$, the variable $x$ is
independent of $\Ec^*$, we have:
\begin{align*}
  \E\Big[\big[(\gamma\A^\top\S^\top\S\A)^{-1}\big]_{11}\mid\Ec^*\Big]
  &= \left(\frac{\gamma n}{m} \right)^{-1}\sum_{b>0}\E\Big[\frac{2}{x+b_1/2}\mid
    b_1=b\Big]\,\Pr(b_1=b\mid\Ec^*)
  \\
  &\overset{(a)}{\geq} \left(\frac{\gamma n}{m} \right)^{-1}\sum_{b>0}\Big(1+\frac1{Cb}\Big)\,\frac2b\,\Pr(b_1=b\mid\Ec^*)
  \\
  &=\sum_{b>0}\Big(1+\frac1{Cb}\Big)\,
    \E\Big[\big[(\gamma\A^\top\S^\top\S\A)^{-1}\big]_{22}\mid
    b_2=b\Big]\,\Pr(b_2=b\mid\Ec^*)
  \\
  &\geq\E\Big[\big[(\gamma\A^\top\S^\top\S\A)^{-1}\big]_{22}\mid\Ec^*\Big]
  \\
  &\quad+
    \frac1{2C}\frac dm\sum_{b=1}^{2m/d}\E\Big[\big[(\gamma\A^\top\S^\top\S\A)^{-1}\big]_{22}\mid
    b_2=b\Big]\,\Pr(b_2=b\mid\Ec^*)
  \\
  &\overset{(b)}{\geq} \Big(1 + \frac{d}{4Cm}\Big)\cdot
    \E\Big[\big[(\gamma\A^\top\S^\top\S\A)^{-1}\big]_{22}\mid\Ec^*\Big], 
\end{align*}
where in $(a)$ we used Lemma \ref{l:inverse-moment} and in $(b)$ we
observed that $\big[(\gamma\A^\top\S^\top\S\A)^{-1}\big]_{22}$ decreases
with $b_2$ and moreover, since $\E[b_2]=m/d\geq 1$, it is easy to verify
that the range $[1,2m/d]$ contains more than half of the probability
mass of $\binom(m,1/d)$.

The above derivation shows that when conditioned on $\Ec^*$, for any scaling $\gamma>0$ the
inversion bias will be at least $\Omega(d/m)$, since the estimated
matrix $(\A^\top\A)^{-1}=\I$ has the same entries on the diagonal,
whereas the expectation of the first two diagonal entries of the estimator
$(\gamma\A^\top\S^\top\S\A)^{-1}$ differs by
a factor of $1+\Omega(d/m)$. To complete the proof of Theorem
\ref{t:lower}, it remains to show that the same is true not just for
$\Ec^*$, but for any event $\Ec\subseteq\Ec^*$ with sufficiently high
probability. Suppose that $\Ec$ is such an event, with
$\delta=\Pr(\Ec\mid\Ec^*)\leq\Pr(\neg\Ec)\leq \frac1{4C\cdot16}(\frac
dm)^2$. Then, using $\tau_i=\frac m{\gamma
  n}[(\gamma\A^\top\S^\top\S\A)^{-1}]_{ii}$ as a shorthand, we have:
\begin{align*}
  \E[\tau_1\mid\Ec]
  &=
    \E[\tau_1\mid\Ec^*]
    +\frac{\delta}{1-\delta}\Big(
    \E[\tau_1\mid\Ec^*]
    -\E[\tau_1\mid\Ec^*,\neg\Ec]\Big)
    \geq \E[\tau_1\mid\Ec^*]-8\delta,
\end{align*}
where we used that $\delta\leq 1/2$ and, conditioned on $\Ec^*$, we have
$\tau_1\leq 4$. On the other hand,
\begin{align*}
  \E[\tau_2\mid\Ec]
  &=
    \E[\tau_2\mid\Ec^*]
    +\frac{\delta}{1-\delta}\Big(
    \E[\tau_2\mid\Ec^*]
    -\E[\tau_2\mid\Ec^*,\neg\Ec]\Big)
  \leq (1+2\delta)\,     \E[\tau_2\mid\Ec^*].
\end{align*}
Combining the two inequalities and using that
$\E[\tau_2\mid\Ec^*]\geq d/m$ and
$\delta\leq\frac1{4C\cdot16}(\frac dm)^2$, we get:
\begin{align*}
  \frac{\E\big[[(\gamma\A^\top\S^\top\S\A)^{-1}]_{11}\mid\Ec\big]}
  {\E\big[[(\gamma\A^\top\S^\top\S\A)^{-1}]_{22}\mid\Ec\big]}
  &=\frac{\E[\tau_1\mid\Ec]}{\E[\tau_2\mid\Ec]}
\geq \frac{(1+\frac d{4Cm})\E[\tau_2\mid\Ec^*] -
    8\delta}{(1+2\delta)\E[\tau_2\mid\Ec^*]}
  \\
  &\geq \frac{1+\frac{d}{4Cm} - 8\delta\frac md}{1+2\delta}\geq
    \frac{1+\frac d{8Cm}}{1+\frac d{32Cm}}\geq 1+\frac d{64Cm}.
\end{align*}
Thus, as discussed above, we conclude that for any scaling $\gamma>0$
and any event $\Ec$ with probability $\Pr(\Ec)\geq 1-\frac1{4C\cdot16}(\frac dm)^2$,
we have $\|\E[(\gamma\A^\top\S^\top\S\A)^{-1}\mid\Ec]-\I\|
=\Omega(\frac dm)$, which concludes the proof.

\subsection{Proof of Lemma \ref{l:inverse-moment}}
We conclude this section with a proof of the Binomial inverse moment
bound from Lemma \ref{l:inverse-moment}. While existing work has
focused on asymptotic expansions of inverse moments of the Binomial
\cite{znidaric2009asymptotic}, those precise 
characterizations either break down or appear to be impractical to work with
when the variable is significantly shifted, as in our case. Thus, we
use a different strategy: reducing the inverse moment bound to
showing an anti-concentration inequality for the Binomial
distribution. For this, we use the classical Paley-Zygmund inequality,
stated below.
\begin{lemma}\label{l:Paley-Zygmund}
  For any non-negative variable $\Z $ with finite variance and
  $\theta\in(0,1)$, we have:
  \begin{align*}
    \Pr\big(Z\geq \theta\,\E[Z]\big) \geq (1-\theta)^2\frac{\E[Z]^2}{\E[Z^2]}.
  \end{align*}
\end{lemma}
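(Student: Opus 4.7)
The plan is to prove the Paley--Zygmund inequality by the classical two-step argument: first split the expectation $\E[Z]$ according to whether $Z$ falls below or above the threshold $\theta\E[Z]$, and then apply the Cauchy--Schwarz inequality to the tail piece. This yields exactly the stated lower bound on $\Pr(Z\geq\theta\E[Z])$ in terms of the ratio $\E[Z]^2/\E[Z^2]$.

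Concretely, I would begin by writing
\[
\E[Z] \;=\; \E\!\big[Z\cdot\one_{Z<\theta\E[Z]}\big] \;+\; \E\!\big[Z\cdot\one_{Z\geq\theta\E[Z]}\big].
\]
Since $Z\geq 0$, the first term is at most $\theta\E[Z]\cdot\Pr(Z<\theta\E[Z])\leq \theta\E[Z]$. Rearranging yields
\[
(1-\theta)\,\E[Z] \;\leq\; \E\!\big[Z\cdot\one_{Z\geq\theta\E[Z]}\big].
\]
This is the ``reverse Markov'' step, which converts a statement about the mean into a statement about the contribution from the upper tail.

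Next, I would apply the Cauchy--Schwarz inequality to the right-hand side, obtaining
\[
\E\!\big[Z\cdot\one_{Z\geq\theta\E[Z]}\big] \;\leq\; \sqrt{\E[Z^2]}\cdot\sqrt{\Pr(Z\geq\theta\E[Z])}.
\]
Combining this with the previous inequality and squaring both sides gives
\[
(1-\theta)^2\,\E[Z]^2 \;\leq\; \E[Z^2]\cdot\Pr(Z\geq\theta\E[Z]),
\]
which, upon dividing by $\E[Z^2]$ (finite by hypothesis, and positive unless $Z\equiv 0$, in which case the claim is trivial with both sides equal to zero under the convention $0/0=0$), is exactly the desired inequality.

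There is no real obstacle here: the only subtlety is the degenerate case $\E[Z^2]=0$, which forces $Z=0$ almost surely and makes the inequality vacuous, and the need to note that $Z$ being non-negative is what allows the bound $\E[Z\cdot\one_{Z<\theta\E[Z]}]\leq \theta\E[Z]$. Both points can be handled in a single sentence. The entire argument is short and self-contained, and matches the standard textbook proof of the Paley--Zygmund inequality.
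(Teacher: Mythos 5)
Your proof is correct and is the standard textbook argument (reverse-Markov truncation at the threshold followed by Cauchy--Schwarz on the tail contribution), with the degenerate case $\E[Z^2]=0$ handled appropriately. The paper does not prove this lemma at all---it simply cites the original reference of Paley and Zygmund---so your self-contained derivation is exactly the classical proof the paper implicitly relies on.
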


  Let $x\sim\binom(b,1/2)$ for a positive integer $b$. It follows that:
  \begin{align*}
      \E\Big[\frac1{x + b/2} - \frac1{b}\Big]
    &=
      \sum_{i=0}^{b}\Pr(x=i)\Big(\frac1{i+b/2}-\frac1b\Big) 
=\frac1b\sum_{i=0}^{b}\Pr(x=i)\frac{b/2-i}{b/2+i}
    \\
    &=\frac1b\sum_{i=0}^{\lfloor b/2\rfloor}\Pr(x=i) (b/2-i)\Big(\frac1{b/2+i} - \frac1{3b/2-i}\Big),
  \end{align*}
  where the last equality is obtained by symmetrically pairing up the
  terms $i$ and $b-i$ in the first sum. Next, observe that for $0\leq i\leq b/2-\sqrt b/4$, we have:
\begin{align*}
  (b/2-i)\Big(\frac1{b/2+i} - \frac1{3b/2-i}\Big)
  &\geq \frac{\sqrt
    b}{4}\Big(\frac1{b-\sqrt b/4} - \frac1{b+\sqrt b/4}\Big)
=\frac{\sqrt b}4\cdot\frac{\sqrt b/2}{b^2-b/16}
  \geq \frac1{8b}.
\end{align*}

Putting this together, we conclude that:
\begin{align}
   \E\Big[\frac1{x + b/2}\Big]\geq \Big(1+\frac1{8b}\Pr\big\{x-b/2\leq
  -\sqrt b/4\big\}\Big)\cdot \frac1b.\label{eq:anti}
\end{align}
Thus, it suffices to show that, with constant probability, $x$ is
smaller than its mean, $b/2$, by at least $\sqrt b/4$. This follows
from the Paley-Zygmund inequality (Lemma \ref{l:Paley-Zygmund}) by
setting $\Z =(x-b/2)^2$. Using standard formulas for the second and
fourth centered moment of the Binomial distribution, we have
$\E[Z]=b/4$ and $\E[Z^2]=\frac b4(1+\frac{3b-6}4)\leq 3b^2/16$. Therefore,
setting $\theta=1/4$ in Lemma \ref{l:Paley-Zygmund}, we obtain:
\begin{align*}
  \Pr\big(x-b/2\,\leq\, -\sqrt b/4\big)
  &=\frac12\,\Pr\big(|x-b/2|\,\geq\, \sqrt b/4\big)
=\frac12\,\Pr\big(Z\geq\theta\,\E[Z]\big)
  \\
  &\geq \frac12\,\Big(1-\frac14\Big)^2\,\frac{b^2/16}{3b^2/16}=\frac3{32}.
\end{align*}
Combining this with \eqref{eq:anti}, we obtain the desired claim for $C=8\cdot32/3$.

\section{Exact bias-correction for orthogonally invariant embeddings}
\label{s:orthogonal}

In this section we prove that orthogonal invariance implies no
inversion bias. This claim has been mentioned in the main text, in
Section \ref{noib}. Here we give a formal statement.

\begin{proposition}[Orthogonal invariance implies no inversion bias]\label{ro}
Let $\S$ be a random and right-orthogonally invariant matrix; specifically an $m \times n$ matrix (with $m \leq n$) such that for any orthogonal $n \times n$ matrix $\O$, we have $\S \stackrel{d}{=} \S\O$. Assume that $(\A^\top\S^\top\S\A)^{-1}$ exists with probability one.  Then the inversion bias is exactly correctable, i.e., there exists a constant $c = c_{m,n,d}$ such that $\E\bhS^{-1} = c\cdot \bS^{-1}$; where $\bS = \A^\top \A$ and $\bhS = \A^\top \S^\top \S \A$. 
\end{proposition}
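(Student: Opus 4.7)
\textbf{Proof plan for Proposition \ref{ro}.}
The plan is to reduce to the case where $\A$ has orthonormal columns, and then exploit the orthogonal invariance of $\S$ to produce a rotational symmetry forcing the expected inverse to be a scalar multiple of the identity. First I would write the thin SVD $\A = \U \mG \V^\top$, where $\U \in \R^{n\times d}$ has orthonormal columns, $\mG \in \R^{d\times d}$ is diagonal and invertible (since $\A$ has full column rank), and $\V \in \R^{d\times d}$ is orthogonal. Then
\begin{align*}
(\A^\top \S^\top \S \A)^{-1} = \V \mG^{-1} (\U^\top \S^\top \S \U)^{-1} \mG^{-1} \V^\top,
\end{align*}
so it suffices to show that $\E[(\U^\top \S^\top \S \U)^{-1}] = c \cdot \I_d$ for some constant $c$ depending only on $m, n, d$; the desired conclusion then follows because $\V \mG^{-2} \V^\top = (\A^\top\A)^{-1}$.

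Next I would produce the rotational symmetry. Fix an arbitrary $d\times d$ orthogonal matrix $\R$ and complete $\U$ to an $n\times n$ orthogonal matrix by adjoining any orthonormal basis $\U_\perp$ of the orthogonal complement of the column span of $\U$. The matrix
\begin{align*}
\O := \U \R \U^\top + \U_\perp \U_\perp^\top
\end{align*}
is then $n\times n$ orthogonal, and a direct computation yields $\O \U = \U \R$. By right-orthogonal invariance, $\S \O \stackrel{d}{=} \S$, and therefore $\S \U \R = \S \O \U \stackrel{d}{=} \S \U$. Consequently,
\begin{align*}
\R^\top (\U^\top \S^\top \S \U)^{-1} \R = \bigl((\S\U\R)^\top (\S\U\R)\bigr)^{-1} \stackrel{d}{=} (\U^\top \S^\top \S \U)^{-1}.
\end{align*}
Taking expectations (which exist since $(\A^\top\S^\top\S\A)^{-1}$ exists almost surely, and the argument applies to the truncated expectation by dominated convergence if needed), I obtain $\R^\top \E[(\U^\top\S^\top\S\U)^{-1}] \R = \E[(\U^\top\S^\top\S\U)^{-1}]$ for every orthogonal $\R$.

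Finally, a standard symmetry argument (Schur's lemma, or just noting that a symmetric matrix commuting with every orthogonal rotation must be a multiple of $\I_d$) forces $\E[(\U^\top\S^\top\S\U)^{-1}] = c \cdot \I_d$ for some scalar $c = c_{m,n,d}$ depending only on the distribution of $\S$ and hence only on $m, n, d$. Substituting back through the SVD decomposition gives $\E[(\A^\top\S^\top\S\A)^{-1}] = c \cdot (\A^\top\A)^{-1}$, as required. I do not anticipate any real obstacle here; the only subtle point is verifying integrability so that the distributional equality yields an equality of expectations, which is immediate from the hypothesis that the inverse is defined almost surely (and one can always condition on a boundedness event and pass to the limit if strictly necessary, noting that the symmetry argument applies identically at every stage).
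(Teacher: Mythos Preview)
Your proposal is correct and follows essentially the same approach as the paper: reduce via the SVD to the case $\A=\U$ with orthonormal columns, then extend an arbitrary $d\times d$ rotation $\R$ to an $n\times n$ rotation fixing $\U_\perp$ (your explicit formula $\O=\U\R\U^\top+\U_\perp\U_\perp^\top$ is exactly the paper's ``algebraic'' construction), and conclude by the commutation-with-all-rotations argument. The only difference is that you flag the integrability issue, which the paper leaves implicit.
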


Examples of orthogonal ensembles can be constructed in the following way: 
\begin{enumerate}
\item Let $\S$ have i.i.d.\@ normal entries with variance $m^{-1}$. Due to the properties of the Wishart ensemble, the constant $c_{m,n,d}$ is $c_{m,n,d} = m/(m-d-1)$.
\item
Let $\S_u$ be a uniformly random $m \times n$ partial orthogonal matrix (with $m \leq n$) such that $\S_u \S_u^\top = \I_m$. Equivalently, these are the first few rows of a Haar matrix. Then define $\S = \sqrt{n/m} \cdot \S_u$,  scaled such that $\E \S^\top \S  = \I_m$. We will call this the Haar sketch.
\item The class of orthogonally invariant matrices has several closure properties. Specifically, it is closed with respect to left-multiplication by any matrices, right-multiplication by orthogonal matrices, and with respect to vector space operations (addition and multiplication by scalars).  Several examples can be obtained this way. For instance, matrices $\S$ of the form $\S = \M\Z$, where $\Z $ has i.i.d.\@ normal entries with variance $m^{-1}$, and $\M $ is an arbitrary  matrix fixed or random and independent of $\Z $ are orthogonally invariant.
\end{enumerate}

\begin{proof}
We start with a reduction to orthogonal matrices: Let $\A = \U\bLambda \V^\top$ be the SVD of $\A$. Here recall that $\A$ is an $n \times d$ matrix, with $n \geq d$ and with full column rank, and thus $\U$ is an $n \times d$ partial orthogonal matrix with $n \geq d$, $\bLambda$ is $d \times d$ diagonal, and $\V$ is $d \times d$ orthogonal. Our goal is to show that $\E\bhS^{-1} = c\cdot \bS^{-1}$, or equivalently that

$$\E(\V\bLambda \U^\top \S^\top \S \U\bLambda \V^\top)^{-1} 
= c\cdot (\V\bLambda \U^\top \U\bLambda \V^\top)^{-1}.$$
Then, by cancelling $\bLambda$ and $\V$ above (using that they are deterministic square invertible matrices), and using that $\U^\top \U= \I$, we see that the above inequality is equivalent to 

$$
\E(\U^\top \S^\top \S \U)^{-1} = c\cdot \I.
$$
Thus, the problem is reduced to studying orthogonal matrices $\A =
\U$, such that $\bS= \A^\top \A = \U^\top \U = \I$. 
We claim that the right-orthogonal invariance implies that $\S \U  \stackrel{d}{=} \S\U\O$ for any $d \times d$ orthogonal matrix $\O $. Here is a geometric argument. We have that $\S \U $ are the angles that the random orthogonal rows of $\S $ form with the fixed set of basis vectors formed by the columns of $\U $. Also, $\S \U \O$ corresponds to the same quantity, but with respect to the basis formed by $\U \O$. Since $\S $ is right-rotationally invariant, these angles have the same distribution. 

Another, more algebraic proof is as follows. Since $\S $ is right-rotationally invariant, for any orthogonal $n \times n$ matrix $\bR $, we have $\S  \stackrel{d}{=} \S\bR$. Thus, for any fixed matrix $\U $, we have $\S \U  \stackrel{d}{=} \S\bR\U$. Choose a rotation matrix $\bR $ such that   $\bR \U\U^\top = \U \O\U^\top$, while $\bR \U^\perp$ is arbitrary, where $\U ^\perp$ is an orthogonal complement of $\U $. Then, multiplying the above with $\U^\top\U$ we have
$$\S \U \U^\top\U  \stackrel{d}{=} \S\bR\U \U^\top \U = \S\U \O\U^\top \U = \S\U\O.$$
We get that $\S \U  \stackrel{d}{=} \S\U\O$. 
Next, $\S \U  \stackrel{d}{=} \S\U\O$ implies that:
\begin{align*}
\U^\top \S^\top \S \U  &\stackrel{d}{=} \O^\top \U ^\top \S^\top \S \U \O\\
(\U^\top \S^\top \S  \U)^{-1} &\stackrel{d}{=} \O^\top  (\U^\top \S^\top \S  \U)^{-1} \O\\
\E (\U^\top \S^\top \S  \U)^{-1} &= \O^\top  \E(\U^\top \S^\top \S  \U)^{-1} \O.
\end{align*}
Thus, since $\J := \E(\U^\top \S^\top \S  \U)^{-1}$ is preserved under
conjugation by any orthogonal matrix, $\J$ must be a multiple of the
identity matrix, so $\J = c \I_d$, for some $c = c_{m,n,d}$. This
finishes the proof. 
\end{proof}

\end{document}